\pgfplotsset{compat=1.16}
\numberwithin{equation}{section}
\theoremstyle{definition}
\newtheorem{definition}{Definition}[section]
\theoremstyle{plain}
\newtheorem{theorem}{Theorem}[section]
\newtheorem{proposition}[theorem]{Proposition}
\newtheorem{lemma}[theorem]{Lemma}
\newtheorem{corollary}[theorem]{Corollary}
\newtheorem{conjecture}[theorem]{Conjecture}
\providecommand{\customgenericname}{}
\newcommand{\newcustomtheorem}[2]{%
  \newenvironment{#1}[1]
  {%
   \renewcommand\customgenericname{#2}%
   \renewcommand\theinnercustomgeneric{##1}%
   \innercustomgeneric
  }
  {\endinnercustomgeneric}
}
\theoremstyle{remark}
\newtheorem*{remark}{Remark}
\newcommand{\R}{\mathbb{R}}
\newcommand{\N}{\mathbb{N}}
\DeclareFontFamily{U}{mathx}{}
\DeclareFontShape{U}{mathx}{m}{n}{<-> mathx10}{}
\DeclareSymbolFont{mathx}{U}{mathx}{m}{n}
\DeclareMathAccent{\widehat}{0}{mathx}{"70}
\DeclareMathAccent{\widecheck}{0}{mathx}{"71}
\title{BKL bounces outside homogeneity: \\ Gowdy symmetric spacetimes}
\author{Warren Li}
\affil{\small Princeton University, Department of Mathematics, Fine Hall, Washington Road, Princeton, NJ 08544, USA}
\begin{document}

\maketitle

\begin{abstract}
    We study the phenomenon of bounces, as predicted by Belinski Khalatnikov and Lifshitz (BKL), as an instability mechanism within the setting of the Einstein vacuum equations in Gowdy symmetry. In particular, for a wide class of inhomogeneous initial data we prove that the dynamics near the $t = 0$ singularity are well--described by ODEs reminiscent of Kasner bounces.

    Unlike previous works regarding bounces, our spacetimes are not necessarily spatially homogeneous, and a crucial step is proving so-called asymptotically velocity term dominated (AVTD) behaviour, even in the presence of nonlinear BKL bounces and other phenomena such as spikes. (A similar phenomenon involving bounces and AVTD behaviour, though not spikes, can also be seen in our companion paper \cite{MeSurfaceSymPaper}, albeit in the context of the Einstein--Maxwell--scalar field model in surface symmetry.) 

    One particular application is the study of (past) instability of certain polarized Gowdy spacetimes, including some Kasner spacetimes. Perturbations of such spacetimes are such that the singularity persists, but the intermediate dynamics -- between initial data and the singularity -- feature BKL bounces. 
\end{abstract}

\setcounter{tocdepth}{2}
\tableofcontents
\section{Introduction} \label{sec:intro}

\subsection{The Einstein equations in Gowdy symmetry} \label{sub:intro_einstein}

In this article, we study the structure and (in)stability properties of \emph{spacelike singularities} arising in solutions of the $1+3$-dimensional Einstein vacuum equations. The Einstein vacuum equations are given by the following geometric equation for the Ricci curvature of a Lorentzian metric $\mathbf{g}$ on a manifold $\mathcal{M}^{1+3}$:
\begin{equation} 
    \mathbf{Ric}_{\mu\nu}[\mathbf{g}] = 0, \label{eq:einstein}.
\end{equation}

More specifically, we study phenomena related to spacelike singularities in the context of $\mathbb{T}^3$ Gowdy symmetric solutions to the Einstein equations. A simple characterization of $\mathbb{T}^3$ Gowdy symmetry\footnote{
    A more geometric characterization is a spacetime with $\mathbb{T}^3$ spatial topology and containing two spacetlike Killing vector fields, here given by $\frac{\partial}{\partial \sigma}$ and $\frac{\partial}{\partial \delta}$, and whose associated \emph{twist constants} vanish. See the original paper of Gowdy \cite{Gowdy}, or \cite{ChruscielU1U1, RingstromGowdyReview}.
} is a spacetime $(\mathcal{M}, \mathbf{g})$ such that there exists a global coordinate system $(t, \theta, \sigma, \delta) \in (0, + \infty) \times (\mathbb{S}^1)^3$ such that the spacetime metric $\mathbf{g}$ may be written as:
\begin{equation} \label{eq:gowdy}
    \mathbf{g} = - t^{-\frac{1}{2}}e^{\frac{\lambda}{2}} ( - dt^2 + d \theta^2) + t [ e^P (d \sigma + Q d \delta)^2 + e^{-P}d \delta^2],
\end{equation}
where the functions $P$, $Q$ and $\lambda$ depend only on the variables $t \in (0, + \infty)$ and $\theta \in \mathbb{S}^1$. In the sequel, we view $\mathbb{S}^1$ as the closed interval $[- \pi, \pi]$ with endpoints identified.

For a Gowdy symmetric spacetime with metric $\mathbf{g}$ in the gauge of \eqref{eq:gowdy}, the Einstein vacuum equations \eqref{eq:einstein} may be written as the following system of PDEs for the variables $(P, Q)$:
\begin{gather}
    \label{eq:P_evol}
    (t \partial_t)^2 P - (t \partial_{\theta})^2 P = e^{2P} (t \partial_t Q)^2 - e^{2P} (t \partial_{\theta} Q)^2, \\[0.3em]
    \label{eq:Q_evol}
    (t \partial_t)^2 Q - (t \partial_{\theta})^2 Q = - 2 (t \partial_t P) (t \partial_t Q) + 2 (t \partial_{\theta} P) (t \partial_{\theta} Q)
\end{gather}
together with two equations for $\lambda$.
\begin{gather}
    \label{eq:L_evol}
    t \partial_t \lambda = (t \partial_t P)^2 + (t \partial_{\theta} P)^2 + e^{2P} (t \partial_t Q)^2 + e^{2P} (t \partial_{\theta} Q)^2, \\[0.3em]
    \label{eq:L_constraint}
    \partial_{\theta} \lambda = 2 ( t \partial_t P \, \partial_{\theta} P + e^{2P} t \partial_t Q \, \partial_{\theta} Q ).
\end{gather}

Note that the former equations \eqref{eq:P_evol} and \eqref{eq:Q_evol} decouple from the latter two equations for $\lambda$, and in studying the evolutionary problem one usually solves for $P$ and $Q$ using \eqref{eq:P_evol} and \eqref{eq:Q_evol}, then afterwards one integrates \eqref{eq:L_evol} to solve for $\lambda$. The final equation \eqref{eq:L_constraint} is a constraint that is propagated by the remaining equations, as long as it is satisfied at some initial time.

A further remarkable feature of $\mathbb{T}^3$ Gowdy symmetric spacetimes is that the equations \eqref{eq:P_evol} and \eqref{eq:Q_evol} have a hidden geometric structure. It turns out these two equations are exactly a system of \emph{wave maps} (see \cite[Chapter 6]{TaoNonlinearDispersive} for a general definition of wave maps) between a domain manifold $\R_{> 0} \times \mathbb{T}^2$ with metric
\[
    g_{0} = - dt^2 + d \theta^2 + t^2 d \chi^2, 
\]
and a target manifold $\R^2$ with metric
\[
    g_R = dP^2 + e^{2P} dQ^2.
\]
More precisely, the equations \eqref{eq:P_evol}--\eqref{eq:Q_evol} describe wave maps from $(\R_{>0} \times \mathbb{T}^2, g_0)$ to $(\R^2, g_R)$ which are independent of $\chi$. Such wave maps conserve the energy functional:
\begin{equation} \label{eq:wavemap_energy}
    \mathcal{E}(t) \coloneqq \frac{1}{2} \int_{\mathbb{S}^1} \left[ (\partial_t P)^2(t, \theta) + (\partial_{\theta} P)^2(t, \theta) + e^{2P} (\partial_t Q)^2 (t, \theta) + e^{2P} (\partial_{\theta} Q)^2(t, \theta)\right] \, d \theta.
\end{equation}

Note that $(\R^2, g_R)$ is isometric to hyperbolic space. Indeed, the change of variables $x = Q$, $y = e^{-P}$ puts $g_R$ into the familiar upper half plane model $g_R = y^{-2}(dx^2 + dy^2)$. See \cite{RingstromGowdySCC1, RingstromGowdySCC2} and references therein for further details regarding the wave map structure.

A consequence of conservation of energy \eqref{eq:wavemap_energy} is that solutions to the Gowdy symmetric system \eqref{eq:P_evol}--\eqref{eq:Q_evol} arising from regular initial data will persist and remain regular in the entire interval $t \in (0, + \infty)$. As a result, $\mathbb{T}^3$ Gowdy symmetric spacetimes are a popular model for studying the Strong Cosmic Censorship conjecture, with the strategy often to show geodesic completeness in the $t \to + \infty$ direction\footnote{Note that geodesic completeness is not necessary to show future inextendibility, see \cite{DafRendallLetter}.}  and singularity formation in the $t \to 0$ direction.

The first major breakthrough regarding Strong Cosmic Censorship came in studying the subclass of polarized Gowdy spacetimes, defined as follows:
\begin{definition}
    A \emph{polarized Gowdy spacetime} is a $\mathbb{T}^3$ Gowdy symmetric solution to the Einstein vacuum equation with $Q \equiv 0$ everywhere. In particular $P(t, \theta)$ solves the linear hyperbolic PDE:
    \begin{equation} \label{eq:P_evol_polarized}
        (t \partial_t)^2 P - (t \partial_{\theta})^2 P = 0.
    \end{equation}
\end{definition}

By studying \eqref{eq:P_evol_polarized}, in \cite{SCC_PolarizedGowdy} the authors prove Strong Cosmic Censorship in the polarized Gowdy class by showing that solutions are geodesically complete in the $t \to \infty$ solution and that \emph{generically} the spacetime metric $\mathbf{g}$ exhibits curvature blow-up in the $t \to 0$ direction. The essential step, particularly in the context of $t \to 0$, is understanding asymptotics for $P$, as we explain in Section~\ref{sub:intro_sing}.

In a series of influential works, Ringstr\"om \cite{RingstromGowdySCC1, RingstromGowdySCC2} then extended the proof of Strong Cosmic Censorship to the full class of Gowdy symmetric spacetimes. Ringstr\"om's strategy was again to prove geodesic completeness in the $t \to \infty$ direction, and that generically one can show asymptotics for $P$ as $t \to 0$ that imply curvature blow-up to the past. As we see later, the asymptotics are more complicated than in the polarized case, and Ringstr\"om importantly allows for a finite number of mild irregularities in the asymptotic profile for $P$ known as ``spikes''. Such spike-containing solutions were constructed earlier in \cite{RendallWeaverSpikes}.

\subsection{Asymptotics towards the singularity} \label{sub:intro_sing}

As mentioned above, solutions to the Gowdy symmetric system \eqref{eq:P_evol}--\eqref{eq:Q_evol} arising from regular initial data exist globally in time $t \in (0, + \infty)$. Armed with this global existence result, a reasonable goal is to understand in furher detail the behaviour of solutions as one approaches the \emph{spacelike singularity} at $t = 0$. 

For instance, one would like to understand the asymptotic behaviour of the functions $P(t, \theta)$ and $Q(t, \theta)$ in the limit $t \to 0$. By considering a (Fuchsian) series expansion at $t = 0$, i.e.~an asymptotic expansion whose terms are powers of $t$ with coefficients depending on $\theta$, one could conjecture that, e.g.~from \cite{IsenbergMoncriefGowdy}:
\begin{gather}
    P(t, \theta) = - V(\theta) \log t + \Phi(\theta) + o(1), \label{eq:P_asymp} \\[0.4em]
    Q(t, \theta) = q(\theta) + t^{2V(\theta)}[ \Psi(\theta) + o(1) ],  \label{eq:Q_asymp}
\end{gather}
where $V, \Phi, q, \Psi: \mathbb{S}^1 \to \R$ are regular functions of $\theta$. Upon substituting into the equations \eqref{eq:P_evol}--\eqref{eq:Q_evol}, one sees that these expansions are self-consistent, at least under the assumption that for each $\theta \in \mathbb{S}^1$, one of the following holds:
\begin{enumerate}[(i)]
    \setlength\itemsep{-0.1em}
    \item
        The coefficient $V(\theta)$ satisfies $0 < V(\theta) < 1$, or
    \item
        $V(\theta) > 0$, but $\partial_{\theta} q(\theta) = 0$, or
    \item
        $V(\theta) < 1$, but $\Psi(\theta) = 0$.
\end{enumerate}
\underline{Note}: in the polarized Gowdy case with $Q \equiv 0$, any value of $V(\theta)$ is permitted.

Deriving asymptotics of the form \eqref{eq:P_asymp}--\eqref{eq:Q_asymp} was a crucial step in Ringstr\"om's proof of the Strong Cosmic Censorship for Gowdy symmetric spacetimes \cite{SCC_PolarizedGowdy, RingstromGowdySCC1, RingstromGowdySCC2}. In particular, for our applications we shall make use of the following, see \cite[Step 1]{SCC_PolarizedGowdy} in the polarized case and \cite[Proposition 1.5]{RingstromGowdySCC1} for the unpolarized case but with with the assumption $0 < V(\theta) < 1$. 

\setcounter{theorem}{-1}
\begin{theorem} \label{thm:asymp_smooth}
    Let $(P, Q)$ be a solution to the Gowdy symmetric wave map system \eqref{eq:P_evol}--\eqref{eq:Q_evol}, arising from regular initial data $(P, Q, t \partial_t P, t \partial_t Q)|_{t = t_0} \in (C^{k+1})^2 \times (C^k)^2$ for some $k \geq 1$. Then for all $\theta_0 \in \mathbb{S}^1$, there exists some $V(\theta_0) \in \R$ such that $- t \partial_t P (t, \theta_0) \to V(\theta_0)$ as $t \to 0$, in other words $- t \partial_t P(t, \theta) \to V(\theta)$ \emph{pointwise}.

    Suppose moreover that either $Q \equiv 0$ i.e.~the solution lies in the polarized Gowdy subclass, or we assume a priori that $0 < V(\theta) < 1$ for all $\theta \in \mathbb{S}^1$. Then $V(\theta)$ is $C^{k-1}$ and moreover there exist $C^{k-1}$ functions $\Phi(\theta)$, $q(\theta)$ and $r(\theta)$ such that the following limits hold uniformly in the $C^{k-1}$ norm as $t \to 0$:
    \begin{gather}
        - t \partial_t P(t, \theta) \to V(\theta), \qquad P(t, \theta) + V(\theta) \log t \to \Phi(\theta), \label{eq:P_asymp_2} \\[0.4em]
        e^{2P} t \partial_t Q (t, \theta) \to r(\theta), \qquad e^{2P} (Q(t, \theta) - q(\theta)) \to \frac{r(\theta)}{2 V(\theta)}. \label{eq:Q_asymp_2}
    \end{gather}
    Note that these imply \eqref{eq:P_asymp} and \eqref{eq:Q_asymp}, given $\Psi(\theta) = \frac{r(\theta)}{2V(\theta)} e^{-2\Phi(\theta)}$.

    Under the same assumptions, if $\lambda(t, \theta)$ solves the remaining equations \eqref{eq:L_evol}--\eqref{eq:L_constraint}, then 
    \begin{equation}
        \lambda(t, \theta) = V^2 (\theta) \log t + L(\theta) + o(1), \label{eq:L_asymp_2}
    \end{equation}
    where $L(\theta)$ obeys the asymptotic momentum constraint $dL = - 2 V \cdot d \Phi$.
\end{theorem}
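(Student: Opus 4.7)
The plan is to work in logarithmic time $\tau = -\log t$, in which the wave-map system \eqref{eq:P_evol}--\eqref{eq:Q_evol} becomes
\begin{gather*}
\partial_\tau^2 P - e^{-2\tau} \partial_\theta^2 P = e^{2P}(\partial_\tau Q)^2 - e^{2P - 2\tau}(\partial_\theta Q)^2, \\
\partial_\tau^2 Q - e^{-2\tau} \partial_\theta^2 Q = -2\,\partial_\tau P\,\partial_\tau Q + 2 e^{-2\tau} \partial_\theta P\,\partial_\theta Q,
\end{gather*}
so that the $\theta$-derivative contributions are formally suppressed as $\tau \to \infty$. The truncation to the velocity-term-dominated (VTD) ODE system at each $\theta$ admits the conserved quantities $\mathcal{H} := (\partial_\tau P)^2 + e^{2P}(\partial_\tau Q)^2$ and $\mathcal{R} := e^{2P}\partial_\tau Q$, and its Kasner-type solutions yield $\partial_\tau P \to V(\theta)$, $\mathcal{R} \to r(\theta)$ as $\tau \to \infty$, matching the target asymptotics.

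For the first assertion -- pointwise convergence of $-t \partial_t P(t,\theta_0) = \partial_\tau P(\tau, \theta_0)$ -- I would invoke the localized-energy monotonicity arguments of \cite[Step 1]{SCC_PolarizedGowdy} in the polarized case and \cite[Proposition 1.5]{RingstromGowdySCC1} in the unpolarized setting with $0 < V < 1$: integrating the energy density associated to \eqref{eq:wavemap_energy} over a past-directed coordinate cone with tip at $(\tau, \theta_0)$, one obtains a monotone boundary flux which, combined with a direct computation showing that $\partial_\tau \mathcal{R} = e^{-2\tau} \partial_\theta(e^{2P}\partial_\theta Q)$ and a similar identity for $\partial_\tau \mathcal{H}$ produce only $O(e^{-2\tau})$ spatial contributions, forces $\partial_\tau P$ to settle to a pointwise limit.

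The main work is upgrading to uniform $C^{k-1}$ convergence of $V, \Phi, q, r$. Under either $Q \equiv 0$ or the hypothesis $0 < V(\theta) < 1$, I would run a bootstrap in $\tau$. The assumption $V < 1$ is exactly what makes $e^{2P - 2\tau} \sim e^{-2(1-V)\tau}$, so the spatial-gradient terms on the right-hand sides of the $(P,Q)$ equations decay exponentially; in particular $\partial_\tau^2 P$ and $\partial_\tau \mathcal{R}$ are then integrable in $\tau$ \emph{uniformly} in $\theta$, yielding uniform convergence of $\partial_\tau P$ and $\mathcal{R}$. The secondary limits $P + V\log t \to \Phi$ and $e^{2P}(Q-q) \to r/(2V)$ follow by integrating from $\tau$ to $+\infty$ and using the ODE $\partial_\tau[e^{2P}(Q - q)] = \mathcal{R} + 2\partial_\tau P \cdot e^{2P}(Q - q)$ as a Gronwall-type identity. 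For higher regularity, I would commute the system with $\partial_\theta^j$ for $j \le k-1$: the commutator contributions either are lower order or carry an extra factor of $e^{-2\tau}$, so an induction on $j$ preserves the integrability bounds and produces $C^{k-1}$ limits.

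Finally, for $\lambda$, equation \eqref{eq:L_evol} reads $-\partial_\tau \lambda = \mathcal{H} + e^{-2\tau}[(\partial_\theta P)^2 + e^{2P}(\partial_\theta Q)^2]$; by the preceding step the right-hand side converges uniformly to $V^2(\theta)$ with $\tau$-integrable remainder, so $\lambda + V^2 \tau$ has a uniform limit $L(\theta)$, which in $t$-coordinates is precisely \eqref{eq:L_asymp_2}. The constraint $dL = -2V\,d\Phi$ is obtained by passing to the $\tau \to \infty$ limit in \eqref{eq:L_constraint}, using the just-established $C^{k-1}$ convergence of $P, \partial_\theta P, \mathcal{R}, \partial_\theta Q$. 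The chief technical obstacle is the regularity bootstrap: the decay rate $e^{-2(1-V(\theta))\tau}$ degenerates as $V(\theta) \to 1$, so the pointwise hypothesis $0 < V < 1$ must be upgraded to a uniform gap via continuity of $V$ and compactness of $\mathbb{S}^1$ before the Gronwall-type estimates can be closed uniformly in $\theta$.
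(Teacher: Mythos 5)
The statement you are trying to prove is not proved in this paper: Theorem~\ref{thm:asymp_smooth} is numbered ``Theorem~0'' and is quoted from the prior literature. The paper explicitly attributes it to Step~1 of \cite{SCC_PolarizedGowdy} in the polarized case and Proposition~1.5 of \cite{RingstromGowdySCC1} in the unpolarized low-velocity case, and uses it as a black box. So there is no internal proof against which to compare, but one can still assess whether your sketch would go through, and it broadly tracks the strategy of those references: pass to $\tau = -\log t$, identify the velocity-term-dominated ODE, observe the spatial contributions are weighted by $e^{-2\tau}$, show they decay exponentially under the low-velocity hypothesis, and commute to upgrade regularity.

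There is, however, one genuine gap, and it is precisely the point you flag at the end but do not actually resolve. You want to turn the \emph{pointwise} hypothesis $0<V(\theta)<1$ into a \emph{uniform} bound $\sup_\theta V(\theta)\le 1-\epsilon$ in order for $e^{2P-2\tau}\sim e^{-2(1-V)\tau}$ to decay at a rate bounded away from zero, and you propose to do this ``via continuity of $V$ and compactness of $\mathbb{S}^1$.'' But continuity of $V$ is part of what is being proved; at the stage where you would need the uniform gap, $V$ is only known as a pointwise limit and could a priori have $\sup V = 1$ with no continuity. This is circular as written. The way Ringstr\"om's argument avoids this (and the way the paper itself describes it just after the theorem) is by first establishing a \emph{local-in-$\theta$} result: from $0<V(\theta_0)<1$ alone, using finite propagation speed and the energy monotonicity within a causal neighbourhood of $(0,\theta_0)$, one obtains a neighbourhood $I\ni\theta_0$ on which the bootstrap can be closed with a rate governed by $V(\theta_0)$, and then $V|_I$ is shown to be $C^{k-1}$ \emph{as a consequence}. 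Under the hypothesis of the theorem every $\theta_0$ admits such a neighbourhood, and only then does compactness of $\mathbb{S}^1$ enter, to patch finitely many of them together. Your sketch should be rearranged to prove the local statement first; otherwise the uniform-in-$\theta$ Gr\"onwall closure, and hence the $C^{k-1}$ convergence of $V,\Phi,q,r$, does not follow. The remaining pieces (the conservation of $\mathcal{H}$ and $\mathcal{R}$ by the truncated system, the identity $\partial_\tau\mathcal{R}=e^{-2\tau}\partial_\theta(e^{2P}\partial_\theta Q)$, the integration of \eqref{eq:L_evol} for $\lambda$, and passing to the limit in \eqref{eq:L_constraint} for $dL=-2V\,d\Phi$) are all correct.
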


In a context where \eqref{eq:P_asymp_2}--\eqref{eq:L_asymp_2} hold, a straightforward change of coordinates from the gauge \eqref{eq:gowdy} implies that the metric behaves in a \emph{Kasner-like fashion} near the $t = 0$ singularity -- see Section~\ref{sub:intro_bkl} for a discussion of Kasner-like singularities -- and moreover one can describe the \emph{Kasner exponents} by
\begin{equation} \label{eq:kasnerlike_intro}
    p_1(\theta) = \frac{V^2(\theta) - 1}{V^2(\theta) + 3}, \quad p_2(\theta) = \frac{2 - 2 V(\theta)}{V^2(\theta) + 3}, \quad  p_3(\theta) = \frac{2 + 2 V(\theta)}{V^2(\theta) + 3}.
\end{equation}
These exponents depend on $\theta$, but always satisfy the Kasner relations $p_1 + p_2 + p_3 = p_1^2 + p_2^2 + p_3^2 = 1$.

We remark that in \cite{RingstromGowdySCC1, RingstromGowdySCC2}, Ringstr\"om proves much stronger statements than Theorem~\ref{thm:asymp_smooth}. For instance, he shows that for \emph{any} $\theta_0 \in \mathbb{S}^1$ with $0< V(\theta_0) < 1$, then there exists a neighborhood $I$ of $\theta_0$ in $\mathbb{S}^1$ such that $V(\theta)$ is $C^{k-1}$ in $I \subset \mathbb{S}^1$ and moreover the limits \eqref{eq:P_asymp_2}--\eqref{eq:L_asymp_2} hold when restricted to $\theta \in I$. Furthermore, it is proven that for \emph{generic}\footnote{Here generic means open and dense in the $C^{\infty}$-topology on initial data sets $(P, Q, t \partial_t P, t \partial_Q)|_{t = t_0}$.} smooth initial data, $0 < V(\theta) < 1$ for all $\theta \in \mathbb{S}^1 \setminus S$, where $S$ is a finite set of points known as ``spikes''. See the review article \cite{RingstromGowdyReview} for details.

However, understanding Strong Cosmic Censorship in full generality and the issue of spikes is not the main focus of the present article. Our starting point is instead the following two interrelated questions.

\vspace{3pt}
\begin{quote}
    Can one characterize a wide class of initial Cauchy data (for Gowdy spacetimes) at $t = t_0 > 0$, including data which is not at first glance completely compatible with the asymptotics of Theorem~\ref{thm:asymp_smooth} (e.g.~data such that $0 < - t \partial_t P(t_0, \theta) < 1$ does \underline{not} hold everywhere but $Q(t_0, \theta) \not\equiv 0$), such that the corresponding spacetime reaches a $t = 0$ singularity for which one has the asymptotics \eqref{eq:P_asymp}--\eqref{eq:Q_asymp} with quantitative estimates depending on initial data\footnote{This differs from the approach in \cite{RingstromGowdySCC1, RingstromGowdySCC2}, where the strategy is more akin to showing that the solutions not achieving the appropriate asymptotics are \emph{non-generic} in some sense; in particular there is little use of the relationship between the asymptotic information and initial data at some $t_0 > 0$.}?

    Moreover, for all such data can one fully understand the \emph{intermediate dynamics} of the spacetime between the initial data at $t = t_0$ and the asymptotics at $t = 0$?
\end{quote}
\vspace{3pt}

These questions are interesting since one expects that at $t = 0$ itself, the asymptotic quantity $V(\theta) = \lim_{t \to 0}(-t\partial_t P(t, \theta))$ must generically obey $0 < V(\theta) < 1$, thus there must be some nonlinear instability / transition between $t = t_0$ and $t = 0$. Later, we identify this nonlinear instability mechanism as a ``bounce'' akin to those identified by Belinski, Khalatnikov and Lifshitz in \cite{bkl71}, see Section~\ref{sub:intro_bkl}.

In this article, we prove both the existence of such a class of initial data (Theorem~\ref{thm:stab_rough}) together with a full quantitative understanding of the intermediate dynamics (Theorem~\ref{thm:bounce_rough}), thus answering the above questions in the affirmative. Moreover this class is open in the $C^{\infty}$ topology on initial data, thus our main results will represent a qualitative stability result in the sense that in the regime considered, certain aspects such as curvature blow-up, are stable to perturbation. 

In answering these questions we also uncover an interesting corollary of Theorem~\ref{thm:bounce_rough}, which we interpret as an \emph{instability result} (see Corollary~\ref{cor:bounce} for a precise statement): consider a \emph{polarized} Gowdy solution given by $(P, Q) = (P, 0)$ arising from suitably regular initial data. From Theorem~\ref{thm:asymp_smooth}, the quantity $V_0(\theta) = \lim_{t \to 0} (- t \partial_t P(t, \theta))$ is allowed to take any value. Now consider a perturbation of $(P, 0)$ to $(\tilde{P}, \tilde{Q})$ with $\tilde{Q} \not \equiv 0$. The instability result then says under a ``low-velocity'' assumption $0 < V(\theta) < 2$, the new $\tilde{V}(\theta)$ associated to $\tilde{P}(t, \theta)$ via \eqref{eq:P_asymp_2} will always satisfy $0 < \tilde{V} \leq 1$. In fact, we show that modulo spikes, $\tilde{V}(\theta) \approx \min \{ V(\theta), 2 - V(\theta) \}$, quantifying the ``bounce'' instability exactly.

Note that neither the unperturbed solution $(P, 0)$, nor the perturbation, are required to be spatially homogeneous, and this result together with our companion article \cite{MeSurfaceSymPaper} can be considered the first rigorous evidence of BKL-type bounces \emph{outside of homogeneity}. We refer the reader to \cite[Section 1.6]{MeSurfaceSymPaper} for a detailed comparison between the two results. (BKL bounces of this sort are much better understood in the spatially homogeneous setting, where the dynamics reduce to a system of finite dimensional ODEs \cite{Weaver_bianchi, RingstromBianchi, LiebscherRendallTchapnda, BeguinDutilleul}.)

\subsection{Our main theorems in rough form} \label{sub:intro_thm}

Our two main theorems concern solutions $(P, Q)$ to the Gowdy symmetric system \eqref{eq:P_evol}--\eqref{eq:Q_evol}; we no longer consider the remaining equations \eqref{eq:L_evol}--\eqref{eq:L_constraint}. Local existence for this system yield that for initial data $P_D, Q_D, \dot{P}_D, \dot{Q}_D: \mathbb{S}^1 \to \R$ in a Banach space $X$, e.g.~$X = (C^{k+1})^2 \times (C^k)^2$ or $X = (H^{s+1})^2 \times (H^s)^2$ for $k \geq 1$ or $s \geq \frac{3}{2}$, and any $t_0 > 0$ then there exists an interval $I \subset (0, + \infty)$ containing $t_0$ for which $(P, Q)$ solves \eqref{eq:P_evol}--\eqref{eq:Q_evol}, and
\begin{gather*}
    (P, Q, t \partial_t P, t \partial_t Q) \in C(I, X), \\[0.4em]
    (P, Q, t \partial_t P, t \partial_t Q)|_{t = t_0} = (P_D, Q_D, \dot{P}_D, \dot{Q}_D).
\end{gather*}
For $X$ as above, conservation of energy \eqref{eq:wavemap_energy} implies global existence, i.e.~one may take $I = (0, + \infty)$

We now introduce our two main theorems. The first, Theorem~\ref{thm:stab_rough} is a \emph{stability theorem}, and characterizes a subset in the moduli space of initial data such that for $(P_D, Q_D, \dot{P}_D, \dot{Q}_D)$ and $t_0 > 0$ in this subset, certain quantities including $-t \partial_t P$, $e^P t \partial_{\theta} Q$ and $e^P t \partial_t Q$ remain bounded for $t \in (0,t_0)$. The second, Theorem~\ref{thm:bounce_rough}, referred to as the \emph{bounce theorem}, characterizes the (nonlinear) dynamics arising from initial data as above, including a relationship between the initial data and the eventual asymptotics of $(P, Q)$ as described in \eqref{eq:P_asymp}--\eqref{eq:Q_asymp}.

To precisely define this subset of initial data, we introduce two real-valued parameters $0 < \upgamma < 1$ and $\upzeta > 0$. Then the initial data is chosen to satisfy three conditions:
\begin{itemize}
    \item
        (\emph{Weak subcriticality}) The functions $P_D, Q_D, \dot{P}_D, \dot{Q}_D$ satisfy, for all $\theta \in \mathbb{S}^1$:
        \begin{equation} \label{eq:weak_subcriticality}
            - \dot{P}_D > \upgamma, \quad (1 + \dot{P}_D)^2 + (e^{P_D} t \partial_{\theta} Q_D)^2 < (1 - \upgamma)^2.
        \end{equation}
    \item
        (\emph{Energy boundedness}) For some $N = N(\upgamma) \in \N$, an $N$th order $L^2$ norm of $(P_D, Q_D, \dot{P}_D, \dot{Q}_D)$ is bounded by $\upzeta$, see already \eqref{eq:data_energy_boundedness_P} and \eqref{eq:data_energy_boundedness_Q}.
    \item
        (\emph{Closeness to singularity}) The time $t_0$ at which initial data is prescribed is chosen to satisfy $0 < t_0 \leq t_*$ for some $t_*$ depending on $\upgamma$ and $\upzeta$.
\end{itemize}
Note in the more precise Theorem~\ref{thm:global} and Theorem~\ref{thm:bounce} we introduce additional parameters $\upgamma'$ and $\upgamma''$ which $N$ and $t_*$ could depend on. See more details in Section~\ref{sub:thm_data}.
We remark also that taking the union of all such initial data as $0 < \upgamma < 1$ and $\upzeta > 0$ vary, one characterises the class of initial data to which our results apply as a subset of the ``moduli space of initial data'' for the Gowdy system \eqref{eq:P_evol}--\eqref{eq:Q_evol} which is moreover open in the $C^{\infty}$ topology.

We defer further discussion of these conditions to after the statements below. Our first theorem, the stability theorem, is as follows:

\begin{theorem}[Stability, rough version] \label{thm:stab_rough}
    For $0 < \upgamma < 1$ and $\upzeta > 0$, let $t_0 > 0$ and $(P_D, Q_D, \dot{P}_D, \dot{Q}_D)$ be initial data obeying the three conditions above. Then the corresponding solution $(P, Q)$ to the Gowdy symmetric system \eqref{eq:P_evol}--\eqref{eq:Q_evol} is such that for all $t \in (0, t_0)$, 
    \[
        - t \partial_t P(t, \theta) \geq \upgamma, \quad (1 + t \partial_t P(t, \theta))^2 + (e^P t \partial_{\theta} Q(t, \theta))^2 \geq (1 - \upgamma)^2.
    \]
    Furthermore, there exist functions $V, q: \mathbb{S}^1 \to \mathbb{R}$, with $V$ bounded and $q$ continuous, such that $- t \partial_t P(t, \theta) \to V(\theta)$ pointwise and $Q(t, \theta) \to q(\theta)$ uniformly as $t \to 0$.
    If one further assumes $\upgamma > \frac{1}{3}$, then we have that $q$ is $C^1$ and the convergence $Q(t, \theta) \to q(\theta)$ holds in the $C^1$ topology.
\end{theorem}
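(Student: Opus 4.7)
I would pass to the logarithmic time coordinate $\tau = -\log t$ (so the singularity lies at $\tau = +\infty$) and work throughout with the normalized variables $A = -t\partial_t P$, $B = t\partial_\theta P$, $C = e^P\, t\partial_t Q$, $D = e^P\, t\partial_\theta Q$. A direct computation from \eqref{eq:P_evol}--\eqref{eq:Q_evol} gives the transport system
\[
\partial_\tau A = C^2 - D^2 + t\partial_\theta B, \quad \partial_\tau B = -B + t\partial_\theta A, \quad \partial_\tau C = -AC - t\partial_\theta D - BD, \quad \partial_\tau D = (A-1)D - t\partial_\theta C + BC,
\]
and the wave-map density satisfies $\partial_\tau(A^2+B^2+C^2+D^2) = -2(B^2+D^2) + 2t\partial_\theta(AB - CD)$. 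Crucially the $t\partial_\theta$ contributions carry a small factor $e^{-\tau}$, so the system is a perturbation of an ODE in $\tau$, and this is the structure the proof exploits.

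The pointwise assertions come from a continuation/bootstrap argument on intervals $[\tau_0, \tau_1) \subset [-\log t_0, +\infty)$ with two bootstrap hypotheses: (a) slightly weakened versions of the target pointwise bounds on $A$ and $(1-A)^2 + D^2$ with parameter $\upgamma' < \upgamma$, and (b) a high-order Sobolev bound $\mathcal{E}_N(\tau) \lesssim \upzeta^2$ for $N = N(\upgamma)$ suitably large. The energy bound (b) is improved by commuting the wave-map system with $\partial_\theta^k$, exploiting the null structure of the hyperbolic target and the lower bound $A \geq \upgamma' > 0$ to keep commutator terms integrable, then closing by Gr\"onwall in $\tau$; the choice $t_0 \leq t_*$ with $t_*$ sufficiently small controls the accumulated growth. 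The pointwise bounds (a) are improved from the transport equations above: spatial-derivative sources gain the prefactor $e^{-\tau}$ and are estimated pointwise via Sobolev embedding from (b), while the ODE-type source terms are integrated along characteristics and shown to keep the quantities inside the bootstrap region, again by the smallness of $t_*$.

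Existence of the pointwise limit $V(\theta)$ and the uniform limit $q(\theta)$ then follows from integrability of the $\tau$-derivatives. From $\partial_\tau P = A \geq \upgamma > 0$ one deduces $e^{-P(\tau,\theta)} \lesssim e^{-\upgamma\tau}$, and the equation $\partial_\tau C = -AC + (\text{terms controlled by the energy and bootstrap})$ combined with Gr\"onwall gives $|C(\tau,\theta)| \lesssim e^{-\upgamma\tau}$. Consequently $|\partial_\tau Q| = |Ce^{-P}| \lesssim e^{-2\upgamma\tau}$ is uniformly integrable, yielding uniform convergence $Q(t,\theta) \to q(\theta)$ with $q$ continuous. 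For $A$ the source $C^2 - D^2$ in $\partial_\tau A$ is sign-indefinite, but $C^2$ decays exponentially and the bootstrap forces $D^2$ and $t\partial_\theta B$ into $L^1_\tau$ pointwise (using the energy decay from the identity above), giving pointwise convergence $-t\partial_t P(t,\theta) \to V(\theta)$ with $V$ bounded.

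For the $C^1$ upgrade when $\upgamma > \tfrac{1}{3}$, differentiating in $\theta$ gives
\[
\partial_\theta Q(\tau, \theta) - \partial_\theta q(\theta) = \int_\tau^\infty e^{-P}\bigl(\partial_\theta C - C\,\partial_\theta P\bigr)(\tau',\theta)\,d\tau',
\]
and the $H^N$ energy plus Sobolev embedding give pointwise control of $\partial_\theta C$ and $\partial_\theta P$ with at worst controlled exponential growth in $\tau$; the factor $e^{-P} \lesssim e^{-\upgamma\tau}$ must beat both this growth and the linear-in-$\tau$ contribution coming from $\partial_\theta P \sim \tau\,\partial_\theta V + O(1)$. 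A careful accounting shows the integrand is uniformly integrable precisely when $\upgamma > \tfrac{1}{3}$, yielding uniform convergence of $\partial_\theta Q$ to $\partial_\theta q$. The main technical obstacle throughout is closing the pointwise part of the bootstrap: neither $A$ nor $(1-A)^2+D^2$ is manifestly monotone in the full PDE, and while the $t\partial_\theta$ error terms gain a factor of $e^{-\tau}$, their accumulation along the potentially long $\tau$-interval reaching the singularity must be estimated sharply and absorbed by the smallness of $t_*$ relative to $\upzeta$.
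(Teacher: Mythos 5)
Your broad strategy -- bootstrap a pointwise bound on $A$ and $(1-A)^2+D^2$ together with a high-order $L^2$ bound, integrate the transport system along causal curves, and recover low-order pointwise control via Sobolev -- has the right shape, and your $\tau$-formulation of the transport system is correct. But the central quantitative hypothesis is wrong. You posit that the top-order energy stays \emph{uniformly} bounded, $\mathcal{E}_N(\tau) \lesssim \upzeta^2$, and then deduce that the spatial error terms gain a full factor of $e^{-\tau}$. Neither can hold for the data covered by the theorem: since $-t\partial_t P$ is allowed to start above $1$ and undergo a bounce, spikes are generically present, and near a spike $\partial_\theta(-t\partial_t P)$ and $\partial_\theta(e^P t\partial_\theta Q)$ blow up like $t^{-(1-\upgamma)}$. (The paper records that this rate is sharp, from the linearised system \eqref{eq:ode_lm} around an unstable fixed point.) Consequently the commuted energies $\mathcal{E}^{(K)}(t)$ necessarily blow up like $t^{-2A_*-2K(1-\upgamma)}$ -- each commutation with $\partial_\theta$ costs a factor $t^{-(1-\upgamma)}$ -- and the net decay of a source like $t^2\partial_\theta^2 P$ is only $t^{2\upgamma'}=e^{-2\upgamma'\tau}$, not $e^{-\tau}$. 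With your uniform energy hypothesis the Gr\"onwall step would fail to close, because the forcing from $\mathcal{E}^{(K-1)}$ enters the $\mathcal{E}^{(K)}$ inequality multiplied by the non-integrable weight $t^{-(1-\upgamma)}$. (You also seem to sense this -- you speak of ``at worst controlled exponential growth in $\tau$'' from Sobolev embedding -- which contradicts the uniform bound you put into the bootstrap.)

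What is missing is the interpolation mechanism that turns the \emph{growing} $L^2$ hierarchy back into \emph{usable} $L^\infty$ information. Interpolating the order-$N$ $L^2$ bound (blowing up like $t^{-A_*-N(1-\upgamma)}$) against the undifferentiated $L^\infty$ bootstrap bound (uniformly $\lesssim C_*$) gives, for fixed low $k$, $\|\partial_\theta^k(\cdot)\|_{L^\infty} \lesssim t^{-k(1-\upgamma')}\cdot t_*^{(\upgamma-\upgamma')/2}$ once $N$ is taken large depending on $A_*$ and $\upgamma-\upgamma'$. The smallness used to close the pointwise bootstrap comes from the prefactor $t_*^{(\upgamma-\upgamma')/2}$, not from absolute integrability of $e^{-\tau}$. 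The same accounting explains the $\upgamma>\tfrac13$ threshold: when recovering $C^1$ convergence of $Q$ one encounters $t^{2}\partial_\theta^{3}Q$, which costs $t^{2-3(1-\upgamma')}=t^{3\upgamma'-1}$ and is $\frac{dt}{t}$-integrable precisely when $\upgamma'>\tfrac13$. Your heuristic -- $\partial_\theta P$ growing like $\tau\,\partial_\theta V$ competing against $e^{-\upgamma\tau}$ -- would predict no threshold at all (and $\partial_\theta V$ need not even exist pointwise, since $V$ is merely bounded). So the $\tfrac13$ is a genuine AVTD phenomenon tied to polynomial blow-up of third-order spatial derivatives, not a soft logarithmic effect.
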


Our second main theorem concerns the dynamical behaviour of certain quantities along causal curves. To fix notation, let $\gamma: I \to (0, +\infty) \times \mathbb{S}^1$ be any \emph{causal curve} parameterised by its $t$-coordinate. Using the Gowdy metric \eqref{eq:gowdy} and assuming $\gamma$ to not depend on $\sigma$ and $\delta$, the causal nature means $\gamma(t) = (t, \theta(t))$ with $|\theta'(t)| < 1$. Then define:
\[
    \mathscr{P}_{\gamma}(t) \coloneqq - t \partial_t P (\gamma(t)), \qquad \mathscr{Q}_{\gamma}(t) \coloneqq e^P t \partial_{\theta} Q (\gamma(t)).
\]
It is key that our theorem holds \underline{uniformly} in the choice of causal curve $\gamma$, meaning the ODEs for $\mathscr{P}_{\gamma}$ and $\mathscr{Q}_{\upgamma}$ being independent for different choices of $\gamma$ with distinct endpoints, at least up to error terms which reflect \emph{AVTD behaviour}, see Section~\ref{sub:intro_bkl}. This reflects explicitly that our result is \emph{spatially inhomogeneous}.

\begin{theorem}[Bounces, rough version] \label{thm:bounce_rough}
    Let $(P, Q)$ be a solution to the Gowdy symmetric system \eqref{eq:P_evol}--\eqref{eq:Q_evol} arising from initial data as in Theorem~\ref{thm:stab_rough}. Then for $\gamma(t)$ a timelike curve as above, the quantities $\mathscr{P}_{\gamma}(t)$ and $\mathscr{Q}_{\gamma}(t)$ obey the ODEs:
    \[
        t \frac{d}{dt} \mathscr{P}_{\gamma} = \mathscr{Q}_{\gamma}^2 + \mathscr{E}_{\mathscr{P}}, \qquad t \frac{d}{dt} \mathscr{Q}_{\gamma} = (1 - \mathscr{P}) \mathscr{Q}_{\gamma} + \mathscr{E}_{\mathscr{Q}},
    \]
    where the error terms vanish quickly as $t \to 0$. 

    Furthermore, $\mathscr{Q}_{\gamma}(t)$ converges to $0$ as $t \to 0$, while there exists some $\mathscr{P}_{\gamma, \infty}$ such that $\mathscr{P}_{\gamma}(t) \to \mathscr{P}_{\gamma, \infty}$ as $t \to 0$. In fact, for $V$ as in Theorem~\ref{thm:stab_rough}, $\mathscr{P}_{\gamma, \infty} = V(\theta_0)$ where $\theta_0 = \lim_{t \to 0} \theta(t)$ is the $\theta$-coordinate of the past endpoint of $\gamma$. Finally, we estimate $\mathscr{P}_{\gamma, \infty}$ depending on the value of $\partial_{\theta} q(\theta_0) = \lim_{t \to 0} \partial_{\theta} Q(\gamma(t))$:
    \begin{enumerate}[(i)]
        \item
            If either the limit $\partial_{\theta} q(\theta_0)$ does not exist or $\partial_{\theta} q (\theta_0)$ exists and is nonzero, then necessarily $V(\theta_0) = \mathscr{P}_{\gamma, \infty} \leq 1$ and moreover in the case that $\mathscr{Q}_{\gamma}(t_0)$ is small,
            \[
                \mathscr{P}_{\gamma, \infty} \approx \min \{ \mathscr{P}_{\gamma}(t_0), 2 - \mathscr{P}_{\gamma}(t_0) \} + O (\mathscr{Q}_{\gamma}(t_0)).
            \]
        \item
            Let $\upgamma > \frac{1}{2}$. By Theorem~\ref{thm:stab_rough}, $\partial_{\theta} q(\theta_0)$ exists. If $\theta_0$ is such that $\partial_{\theta} q(\theta_0) = 0$, then one instead has:
            \[
                \mathscr{P}_{\gamma, \infty} \approx \mathscr{P}_{\gamma}(t_0) + O (\mathscr{Q}_{\gamma}(t_0)).
            \]
    \end{enumerate}
\end{theorem}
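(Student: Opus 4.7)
The plan is to study the evolution of $\mathscr{P}(t, \theta) = -t\partial_t P$ and $\mathscr{Q}(t, \theta) = e^P t\partial_\theta Q$ as functions on spacetime, and then restrict to the curve $\gamma$. Applying $t\partial_t$ and invoking \eqref{eq:P_evol}, one computes $t\partial_t \mathscr{P} = \mathscr{Q}^2 - \mathscr{R}^2 - (t\partial_\theta)^2 P$, where $\mathscr{R} \coloneqq e^P t\partial_t Q$; similarly, using the commutator $[t\partial_t, t\partial_\theta] = t\partial_\theta$ together with \eqref{eq:Q_evol}, one obtains $t\partial_t \mathscr{Q} = (1 - \mathscr{P})\mathscr{Q} + t\partial_\theta \mathscr{R} - (t\partial_\theta P)\mathscr{R}$. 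Restricting to $\gamma(t) = (t, \theta(t))$ and converting to the total derivative $t\tfrac{d}{dt}$ introduces an extra convective term $\theta'(t) \cdot t\partial_\theta (\cdot)$, bounded in size by $|\theta'(t)| < 1$. These spatial-derivative and $\mathscr{R}$-dependent terms are precisely what constitute $\mathscr{E}_{\mathscr{P}}$ and $\mathscr{E}_{\mathscr{Q}}$.

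The heart of the proof is then to show that the error terms vanish as $t \to 0$ in a suitably integrable sense, that is, so-called \emph{asymptotically velocity term dominated} (AVTD) behaviour. Theorem~\ref{thm:stab_rough} provides uniform $L^{\infty}$ control on $\mathscr{P}$, $\mathscr{Q}$ and $\mathscr{R}$ via the weak subcriticality \eqref{eq:weak_subcriticality}; the next task is to improve these to quantitative decay estimates for $\mathscr{R}$ and for spatial derivatives such as $(t\partial_\theta)^2 P$ and $t\partial_\theta \mathscr{R}$. The natural approach is a higher-order energy argument, bootstrapping with the weak subcriticality parameter $\upgamma > 0$ to obtain a polynomial-in-$t$ decay rate for these quantities, yielding $\int_0^{t_0} t^{-1} |\mathscr{E}|(t, \theta(t)) \, dt < \infty$ uniformly in the curve $\gamma$. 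I expect this AVTD step to be the main technical obstacle: the wave map nonlinearities couple $P$ and $Q$ nontrivially, the estimates must hold uniformly in $\theta$, and one cannot directly appeal to the asymptotics of \cite{RingstromGowdySCC1, RingstromGowdySCC2}, which produce asymptotic information only in generic subsets rather than quantitatively in terms of the data at $t_0 > 0$.

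Given the approximate ODEs with integrable errors, the leading-order autonomous system in the asymptotic time $\tau = -\log t$ reads $\tfrac{d\mathscr{P}}{d\tau} = -\mathscr{Q}^2$ and $\tfrac{d\mathscr{Q}}{d\tau} = (\mathscr{P}-1)\mathscr{Q}$, for which $(\mathscr{P}-1)^2 + \mathscr{Q}^2$ is exactly conserved. The approximate identity $\tfrac{d}{d\tau}\bigl[(\mathscr{P}_\gamma - 1)^2 + \mathscr{Q}_\gamma^2\bigr] = O(|\mathscr{E}|)$ combined with the integrability of $\mathscr{E}$ yields convergence of $(\mathscr{P}_\gamma - 1)^2 + \mathscr{Q}_\gamma^2$ as $t \to 0$. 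A short phase portrait analysis (every trajectory asymptotes to the equilibrium line $\mathscr{Q} = 0$, with $\mathscr{P} \le 1$) then gives $\mathscr{Q}_\gamma \to 0$ and $\mathscr{P}_\gamma \to \mathscr{P}_{\gamma, \infty}$. To identify $\mathscr{P}_{\gamma, \infty}$ with the pointwise limit $V(\theta_0)$ from Theorem~\ref{thm:stab_rough}, I would use the quantitative bound on $t\partial_\theta \mathscr{P}$ furnished by the AVTD step to control $|\mathscr{P}_\gamma(t) - (-t\partial_t P)(t, \theta_0)|$ by the length of $[\theta(t), \theta_0]$ times a uniformly bounded spatial derivative, which vanishes as $t \to 0$.

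Finally, the dichotomy between (i) and (ii) comes from combining the approximate conservation with the hypothesis on $\partial_\theta q(\theta_0)$. In case (i), the asymptotic $\mathscr{Q}_\gamma(t) \sim t^{1 - V(\theta_0)} e^{\Phi(\theta_0)} \partial_\theta q(\theta_0)$ extracted from \eqref{eq:Q_asymp_2} forces $V(\theta_0) \le 1$ whenever $\mathscr{Q}_\gamma \to 0$ and $\partial_\theta q(\theta_0)$ is nonzero (the case where the limit fails to exist is handled by a subsequence argument). The conservation relation $(\mathscr{P}_{\gamma, \infty} - 1)^2 = (\mathscr{P}_\gamma(t_0) - 1)^2 + \mathscr{Q}_\gamma(t_0)^2 + O(\text{errors})$, together with the sign choice $\mathscr{P}_{\gamma, \infty} \le 1$, yields $\mathscr{P}_{\gamma, \infty} = 1 - |\mathscr{P}_\gamma(t_0) - 1| + O(\mathscr{Q}_\gamma(t_0))$, which matches $\min\{\mathscr{P}_\gamma(t_0), 2 - \mathscr{P}_\gamma(t_0)\} + O(\mathscr{Q}_\gamma(t_0))$ after Taylor-expanding the square root for small $\mathscr{Q}_\gamma(t_0)$. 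In case (ii), the restriction $\upgamma > \tfrac{1}{2}$ guarantees enough regularity of $q$ that $\partial_\theta q(\theta_0) = 0$ is a meaningful hypothesis, which then implies $\mathscr{Q}_\gamma(t) \to 0$ rapidly enough that $\int_0^{t_0} t^{-1} \mathscr{Q}_\gamma^2 \, dt = O(\mathscr{Q}_\gamma(t_0))$, so no bounce occurs and $\mathscr{P}_{\gamma, \infty} = \mathscr{P}_\gamma(t_0) + O(\mathscr{Q}_\gamma(t_0))$.
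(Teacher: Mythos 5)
Your overall architecture matches the paper's: derive the ODEs from Corollary~\ref{cor:ode}, control the error terms by an AVTD-type energy hierarchy and interpolation (Sections~\ref{sec:l2}--\ref{sec:interp}), exploit the approximate conservation of $\mathscr{K}_\gamma = (\mathscr{P}_\gamma - 1)^2 + \mathscr{Q}_\gamma^2$, and Taylor-expand to extract the bounce formula. The identification $\mathscr{P}_{\gamma,\infty} = V(\theta_0)$ via a $\partial_\theta\mathscr{P}$ bound times $|\theta(t)-\theta_0| = O(t)$ is also consistent with the paper.

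There are, however, three places where your argument either has a genuine gap or is circular as written. First, the parenthetical ``every trajectory asymptotes to the equilibrium line $\mathscr{Q}=0$, \emph{with} $\mathscr{P}\le 1$'' is not true for the error-perturbed system and would contradict part (ii) of the theorem itself (a trajectory sitting near an unstable fixed point with $\mathscr{P}_{\gamma,\infty} > 1$ is precisely a spike). More importantly, convergence of $\mathscr{K}_\gamma$ alone does not give $\mathscr{Q}_\gamma\to 0$: you need to first show $\mathscr{P}_\gamma$ converges (which the paper does via the lower bound $\mathscr{P}_\gamma \ge\upgamma$ and approximate monotonicity $t\,d\mathscr{P}_\gamma/dt \ge -|\mathscr{E}_\mathscr{P}| - \mathscr{R}_\gamma^2$, giving $\int_0^{t_0}\mathscr{Q}_\gamma^2\,dt/t < \infty$), extract a subsequence $t_k$ with $\mathscr{Q}_\gamma(t_k)\to 0$, and then upgrade to full convergence by integrating $t\partial_t\mathscr{Q}_\gamma^2 = 2(1-\mathscr{P}_\gamma)\mathscr{Q}_\gamma^2 + 2\mathscr{E}_\mathscr{Q}\mathscr{Q}_\gamma$. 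Second, in case (i) you appeal to the asymptotics \eqref{eq:Q_asymp_2} of Theorem~\ref{thm:asymp_smooth} to conclude $V(\theta_0)\le 1$, but those asymptotics are \emph{hypothesized} to hold only when $0 < V < 1$ a priori in the unpolarized case; invoking them to prove $V\le 1$ is circular. The paper's self-contained argument is cleaner: since $\mathscr{Q}_\gamma = e^P t\partial_\theta Q\to 0$ but $\partial_\theta Q(\gamma(t))\not\to 0$, one gets $e^{P(\gamma(t))}t\to 0$, and then $t\tfrac{d}{dt}\log(e^{P}t) = 1-\mathscr{P}_\gamma\to 1-\mathscr{P}_{\gamma,\infty}$ forces $\mathscr{P}_{\gamma,\infty}\le 1$. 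Third, in case (ii) the assertion $\int_0^{t_0}t^{-1}\mathscr{Q}_\gamma^2\,dt = O(\mathscr{Q}_\gamma(t_0))$ is not what the argument delivers; the paper's proof instead bounds $|\mathscr{Q}_\gamma(t)|\lesssim t^{-1+\upgamma+\upsigma}$ with $-1+\upgamma+\upsigma > 0$ (using $\upgamma > 1/2$ and the quantitative convergence $|\partial_\theta Q(\gamma(t)) - \partial_\theta q(\theta_0)|\lesssim t^{\upsigma}$), yielding the explicit rate $O(t_0^{2\upgamma'-1})$ in the precise statement.
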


One interprets Theorem~\ref{thm:bounce_rough}(i) as an instability result for the quantity $\mathscr{P}_{\gamma}$ in the following sense: consider initial data as in Theorem~\ref{thm:stab_rough} such that for some $\theta \in \mathbb{S}^1$ one sets $- \dot{P}_D = - t \partial_t P (t_0, \theta) > 1$ and $\mathscr{Q}_{\gamma}(t_0)$ small. Then for timelike curves through $(t_0, \theta)$, and in the expected-to-be-generic setting where $\partial_{\theta} Q(\gamma(t))$ does not converge to $0$, (i) suggests that as $t \to 0$, $\mathscr{P}_{\gamma}(r)$ will transition from $ - \dot{P}_D > 1$ to (approximately) $2 - \dot{P}_D < 1$.

An application of particular interest concerns not just the instability of $\mathscr{P}_{\gamma}$ along $\gamma$ but actually an \emph{instability of the global asymptotics} of certain \emph{polarized Gowdy spacetimes} with $Q = 0$, with the instability is triggered by \emph{unpolarized perturbations}. In light of Theorem~\ref{thm:asymp_smooth}, in the original polarized spacetime the function $V(\theta)$ always exists and is smooth. To be compatible with Theorem~\ref{thm:stab_rough}, suppose that $0 < V(\theta) < 2$, but now suppose also that $V$ exceeds $1$ somewhere. Upon adding perturbations with $Q \neq 0$, one expects the asymptotics will change; by considering Theorem~\ref{thm:bounce_rough}(i) with $\gamma$ a constant $\theta$-curve, a generic $\theta$ will now have $\mathscr{P}_{\gamma, \infty} = \tilde{V}(\theta_0) \approx \min \{ V(\theta_0), 2 - V(\theta_0) \}$. 

An illustration of this is seen in Figure~\ref{fig:bounce_instability} below (see Corollary~\ref{cor:bounce} for a more general statement). Here, in the polarized Gowdy spacetime with $Q = 0$, the asymptotic quantity $V(\theta)$ remains close to the (inhomogeneous) value of the initial data quantity $- t \partial_t P(t_0, \theta)$. However, in the non-polarized case with $Q \neq 0$, in order to be compatible with $0 < V(\theta) < 1$ we instead see a bounce for most $\theta \in \mathbb{S}^1$.

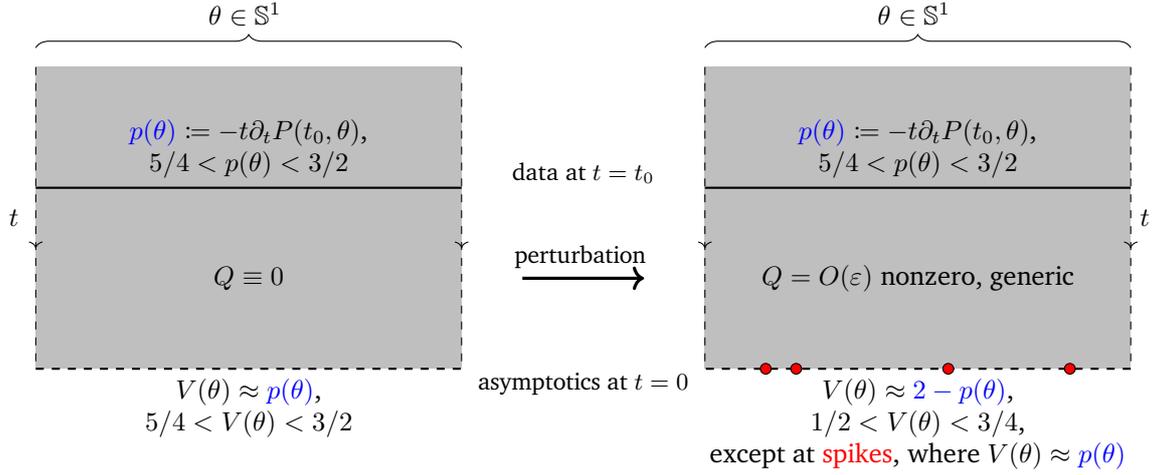
\begin{figure}[h]
    \centering

    \begin{tikzpicture}[scale = 0.8, every text node part/.style={align=center}]
        \coordinate (LLD) at (-9, 0);
        \coordinate (LLM) at (-9, 3);
        \coordinate (LLU) at (-9, 5);
        \coordinate (LRD) at (-2, 0);
        \coordinate (LRM) at (-2, 3);
        \coordinate (LRU) at (-2, 5);
    
        \begin{scope}[decoration=
            {markings, mark=at position 0.6 with {\arrow{>}}}
            ]
            \path[fill=lightgray, opacity=0.5] (LLU) -- (LLD) -- (LRD) -- (LRU) -- (LLU);
    
            \draw[decorate, decoration={brace, raise=5pt, amplitude=5pt}] (-9, 5.1) -- node[midway, above=10pt] {$\theta \in \mathbb{S}^1$} (-2, 5.1);
            \draw[dashed, postaction={decorate}] (LLU) -- node[midway, left] {$t$}  (LLD);  
            \draw[dashed, postaction={decorate}] (LRU) -- (LRD);  
            \draw[thick, dashed] (LLD) -- node[midway, below] {$V(\theta) \approx {\color{blue}p(\theta)}$, \\ $5/4 < V(\theta) < 3/2$} (LRD);
            \draw[thick] (LLM) -- node[midway, above] {${\color{blue}p(\theta)} \coloneqq - t \partial_t P(t_0, \theta)$, \\ $5/4 < p(\theta) < 3/2$} (LRM);

            \node (LC) at (-5.5, 1.5) {$Q \equiv 0$};
        \end{scope}
    
        \coordinate (RRD) at (+9, 0);
        \coordinate (RRM) at (+9, 3);
        \coordinate (RRU) at (+9, 5);
        \coordinate (RLD) at (+2, 0);
        \coordinate (RLM) at (+2, 3);
        \coordinate (RLU) at (+2, 5);

        \begin{scope}[decoration=
            {markings, mark=at position 0.6 with {\arrow{>}}}
            ]

            \path[fill=lightgray, opacity=0.5] (RRU) -- (RRD) -- (RLD) -- (RLU) -- (RRU);
    
            \draw[decorate, decoration={brace, raise=5pt, amplitude=5pt}] (2, 5.1) -- node[midway, above=10pt] {$\theta \in \mathbb{S}^1$} (9, 5.1);
            \draw[dashed, postaction={decorate}] (RRU) -- node[midway, right] {$t$} (RRD);  
            \draw[dashed, postaction={decorate}] (RLU) -- (RLD);  
            \draw[thick, dashed] (RRD) -- node[midway, below] {$V(\theta) \approx {\color{blue}2 - p(\theta)}$, \\ $1/2 < V(\theta) < 3/4$, \\ except at {\color{red} spikes}, where $V(\theta) \approx {\color{blue} p(\theta)}$} (RLD);
            \draw[thick] (RRM) -- node[midway, above] {${\color{blue}p(\theta)} \coloneqq - t \partial_t P(t_0, \theta)$, \\ $5/4 < p(\theta) < 3/2$} (RLM);

            \node (RC) at (5.5, 1.5) {$Q = O(\varepsilon)$ nonzero, generic};
        \end{scope}

        \node at (0, 3.25) {\small data at $t = t_0$};
        \node at (0, -0.25) {\small asymptotics at $t = 0$};
        \node at (3, 0) [circle, draw, inner sep=0.5mm, fill=red] {};
        \node at (3.5, 0) [circle, draw, inner sep=0.5mm, fill=red] {};
        \node at (6, 0) [circle, draw, inner sep=0.5mm, fill=red] {};
        \node at (8, 0) [circle, draw, inner sep=0.5mm, fill=red] {};

        \draw[very thick, ->] (-1, 1.5) -- node[midway, above] {\small perturbation} (1, 1.5);
    \end{tikzpicture}

    \captionsetup{justification = centering}
    \caption{Diagram illustrating an example of a global bounce instability arising from a non-polarized perturbation of a polarized Gowdy spacetime. In the unperturbed spacetime (left) $Q \equiv 0$ and the asymptotic quantity $V(\theta)$ is close to the (inhomogeneous) data {\color{blue} $p(\theta)$}, while in the perturbed spacetime (right) with $Q = O(\varepsilon)$ and generic the bounce of Theorem~\ref{thm:bounce_rough}(i) applied to constant $\theta$-curves $\gamma$ means that $V(\theta)$ is instead close to {\color{blue}$2 - p(\theta)$}, except at finitely many points representing spikes.}
    \label{fig:bounce_instability}
\end{figure}

Note, however, there may certain choices of $\theta_0 \in \mathbb{S}^1$ with $\partial_{\theta} q(\theta_0) = 0$ and where by (ii)\footnote{One might hope that the requirement of $\upgamma > \frac{1}{2}$ can be removed.} instead $\tilde{V}(\theta_0) \approx V(\theta_0)$. If such a $\theta_0$ is an isolated point this means that $\tilde{V}(\theta)$ is discontinuous at $\theta = \theta_0$, and $\theta_0$ is a ``true spike'' as defined in \cite[Section 9]{RingstromGowdyReview}. As mentioned in \cite{RingstromGowdyReview}, true spikes have previously been understood through a series of transformations (related to inversions and the so-called Gowdy--to--Ernst transformation) -- but these previous works do not explain how spikes are related to initial data. Our result is therefore a first step in understanding the dynamical formation of spikes from suitable initial data.

We return briefly to the three conditions above. Our first condition, weak subcriticality, is chosen such that we include spacetimes where the nonlinear bounces above can occur, but such that these bounces are not too wild enough to make quantities such as $\mathscr{P}_{\gamma}$ and $\mathscr{Q}_{\gamma}$ uncontrollable. 

The second condition of energy boundedness is natural as our proof relies on $L^2$ energy estimates at top order. Note that our choice of energy, see Section~\ref{sub:energy}, will be compatible with the asymptotics of Theorem~\ref{thm:asymp_smooth}. We will, however, allow the initial energy bound, $\upzeta > 0$, to be large, meaning we allow large spatial variation. In this case, our final condition, closeness to singularity, is required to prevent this spatial variation from playing a major role in the dynamics. Note that $t_*$ and $\upzeta$ play a similar role to that of $\zeta_1$ and $\zeta_0$ in \cite[Theorem 12]{groeniger2023formation}.

\subsection{Relationship to BKL and the Kasner map} \label{sub:intro_bkl}

We put our results in the context of the physics and mathematics literature regarding spacelike singularities arising in solutions of the Einstein equations. While celebrated ``singularity theorems'' of Penrose \cite{penrose} and Hawking \cite{hawking67} asser that ``singularities'' are a robust prediction of Einstein's equations \eqref{eq:einstein}, it is a known shortcoming that these theorems do not prove singularity formation in the sense of blow-up, but instead (causal) geodesic incompleteness, and thus provide no qualitative or quantitative description of spacetimes near their singular, or incomplete, boundaries.

Our approach is instead to outline the heuristics of Belinski, Khalatnikov and Lifshitz (often abbreviated to BKL) in their investigation of near-singularity solutions to Einstein's equation \cite{kl63, bkl71, bk77, bkl82}. \cite{kl63} proposes the following ansatz for singular solutions to the Einstein equations: a spacetime $\mathcal{M}^{1+3} = (0, T) \times \mathbf{\Sigma}^3$ with singularity located at $\{0\} \times \mathbf{\Sigma}$ has spacetime metric $\mathbf{g}$ with leading order expansion:
\begin{equation} \label{eq:bkl}
    \mathbf{g} = - d \tau^2 + \sum_{I=1}^3 \tau^{2 p_I(x)} \mathbf{\omega}^I(x) \otimes \mathbf{\omega}^I(x) + \cdots.
\end{equation}
The $p_I(x)$ are functions on $\mathbf{\Sigma}$ known as \textit{Kasner exponents}, while $\{\mathbf{\omega}^I(x)\}$ is a frame of $1$-forms on $\mathbf{\Sigma}$. Na\"ively inserting these into \eqref{eq:einstein}, BKL determine that the Kasner exponents must satisfy the \textit{Kasner relations}:
\begin{equation} \label{eq:kasner_relation}
    \sum_{I = 1}^3 p_I(x) = 1, \quad \sum_{I = 1}^3 p_I^2(x) = 1.
\end{equation}

Note \eqref{eq:kasner_relation} implies that aside from the exceptional case where $(p_1, p_2, p_3) = (1, 0, 0)$ and permutations thereof, exactly one of the $p_I(x)$ must be negative. 
A consistency check using the Einstein equations suggests that for the ansatz \eqref{eq:bkl} to remain valid all the way to $t = 0$, the 1-form $\omega^I$ associated to the negative exponent $p_I$ must be integrable in the sense of Frobenius, meaning $\omega^I \wedge d \omega^I = 0$. On top of the Kasner relations \eqref{eq:kasner_relation}, this integrability condition provides another obstruction to the validity of \eqref{eq:bkl}\footnote{In the original paper of Khalatnikov and Lifshitz \cite{kl63}, this observation together with a na\"ive function counting argument led them to believe that singularities are in fact non-generic! However in the later paper \cite{bkl71} this observation waas instead understood as an instability; this is the modern viewpoint.}.

In \cite{bkl71} BKL then give heuristics suggesting what happens if this integrability condition fails. In \cite{bkl71}, BKL assume \emph{Asymptotically Velocity Term Dominated (AVTD)} behaviour, meaning that `spatial derivatives' are subdominant in comparison to $\partial_{\tau}$-derivatives. This has the effect that dynamics in the future light cones emanating from distinct points $(0, p) \in \{ 0 \} \times \mathbf{\Sigma}$ on the singularity are decoupled.

With this assumption they then provide a computation suggesting that in such a light cone, the ansatz \eqref{eq:bkl} is valid for $\tau \gg \tau_B$ for $\tau_B$ some critical time, but for $\tau \ll \tau_B$ the metric transitions to a form that again resembles \eqref{eq:bkl} but with Kasner exponents $p_I(x)$ replaced by new exponents $\acute{p}_I(x)$. Between these regimes, the spacetime undergoes a nonlinear transition from one regime to another, often denoted in the literature as a \emph{BKL} or \emph{Kasner bounce}. Typically these nonlinear transitions, or bounces, then cascade indefinitely, giving rise to BKL's \emph{chaotic and oscillatory approach to singularity}. 

Within this computation, BKL even propose a formula describing how the Kasner exponents change during the transition: if $p_2 < 0$ and $\omega^2 \wedge d \omega^2 \neq 0$ then
\begin{equation} \label{eq:kasner_relation_bounce}
    \acute{p}_1 = \frac{p_1 + 2 p_2}{1 + 2 p_2}, \quad \acute{p}_2 = -\frac{p_2}{1 + 2 p_2}, \quad \acute{p}_3 = \frac{p_3 + 2 p_2}{1 + 2 p_2}.
\end{equation}

To relate this to our results in Gowdy symmetry, we need to apply a change of variables from the areal time coordinate $t$ in \eqref{eq:gowdy} to the Gaussian time coordinate $\tau$ in \eqref{eq:bkl}. Using \eqref{eq:gowdy}, we expect $d\tau \approx t^{-\frac{1}{4}} e^{\frac{\lambda}{4}} dt$, while by \eqref{eq:L_asymp_2} we have $\lambda(t, \theta) \approx V^2(\theta) \log t + O(1)$. So $\tau \sim t^{\frac{V^2 + 3}{4}}$. We then formally set $\omega^1 = d\theta$, $\omega^2 = d \sigma + q(\theta) d \delta$ and $\omega^3 = d \delta$, where $q(\theta) = \lim_{t \to 0} Q(t, \theta)$. Through Theorem~\ref{thm:asymp_smooth}, this allows one to make a formal analogy between \eqref{eq:gowdy} and \eqref{eq:bkl}, with Kasner exponents as in \eqref{eq:kasnerlike_intro}. 

While $\omega^1 = d \theta$ and $\omega^3 = d \delta$ are closed and thus integrable in the sense of Frobenius, one checks that $\omega^2 \wedge d \omega^2 = q'(\theta) d \sigma \wedge d \theta \wedge d \delta$. Hence the consistency check fails if $q'(\theta) \neq 0$ and $p_2 < 0$. From \eqref{eq:kasnerlike_intro}, $p_2 = p_2(\theta) < 0$ if and only if $V(\theta) > 1$. So one expects $V(\theta) = 1$ to be a threshold between stable and unstable behaviour. Indeed, for $0 < V(\theta) < 1$ we expect stable\footnote{The additional requirement of $V(\theta) > 0$ is more due to our choice of gauge.} self-consistent behaviour as in Theorem~\ref{thm:asymp_smooth}, while the bounce result Theorem~\ref{thm:bounce_rough} indicates that $V(\theta) = 1$ is a threshold. Furthermore, one may check that via the correspondence \eqref{eq:kasnerlike_intro} and the transition map \eqref{eq:kasner_relation_bounce}, a bounce exactly corresponds to a transition $V(\theta) \mapsto \acute{V}(\theta) = 2 - V(\theta)$. In the ``low-velocity'' regime of $0 < V(\theta) < 2$ that we consider, note that there is \emph{at most one Kasner bounce} before either $V(\theta)$ or $\acute{V}(\theta)$ lies in the interval $(0, 1)$.

We review the heuristics regarding how the bounce map \eqref{eq:kasner_relation_bounce} was found. BKL's proposal was that assuming AVTD, then in the future light cone emanating from any point on the singularity the metric is well-approximated by something spatially homogeneous. For spatially homogeneous $\mathbf{g}$, the Einstein equations \eqref{eq:einstein} reduce to a system of finite dimensional nonlinear autonomous ODEs for a correctly chosen time coordinate. The bounce map \eqref{eq:kasner_relation_bounce} then arises from understanding orbits of this ODE system, in particular heteroclinic orbits between its unstable fixed points. 

In the mathematical literature, progress regarding spacelike singularities and the BKL ansatz has largely regarded the ``subcritical'' setting, where either the integrability condition $\omega^I \wedge d \omega^I = 0$ holds due to symmetry, or due to the addition of matter fields e.g.~scalar fields or so-called stiff fluids, which allow more general stable regimes to be found. We mention the breakthrough work of Fournodavlos--Rodnianski--Speck \cite{FournodavlosRodnianskiSpeck}, as well as the related \cite{groeniger2023formation, RodnianskiSpeck1, RodnianskiSpeck2, SpeckS3, BeyerOliynyk, FajmanUrban}. There are also related results which involve prescribing the asymptotic data i.e.~$p_I(x), \omega^I(x)$ in \eqref{eq:bkl} and ``solving from the singularity'' to determine a spacetime achieving this near-singularity ansatz at leading order, see in particular Fournodavlos--Luk \cite{FournodavlosLuk}. 

Regarding the study of nonlinear bounces, to the best of the author's knowledge all previous work concerns only spatially homogeneous spacetimes, where the Einstein equations reduce an exact system of nonlinear ODEs. See for instance studies of solutions for various Einstein--matter systems in Bianchi symmetry \cite{Weaver_bianchi, RingstromBianchi}, as well as a recent work of the author together with Van de Moortel \cite{MeVdM}. 
In particular the current article and our companion article \cite{MeSurfaceSymPaper}, which concerns the Einstein--Maxwell--scalar field model in surface symmetry, are the first works to understand BKL bounces, albeit only a single such bounce, \emph{outside of the spatially homogeneous setting}. See \cite[Section 1.6]{MeSurfaceSymPaper} for a thorough introduction to the model discussed there and a detailed comparison of the two papers.

Before moving to a sketch of the proof, we propose two conjectures which would go beyond a \emph{single} BKL bounce, but remains in the $1+1$-dimensional setting. The first conjecture still concerns Gowdy symmetry but with multiple bounces. Note that this would mean exiting the low-velocity regime with $0 < V(\theta) <2$.

\begin{conjecture} \label{con:gowdymultiple}
    There exists a large, open class of initial data for the Gowdy symmetric system which exhibit any \underline{finite} number of BKL bounces. 
\end{conjecture}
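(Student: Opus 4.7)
The plan is to establish the conjecture via an inductive scheme on the number of bounces $n$, where the base case $n = 1$ is essentially the content of Theorem~\ref{thm:bounce_rough}(i). For the inductive step, one would like to view the Gowdy solution just after the first bounce, restricted to a suitably short time interval, as effectively ``new initial data'' to which Theorem~\ref{thm:stab_rough} and Theorem~\ref{thm:bounce_rough} can be applied in order to produce the next bounce; iterating this $n$ times yields the desired behaviour.

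The principal obstacle is that the weak subcriticality condition \eqref{eq:weak_subcriticality} forces $\mathscr{P}_{\gamma}(t_0) \in (\upgamma, 1-\upgamma) \cup (1+\upgamma, 2-\upgamma)$ by choice of $\upgamma$; thus the first bounce takes $\mathscr{P}_{\gamma}$ into the stable regime $(0,1)$, and a naive iteration is impossible. Instead, one must first extend the stability and bounce theorems to a \emph{hierarchy} of velocity regimes allowing $|V(\theta)|$ to be large, with an effective bounce law iterating $V \mapsto 2 - V$ (or an appropriate variant when $V$ transitions through negative values, which correspond, via \eqref{eq:kasnerlike_intro}, to Kasner exponents with a different sign structure). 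I anticipate needing a quantitative description of the $n$-bounce Kasner cascade that tracks the allowable sizes of spatial derivatives and of the $e^{2P}$ weights through each successive reflection of the wave map trajectory in the hyperbolic target $(\R^2, g_R)$.

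My proposed approach is therefore: first, identify a moduli space of initial data parameterised by a prescribed bounce sequence, specifying at $t = t_0$ values of $\mathscr{P}_{\gamma}$, $\mathscr{Q}_{\gamma}$, and of $\partial_{\theta} Q$ that will drive each bounce in turn. Second, establish propagation estimates through each individual bounce window, using the AVTD wave-map structure and the observation underlying Theorem~\ref{thm:bounce_rough} that the $\theta$-slicewise ODE system for $(\mathscr{P}_{\gamma}, \mathscr{Q}_{\gamma})$ holds uniformly in the causal curve $\gamma$, up to spatial error terms. Third, between consecutive bounces, re-establish a Kasner-like AVTD phase with refreshed parameters compatible with the hypotheses for the subsequent bounce; this requires a bootstrap controlling both top-order energy norms along the lines of Section~\ref{sub:thm_data} and pointwise bounds on $(\mathscr{P}_{\gamma}, \mathscr{Q}_{\gamma})$.

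The hardest step, I expect, will be the combination of the second and third items: ensuring that after each bounce the spatial inhomogeneity of $Q$ does not degenerate (so that the driving term for the next bounce remains nontrivial), yet also does not amplify uncontrollably, while the intermediate AVTD phases last long enough to absorb the accumulated spatial contributions into error terms that remain small through all $n$ bounces. The possible emergence and propagation of spikes at each bounce, as in Theorem~\ref{thm:bounce_rough}(ii), further complicates this, suggesting the open class will either need to be restricted to avoid spikes at every intermediate stage, or else to accommodate a finite set of spike points via a separate refined analysis at each bounce.
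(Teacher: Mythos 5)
This statement is a \emph{conjecture}, and the paper offers no proof of it; it is stated in Section~\ref{sub:intro_bkl} as an open problem, immediately followed by the remark that one ``expects that this would require a different choice of gauge from \eqref{eq:gowdy}, since the results of this paper suggest that $\partial_{\theta}$ is not a good `spatial derivative' outside of the low-velocity regime.'' Your proposal is therefore a sketch of a possible attack, not something that can be checked against a proof in the paper, and it should be read with that caveat in mind.

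On the substance, your inductive blueprint --- treat the post-bounce state as fresh data and re-run the single-bounce machinery --- is a natural first idea, and you correctly identify that the weak subcriticality window $\upgamma < \mathscr{P}_{\gamma} < 2-\upgamma$, $\mathscr{P}_{\gamma}\neq 1$ puts the system into the stable regime $(0,1)$ after a single bounce, so that a literal iteration stalls. But the mechanism that actually blocks ``extend the velocity range and iterate'' is stronger than what you describe, and it is exactly the gauge issue the paper flags. Two concrete points. First, in the areal gauge the AVTD loss rate per $\partial_{\theta}$-derivative is $t^{-(1-\upgamma)}$, which becomes non-perturbative once $|V|$ leaves $(0,2)$: the whole hierarchy of energy estimates in Section~\ref{sec:l2}, and the interpolation in Lemma~\ref{lem:low_order_linfty}, rely on $\upgamma > 0$ to make the commuted terms integrable, and there is no room to simply ``allow $|V(\theta)|$ to be large'' without abandoning the areal time slicing. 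Second, within the frame $\omega^1 = d\theta$, $\omega^2 = d\sigma + q\,d\delta$, $\omega^3 = d\delta$ used in this paper, both $\omega^1$ and $\omega^3$ are closed, so the only Frobenius non-integrability that can drive a bounce is via $\omega^2$, i.e.\ via $p_2 < 0$, i.e.\ via $V > 1$; the Kasner map $V \mapsto 2 - V$ is an involution, and $2 - V < 0$ for $V > 2$ corresponds to $p_3 < 0$, for which $\omega^3$ \emph{is} integrable. A second bounce therefore cannot be triggered by the same frame: one needs a frame rotation --- e.g.\ a Gowdy--to--Ernst type transformation or an inversion isometry of the hyperbolic target --- after which the ``new $\omega^2$'' is again non-integrable. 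Your proposal gestures at a ``hierarchy of velocity regimes'' and a ``variant when $V$ transitions through negative values,'' but it does not actually introduce the frame rotation or the replacement gauge, and without one the inductive step has no mechanism to produce the next bounce. So as written the proposal has a genuine gap precisely at the point the paper anticipates: it attempts to iterate the areal-gauge, fixed-frame estimates of this paper outside the regime in which they are designed to hold, rather than identifying the new structure (gauge and frame) in which the $n$-bounce problem becomes tractable.
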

One expects that this would require a different choice of gauge from \eqref{eq:gowdy}, since the results of this paper suggest that $\partial_{\theta}$ is not a good ``spatial derivative'' outside of the low-velocity regime; see already Section~\ref{sub:intro_proof}. To go beyond a finite number of bounces to a possibly infinite number, one needs to move beyond Gowdy symmetry, and into the realm of more general $\mathbb{T}^2$-symmetric spacetimes, see \cite{T2Symmetry} for definitions.
\begin{conjecture} \label{con:t2bounce}
    One can describe initial data for the Einstein vacuum equations in $\mathbb{T}^2$ symmetry which exhibit an \underline{infinite} number of BKL bounces. 
\end{conjecture}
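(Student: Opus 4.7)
The plan is to generalize the framework of this paper to $\mathbb{T}^2$-symmetric spacetimes with nonvanishing twist constants, following the analogy with spatially homogeneous Mixmaster dynamics (Bianchi VIII/IX) where infinitely many bounces occur. First I would fix a gauge adapted to $\mathbb{T}^2$-symmetry in which the vacuum Einstein equations decouple into a wave-map-like system coupled to equations governing the twist. Unlike the Gowdy case — where the wave map target is two-dimensional hyperbolic space with two bounce ``walls'' (the $P$ and $Q$ walls) — here one expects a three-wall structure, with the third wall corresponding to the twist constants. The relevant dynamical variables are then three Kasner-like quantities $\mathscr{P}_1, \mathscr{P}_2, \mathscr{P}_3$, subject to a nonlinear ODE system analogous to Bianchi IX but with $\theta$-dependence and additional spatial-derivative error terms that must be shown to be subdominant (AVTD).

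Second, I would adapt the weakly subcritical regime of this paper into a family of one-bounce regimes $R_n$, each corresponding to the state between the $n$-th and $(n+1)$-th bounces. Within each regime one would seek a local version of Theorem~\ref{thm:stab_rough} and Theorem~\ref{thm:bounce_rough} (together with their AVTD content) ensuring that the dynamics are controlled by the three-variable ODE system whose asymptotics specify the initial data for the next regime. The bounce map would be a $\mathbb{T}^2$-symmetric analogue of \eqref{eq:kasner_relation_bounce}, compatible with the Kasner map of Mixmaster. The inductive step requires a careful choice of rescaled time variable at each stage, so that the ``closeness to singularity'' parameter $t_*$ used here can be consistently iterated and so that no single bounce escapes the weakly subcritical regime needed to apply the local theorem.

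The main obstacle — and in my view the crux of the conjecture — is to demonstrate that the energy norms, the spike structure, and the AVTD error terms all remain controlled \emph{uniformly across infinitely many bounces}. Each bounce can generate new spike-like features (as in Theorem~\ref{thm:bounce_rough}(ii)), potentially degrade the smoothness of the asymptotic profile, and introduce errors that are amplified by the subsequent evolution; a naive iteration loses derivatives or blows up the norms after finitely many steps. One would need a scheme where top-order norms are measured in adapted weighted spaces whose weights depend on $\theta$ and on the current bounce index, and where the spike set remains somehow controlled (perhaps finite on each compact region). A promising approach is to restrict to a carefully chosen measure-theoretically large (but possibly topologically small, Cantor-like) subset of the moduli space of initial data, analogous to KAM-style constructions, where the sequence of bounce times and Kasner exponents avoids the resonances that would drive nonuniform growth.

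Finally, to get started I would target initial data close (in a suitable topology) to a Bianchi VIII or IX solution already known to bounce infinitely often \cite{RingstromBianchi, LiebscherRendallTchapnda, BeguinDutilleul}, and attempt to propagate approximate homogeneous behaviour bounce-by-bounce via a Gronwall-type argument along the Kasner map's heteroclinic network. The cumulative growth of inhomogeneous errors is the essential analytic difficulty, and I expect that closing this control will demand genuinely new ingredients beyond those in this paper — in particular a replacement for the monotone energy $\mathcal{E}(t)$ of \eqref{eq:wavemap_energy} which is well-adapted to the iterated bouncing geometry.
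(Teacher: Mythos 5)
This statement is Conjecture~\ref{con:t2bounce}, which the paper explicitly leaves open; there is no proof in the paper to compare against. What you have written is not a proof but a research programme, and you candidly say as much at several points (``the crux of the conjecture,'' ``I expect that closing this control will demand genuinely new ingredients beyond those in this paper''). As such, it cannot be assessed as a correct or incorrect proof — it is an outline of the difficulties and a plausible line of attack, which is a different kind of object.

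That said, the outline is sensible and correctly identifies the obstacles that make this conjecture hard relative to the single-bounce result of the paper: the need for a gauge and monotone quantity adapted to an unbounded sequence of Kasner epochs (the paper's areal gauge and energy $\mathcal{E}^{(K)}$ are tailored to the low-velocity regime $0 < V < 2$ and to $\partial_\theta$ being a good spatial derivative, which fails after bounces — the paper itself flags this in the discussion around Conjecture~\ref{con:gowdymultiple}); the loss of derivatives / norm inflation under iterated bounces; and the proliferation of spikes. Your suggestion to work near a Bianchi~VIII/IX background and propagate along the heteroclinic network, possibly on a Cantor-like subset of data, is in the spirit of existing ODE-level results \cite{LiebscherRendallTchapnda, BeguinDutilleul, RingstromBianchi}, but elevating those to the inhomogeneous PDE setting is precisely the open content of the conjecture. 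One concrete gap in your sketch worth flagging: you propose iterating a ``local version'' of Theorems~\ref{thm:stab_rough} and \ref{thm:bounce_rough} regime by regime, but those theorems require the data to lie in a weakly subcritical open set at a time $t_0 \leq t_*(\upgamma, \upzeta)$, and after a bounce the effective $\upzeta$ (the energy size) will generically have grown while the available $t$-interval to the singularity has shrunk; without a quantitative gain per bounce this iteration does not obviously close even for finitely many bounces, let alone infinitely many. Making that bookkeeping precise — in particular, showing the product of amplification factors over all bounces converges for some nontrivial set of data — is where a genuinely new idea is needed.
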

The reason to study $\mathbb{T}^2$ symmetry is that the symmetry assumption no longer enforces that any of the $\omega^I(x)$ are integrable in the sense of Frobenius, thus giving the potential for infinite bounces. This is expected to significantly more difficult than Conjecture~\ref{con:gowdymultiple} since the corresponding autonomous ODE system is chaotic. Nevertheless, one would hope the resolution of Conjecture~\ref{con:t2bounce} would be a key initial step towards understanding the heuristics of BKL in vacuum outside of symmetry.

\subsection{Sketch of the proof} \label{sub:intro_proof}

The proofs of the stability result Theorem~\ref{thm:stab_rough} and the bounce result Theorem~\ref{thm:bounce_rough} are done in tandem and there are three major steps. Recalling the main evolution equations \eqref{eq:P_evol}--\eqref{eq:Q_evol}, we caricature these steps as follows:
\begin{enumerate}[Step 1:]
    \item
        \textbf{Analysis in the spatially homogeneous case:  } In this step, we ignore certain terms in \eqref{eq:P_evol}--\eqref{eq:Q_evol} involving $\partial_{\theta}$-derivatives which one expects to become negligible as $t \to 0$. We do, however, keep terms involving the expression $e^P t \partial_{\theta} Q$. The result is a nonlinear ODE system for the quantities $- t \partial_t P$ and $e^P t \partial_{\theta} Q$; one then analyzes this system to determine bounds for these quantities.

    \item
        \textbf{Linearization of the ODE system: } In the second step, we consider the \emph{best possible behaviour} of the terms involving further $\partial_{\theta}$-derivatives i.e.~the non-spatially homogeneous corrections. We do this by taking commuting $\partial_{\theta}$ with the ODE system in Step 1, resulting in a new \emph{linear} ODE system for new quantities whose coefficients are given by the solution (i.e.~some orbit) of the ODE system in Step 1.

    \item
        \textbf{Energy estimates: } Finally, one must ensure that we can close our argument without loss of derivatives (necessary due to the existence of top order terms such as $\partial_{\theta}^2 P$ in \eqref{eq:P_evol}.) One achieves this via $L^2$ energy estimates, which are allowed to blow up but only at a controlled rate as $t \to 0$.
\end{enumerate}

We now explain in a little more detail how these steps apply to the Gowdy symmetric system. For Step 1, with $\mathscr{P} = - t \partial_t P$ and $\mathscr{Q} = e^P t \partial_{\theta} Q$, the spatially homogeneous ODE system one gets is:
\begin{equation} \label{eq:ode_pq}
    t \partial_t \mathscr{P} = \mathscr{Q}^2, \qquad t \partial_t \mathscr{Q} = (1 - \mathscr{P}) \mathscr{Q}.
\end{equation}
(Eventually we also include another ODE variable $\mathscr{R} = e^P t \partial_t Q$ but we ignore $\mathscr{R}$ this for now.) The upshot is that one can study the 2-dimensional ODE system \eqref{eq:ode_pq} via phase plane analysis, as we see in Figure~\ref{fig:phase_portrait} below. The direction of the arrows in Figure~\ref{fig:phase_portrait} is with respect to $t \downarrow 0$.

\begin{figure}[ht]
    \centering
    \begin{tikzpicture}[scale=2, decoration={markings, 
        mark=at position 0.3 with {\arrow{>}}, mark=at position 0.7 with{\arrow{>}}}]
        \draw[->] (-2.5,0) -- (2.5, 0) node[right, xshift=4pt] {$\mathscr{Q} \coloneqq e^P t \partial_{\theta} Q$};
        \draw[->] (0,-0.5) -- (0, 2.5) node[above, yshift=4pt] {$\mathscr{P} \coloneqq - t \partial_t P$};
        
        \foreach \pos in {-2, -1, 0, 1, 2}
          \draw[shift={(\pos,0)}] (0pt,2pt) -- (0pt,-2pt) node[below] {$ $};
        \foreach \pos in {0 ,1,2}
          \draw[shift={(0,\pos)}] (2pt,0pt) -- (-2pt,0pt) node[left] {$ $};
        
        \foreach \alf in {0.25, 0.5, 0.75, 1.0, 1.25}
            \draw[thick, variable=\th, domain=0:3.1415, samples=100, postaction={decorate}]
                plot ({\alf * sin(deg(\th))}, {1 + \alf * cos(deg(\th))});
        \foreach \alf in {0.25, 0.5, 0.75, 1.0, 1.25}
            \draw[thick, dashed, variable=\th, domain=0:3.1415, samples=100, postaction={decorate}]
                plot ({- \alf * sin(deg(\th))}, {1 + \alf * cos(deg(\th))});
        
        \fill[blue] (0, 1) circle (1pt)
            node [left ,fill=white,xshift=-4pt] {$1$};
        \fill[blue] (0, 1.5) circle (1pt)
            node [left ,fill=white,xshift=-4pt] {$\alpha$};
        \fill[blue] (0, 0.5) circle (1pt)
            node [left ,fill=white,xshift=-4pt] {$2 - \alpha$};
    \end{tikzpicture}
    \captionsetup{justification = centering}
    \caption{Phase portrait showing the dynamics of $\mathscr{P}$ and $\mathscr{Q}$ towards $t = 0$ in the spatially homogeneous case. A bounce is a transition from {\color{blue} $\alpha$} to {\color{blue} $2 - \alpha$}.}
    \label{fig:phase_portrait}
\end{figure}
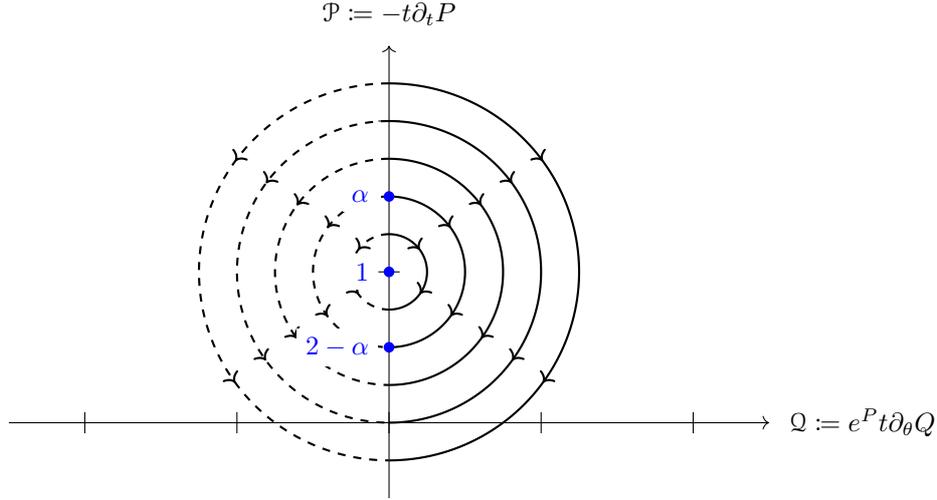

The ODE system \eqref{eq:ode_pq} contains a line of fixed points at $\mathscr{Q} = 0$. Each of these fixed points represents an exact Kasner solution\footnote{A Kasner solution to the Einstein equations \eqref{eq:einstein} is spatially homogeneous and takes the form of \eqref{eq:bkl} with $p_I(x)$ constant and $\omega^I$ equal to the exact differential $d x^I$ for coordinates $x^I$. See \cite{kl63}, or the original paper of Kasner \cite{kasner}.} with exponents $p_I$ as in \eqref{eq:kasnerlike_intro} where $V(\theta)$ is constant and equal to the value of $\mathscr{P}$ at the fixed point. These fixed points are (orbitally) stable as $t \downarrow 0$ if $\mathscr{P} \leq 1$ and unstable otherwise.

The dynamics of the ODE system in the remaining region $\mathscr{Q} \neq 0$ may be described as a union of \emph{heteroclinic orbits} linking an unstable fixed point to a stable fixed point. In fact, one may solve the system exactly using a conserved quantity $\mathscr{K} = (\mathscr{P}-1)^2 + \mathscr{Q}^2$. Thus the heteroclinic orbits, which we identify as BKL bounces, link the unstable fixed point $(\mathscr{P} = \alpha, \mathscr{Q} = 0)$ to the stable fixed point $(\mathscr{P} = 2 - \alpha, \mathscr{Q} = 0)$, for any $\alpha > 1$.

We link this to our conditions preceding Theorem~\ref{thm:stab_rough}. We note that the condition \eqref{eq:weak_subcriticality} of weak subcriticality together with the conserved quantity $\mathscr{K} = (\mathscr{P} - 1)^2 + \mathscr{Q}^2$ means that any $(\mathscr{P}, \mathscr{Q})$-orbit associated to $\mathcal{O}_{\upgamma, \upzeta}$ remains in the bounded region $\{ \mathscr{K} \leq (1 - \upgamma)^2 \}$. This will be essential in Steps 2 and 3.

In Step 2, we commute the ODEs \eqref{eq:ode_pq} with $\partial_{\theta}$.  This yields:
\begin{equation} \label{eq:ode_lm}
    t \partial_t (\partial_{\theta} \mathscr{P}) = 2 \mathscr{Q} \partial_{\theta} \mathscr{Q}, \qquad t \partial_t (\partial_{\theta} \mathscr{Q}) = - \mathscr{Q} \partial_{\theta} \mathscr{P} + (1 - \mathscr{P}) \partial_{\theta} \mathscr{Q}.
\end{equation}
We see this is a \emph{linear} ODE system for $\partial_{\theta} \mathscr{P}$ and $\partial_{\theta} \mathscr{Q}$ whose coefficients are functions of the dynamical $\mathscr{P}$ and $\mathscr{Q}$ orbit from Step 1. The fact that the orbit remains in the bounded region $\{ \mathscr{K} \leq (1 - \upgamma)^2 \}$ allows one to prove, using the linear system \eqref{eq:ode_lm}, that $|\partial_{\theta} \mathscr{P}| + |\partial_{\theta} \mathscr{Q}| = O(t^{- (1 - \upgamma)})$. 

So these $\partial_{\theta}$-derivatives are permitted to blow up as $t \to 0$, but at a controlled rate. This blow-up rate is sharp in the following sense: if our $(\mathscr{P}, \mathscr{Q})$-orbit is a constant orbit at an unstable fixed point with $\mathscr{P} = 2 - \upgamma - \varepsilon$ with $\varepsilon$ small then generic solutions of \eqref{eq:ode_lm} have $\partial_{\theta} \mathscr{Q} = O(t^{-(1 - \upgamma) + \varepsilon})$. This choice of $(\mathscr{P}, \mathscr{Q})$-orbit is related to spikes; it is unsurprising that near spikes certain $\partial_{\theta}$-derivatives will blow up.

But in any case, the outcome of Step 2 is that heuristically, each $\partial_{\theta}$-derivative comes with a loss of $t^{-(1- \upgamma)}$, for instance one expects $\partial_{\theta}^2 P = O(t^{- 2 (1 - \upgamma)})$; this confirms the AVTD expectation that $\partial_{\theta}$-derivatives cost fewer powers of $t$ than $\partial_t$-derivatives (note this requires $\upgamma > 0$ and is exactly where we require that our spacetimes are ``low-velocity'' with $0 < \mathscr{P} < 2$). This is good news, since in \eqref{eq:P_evol} the term $t^2 \partial_{\theta}^2 P$ will be $O(t^{2 \upgamma})$, and thus negligible as $t \to 0$. So throwing this term away to derive the first ODE of \eqref{eq:ode_pq} in Step 1 is justifiable, at least at the heuristic level. Similar arguments hold for the other terms we threw away in Steps 1 and 2.

Step 3 concerns making the argument of the above paragraph rigorous. In particular, we must overcome the issue of \emph{derivative loss} due to terms such as $\partial_{\theta}^2 P$. This is achieved via energy estimates; let $\mathcal{E}^{(K)}(t)$ represent an 
$L^2$-based energy which resembles \eqref{eq:wavemap_energy} but controlling $\partial_{\theta}^K P$ and $\partial_{\theta}^K Q$ in place of $P$ and $Q$, see already Definition~\ref{def:energy}. We explain the main energy estimate: commuting \eqref{eq:P_evol} with $\partial_{\theta}^K$ yields the following
\begin{multline*}
    (t \partial_t)^2 \partial_{\theta}^K P - (t \partial_{\theta})^2 \partial_{\theta}^K P = \overbrace{2 e^{2P} \left( t \partial_t Q \cdot t \partial_t \partial_{\theta}^K Q - t \partial_{\theta} Q \cdot t \partial_{\theta}^{K+1} Q \right)}^{(\mathrm{I})} \\
    + \underbrace{2 e^{2P} \left( (t \partial_t Q)^2 - (t \partial_{\theta} Q)^2 \right) \partial_{\theta}^K P}_{(\mathrm{II})} + \underbrace{\text{lower order quantities}}_{(\mathrm{III})}.
\end{multline*}
A standard argument will then yield the differentiated energy estimate:
\[
    \left| t \frac{d}{dt} \mathcal{E}^{(K)}(t) \right| \leq \sqrt{ 2 \mathcal{E}^{(K)}(t) } \cdot \left( \| (\mathrm{I}) + (\mathrm{II}) + (\mathrm{III}) \|_{L^2} \right).
\]

It remains to estimate each of $(\mathrm{I})$, $(\mathrm{II})$ and $(\mathrm{III})$ in $L^2$. Here we need to use Step 1 and Step 2, the key observation being that $e^P t \partial_{\theta} Q$, represented by $\mathscr{Q}$ in Step 1, is bounded by $1 - \upgamma$. It turns out that $e^P t \partial_t Q$ is similarly bounded, and thus for appropriately defined $\mathcal{E}^{(K)}(t)$ the terms $(\mathrm{I})$ and $(\mathrm{II})$ may be bounded by:
\[
    \| (\mathrm{I}) \|_{L^2} + \| (\mathrm{II}) \|_{L^2} \leq A_* \sqrt{ 2 \mathcal{E}_{\phi}^{(K)} (r) },
\]
where $A_*$ is a constant only depending on $\upgamma$.

For the remaining lower order terms $(\mathrm{III})$, we will also have to use Step 2. To simplify the exposition, in this sketch we replace $(\mathrm{III})$ by just the single term $c \cdot \partial_{\theta} (e^P t \partial_{\theta} Q) \cdot e^P t \partial_{\theta}^{K} Q$, where $c$ may depend on $K$. One can think of this as $c \cdot \partial_{\theta} \mathscr{Q} \cdot e^P t \partial_{\theta}^K Q$, and thus via Step 2 is $O(t^{-(1- \upgamma)}) \cdot e^P t \partial_{\theta}^K Q$. This seems alarming at first sight, since this $O(t^{- (1 - \upgamma)})$ prefactor is not integrable with respect to $\frac{dt}{t}$ as $t \to 0$. But $e^P t \partial_{\theta}^K Q$ is not dependent on the $K$th order energy $\mathcal{E}^{(K)}(t)$, but instead $\mathcal{E}^{(K-1)}(t)$. So
\[
    \| (\mathrm{III}) \|_{L^2} \leq C_{\upgamma, K} \, t^{- (1 - \upgamma)} \sqrt{ 2 \mathcal{E}^{(K-1)}(t)}.
\]
Our eventual estimate for $(\mathrm{III})$ will also involve $\mathcal{E}^{(k)}(t)$ for all $k < K$, see already Propositions~\ref{prop:P_energy} and \ref{prop:Q_energy}. 

But sticking to our simplified $(\mathrm{III})$, combining all of the above, as well as combining with a similar energy estimate using the $Q$-wave equation \eqref{eq:Q_evol}, yields
\begin{equation} \label{eq:energy_der_intro}
    \left| t \frac{d}{dt} \mathcal{E}^{(K)}(t) \right| \leq 2 A_* \mathcal{E}^{(K)}(t)  + 2  C_{\upgamma, K} \, t^{- (1 - \upgamma)} \sqrt{ \mathcal{E}^{(K)} (t) } \cdot \sqrt{ \mathcal{E}^{(K-1)}(t)}.
\end{equation}
When $K = 0$, the last term on the right hand side is absent. Because of the $2 A_*$, even the $0$th order energy $\mathcal{E}^{(0)}(t)$ will blow up as $t^{-2 A_*}$ as $t \to 0$. Further, the appearance of $t^{- (1 - \upgamma)}$ means that as $K$ increases the rate of blow up also increases. Indeed, one can use \eqref{eq:energy_der_intro} to show:
\[
    \mathcal{E}^{(K)}(t) \leq D_{\upgamma, K, \upzeta} \, t^{-2 A_* - 2 K (1 - \upgamma)},
\]
where $D_{\upgamma, K, \upzeta}$ depends on $\upgamma$ and the data, as well as the regularity index $K$. 

It is crucial that $A_*$ is \emph{independent of $K$}. The reason for this is that one now applies an interpolation argument together with the above, much like what is done in \cite{FournodavlosRodnianskiSpeck}, to show e.g.~that $\partial_{\theta}^2 P = O(t^{2 (1 - \upgamma) - \updelta})$, with $\updelta \to 0$ as $N \to \infty$, where $N$ is the maximum regularity index for which one applies the energy estimate. This means that one may interpret the ignored terms in Step 1 and Step 2 as negligible errors, as claimed.

To apply Steps 1 to 3 to the nonlinear problem, one uses a standard bootstrap argument. In our proof, we shall actually perform Step 3 first. That is, one bootstraps $L^{\infty}$ bounds on $0$th order and $1$st order quantities, see \eqref{eq:bootstrap_pq}--\eqref{eq:bootstrap_dth2}, and uses these to derive the energy estimate \eqref{eq:energy_der_intro}. Then one combines this estimate, an interpolation argument, and the ODE analysis of Steps 1 and 2, to improve the bootstrap assumptions. This will yield the stability result Theorem~\ref{thm:stab_rough}, while the bounce result Theorem~\ref{thm:bounce_rough} follows from more detailed ODE analysis.

\subsection*{Acknowledgements}

The author thanks Mihalis Dafermos for valuable advice in the writing of this manuscript. We also thank Igor Rodnianski and Hans Ringstr\"om for insightful discussions and suggestions.



\section{Precise statement of the main theorems} \label{sec:theorem}

Below we state the detailed versions of our main results, the stability result Theorem~\ref{thm:global} and the bounce result Theorem~\ref{thm:bounce}. We first introduce various parameters $\upgamma, \upgamma', \upgamma'', \upzeta, N, \updelta, \ldots$ that appear throughout.

\subsection{Setup of the initial data} \label{sub:thm_data}

\subsubsection{Notation and key parameters} \label{subsub:param}

\begin{itemize}
    \item
        The real number $0 < \upgamma < 1$ captures the size of $- t \partial_t P$ allowable in our results. Many of the subsequent parameters will depend on $\upgamma$, as do the constants appearing in our quantitative estimates. We further define $\upgamma', \upgamma''$ to be related constants that satisfy
        \begin{equation} \label{eq:gamma_relation}
            0 < \frac{\upgamma}{2} < \upgamma' < \upgamma < \upgamma'' < 1.
        \end{equation}
        
    \item
        The natural number $N$ represents the largest number of $\partial_{\theta}$-derivatives with which we commute the Gowdy system \eqref{eq:P_evol}--\eqref{eq:Q_evol} in deriving energy estimates. Our choice of $N$ depends on $\upgamma, \upgamma', \upgamma''$, and we expect $N \to \infty$ as $\upgamma \to 0$. We often use $K$ to denote an integer with $0 \leq K \leq N$.

    \item
        The real number $\upzeta > 0$ represents the maximal admissible size of initial data in the $L^2$ sense. See already the initial data bounds \eqref{eq:data_energy_boundedness_P}--\eqref{eq:data_energy_boundedness_Q} below. 

    \item
        The real number $C_* > 0$ is used in the bootstrap argument, see Section~\ref{sub:bootstrap}. We choose $C_*$ depending on $\upgamma$, $\upgamma'$, $\upgamma''$ and $\upzeta$, though we do not make this explicit. The number $A_* > 0$ will depend on $C_*$ and represents the rate at which our energies may blow up.

    \item
        The real number $t_* > 0$ represents how close to the $t = 0$ singularity our initial data is required to be in order to obtain our results. That is, our results apply for initial data at $t = t_0$ for $0 < t_0 \leq t_*$. Our choice of $t_*$  will depend on $\upgamma$ and $\upzeta$ and we expect $t_* \to 0$ as either $\upgamma \to 0$ or $\upzeta \to \infty$.

    \item
        Other constants, often denoted $C$, will be allowed to depend on all of aforementioned parameters. The notation $\updelta$ will be used to represent quantities depending on all of these parameters (e.g.~$\upgamma, \upzeta, t_*$) such that $\updelta \downarrow 0$ as $t_* \downarrow 0$. We often allow abuse of notation such as ``$\updelta + \updelta = \updelta$'' or ``$C_* \updelta = \updelta$'' etc.

\end{itemize}

\subsubsection{Initial \texorpdfstring{$L^{\infty}$}{L\_infty} and \texorpdfstring{$L^2$}{L\_2} bounds}

For the Gowdy symmetric system \eqref{eq:P_evol}--\eqref{eq:Q_evol}, initial data is given for $(P, Q)$ and their $t \partial_t$--derivatives at $\{ t = t_0 \}$ for some $0 < t_0 \leq t_*$.
That is, we let $P_D, Q_D, \dot{P}_D, \dot{Q}_D: \mathbb{S}^1 \to \R$ with

\begin{equation} \label{eq:data}
    (P_D, Q_D, \dot{P}_D, \dot{Q}_D) = (P, Q, t \partial_t P, t \partial_t Q)|_{t = t_0}
\end{equation}

We characterize an open set $\mathcal{O}_{\upgamma, \upzeta}$ (perhaps more precisely $\mathcal{O}_{\upgamma, \upgamma', \upgamma'', \upzeta}$) to which our results apply. Initial data $(P_D, Q_D, \dot{P}_D, \dot{Q}_D)$ in this open set will obey the following pointwise bounds: 
\begin{gather} \label{eq:data_linfty}
    - \dot{P}_D \geq \upgamma'', \quad (1 - \dot{P}_D)^2 + (e^{P_D} t_0 \partial_{\theta} Q_D)^2 \leq (1 - \upgamma'')^2
    \\[0.3em] \label{eq:data_linfty2}
    e^{P_D} \dot{Q}_D \leq \upzeta t_0^{\upgamma}, \quad e^{P_D} \leq \upzeta t_0^{\upgamma},
\end{gather}
as well as the following $L^2$ energy bounds at $t = t_0$, where $N$ depends on $\upgamma$ and $\upzeta > 0$:
\begin{gather} \label{eq:data_energy_boundedness_P}
    \frac{1}{2} \sum_{K = 0}^N \left(  \int_{\mathbb{S}^1} (\partial_{\theta}^K \dot{P}_D)^2 + (t_0 \partial_{\theta}^{K+1} P_D)^2  + (t_0^{\upgamma} \partial_{\theta}^K P_D)^2 \, d\theta \right) \leq \upzeta, \\[0.4em]
    \label{eq:data_energy_boundedness_Q}
    \frac{1}{2} \sum_{K = 0}^N \left(  \int_{\mathbb{S}^1} e^{2P_D} \left( ( \partial_{\theta}^K \dot{Q}_D )^2 + (t_0 \partial_{\theta}^{K+1} Q_D)^2 \right) \, d\theta \right) \leq \upzeta.
\end{gather}
Finally, one assumes $0 < t_0 \leq t_*$, with $t_*$ is chosen small depending on $\upgamma, \upzeta, \upgamma', \upgamma''$.

\subsection{Theorem~\ref{thm:global} -- Stability}

Theorem~\ref{thm:global} is the precise version of our stability result, see the rough version Theorem~\ref{thm:stab_rough}.

\begin{theorem}[Stability] \label{thm:global}
    Consider initial data $(P_D, Q_D, \dot{P}_D, \dot{Q}_D)$ for the Gowdy symmetric system \eqref{eq:P_evol}--\eqref{eq:Q_evol}, which obeys \eqref{eq:data_linfty}, \eqref{eq:data_linfty2}, \eqref{eq:data_energy_boundedness_P}, \eqref{eq:data_energy_boundedness_Q} with $0 < t_0 \leq t_*$. Then for $N$ sufficiently large depending on $\upgamma$ and for $t_*$ chosen sufficiently small, depending on $\upgamma$, $N$ and $\upzeta$, for $0 < t \leq t_0$ we have the following $L^{\infty}$ bounds:
    \begin{equation} \label{eq:global_linfty}
        \upgamma \leq - t \partial_t P \leq 2 - \upgamma, \quad |e^P t \partial_{\theta} Q|  \leq 1 - \upgamma.
    \end{equation}
    There exists $C = C(\upgamma, \upgamma', \upgamma'', \upzeta)$ and $A_* = A_*(\upgamma)$ such that one has the $L^2$ energy bound:
    \begin{equation} \label{eq:global_energy}
        \sum_{K = 0}^N \int_{\mathbb{S}^1} (t \partial_t \partial_{\theta}^K P)^2 + t^2 (\partial_{\theta}^{K+1} P)^2 + ( \partial_{\theta}^K P)^2 + e^{2P} (t \partial_t \partial_{\theta}^K Q)^2 + e^{2P} t^2 (\partial_{\theta}^{K+1} Q)^2 \, d\theta \leq C t^{-2A_* -2K (1 - \upgamma')}.
    \end{equation}

    Furthermore, there exists $V: \mathbb{S}^1 \to \R$ bounded so that $-t \partial_t  P(t, \theta) \to V(\theta)$ pointwise as $t \to 0$. Finally for $k = 1$ whenever $\upgamma > \frac{1}{3}$ and $k = 0$ otherwise, there exists a $C^k$ function $q: \mathbb{S}^1 \to \R$ so that $Q(t, \theta) \to q(\theta)$ as $t \to 0$ in the $C^k$ topology. 
\end{theorem}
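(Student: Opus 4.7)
The plan is to execute the three-step scheme outlined in Section~\ref{sub:intro_proof} within a continuous bootstrap argument. I first introduce bootstrap assumptions on an interval $(t_1, t_0] \subseteq (0, t_0]$ for the zeroth-order quantities $\mathscr{P} = -t\partial_t P$, $\mathscr{Q} = e^P t\partial_\theta Q$, $\mathscr{R} = e^P t\partial_t Q$ and their first $\partial_\theta$-derivatives: namely \eqref{eq:global_linfty} with $\upgamma$ replaced by the intermediate $\upgamma'$ where convenient, the bound $|\mathscr{R}| \leq C_*$, and $\|\partial_\theta \mathscr{P}\|_{L^\infty} + \|\partial_\theta \mathscr{Q}\|_{L^\infty} + \|\partial_\theta \mathscr{R}\|_{L^\infty} \leq C_* t^{-(1-\upgamma')}$. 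The aim is to improve each of these on $(t_1, t_0]$ and conclude $t_1 = 0$ by continuity.

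For Step 1, I rewrite the wave equations \eqref{eq:P_evol}--\eqref{eq:Q_evol} restricted to a constant-$\theta$ line as ODEs for $(\mathscr{P}, \mathscr{Q}, \mathscr{R})$; terms involving $\partial_\theta^2 P$ and $\partial_\theta^2 Q$ appear as forcing that Step 3 must control. The key structural ingredient is the quadratic invariant $\mathscr{K} = (\mathscr{P}-1)^2 + \mathscr{Q}^2$, which satisfies $t\partial_t \mathscr{K} = 0$ modulo these forcing terms; combined with the initial-data bound $\mathscr{K}|_{t=t_0} \leq (1-\upgamma'')^2$ from \eqref{eq:data_linfty} this propagates $\mathscr{K} \leq (1-\upgamma')^2$, improving the zeroth-order bootstrap on $\mathscr{P}$ and $\mathscr{Q}$ (and a similar argument handles $\mathscr{R}$). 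For Step 2, I commute with $\partial_\theta$ to obtain a linear ODE system for $(\partial_\theta \mathscr{P}, \partial_\theta \mathscr{Q}, \partial_\theta \mathscr{R})$ whose coefficient matrix is controlled by $\mathscr{K} \leq (1-\upgamma')^2$; a Grönwall argument then yields $|\partial_\theta \mathscr{P}| + |\partial_\theta \mathscr{Q}| + |\partial_\theta \mathscr{R}| \leq C t^{-(1-\upgamma')}$, strictly improving $C_*$, again provided the higher-order error forcing is controlled in $L^\infty$.

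The heart of the argument, and the main obstacle, is Step 3 together with its synthesis with the previous two steps. Commuting \eqref{eq:P_evol}--\eqref{eq:Q_evol} with $\partial_\theta^K$ and running the standard $L^2$ energy argument (with a weight chosen compatibly with Theorem~\ref{thm:asymp_smooth}) yields an identity of the schematic form
\begin{equation*}
    \left| t \tfrac{d}{dt} \mathcal{E}^{(K)}(t) \right| \;\leq\; 2 A_* \, \mathcal{E}^{(K)}(t) \;+\; C_{\upgamma,K}\, t^{-(1-\upgamma')} \sqrt{\mathcal{E}^{(K)}(t)} \sum_{k < K} \sqrt{\mathcal{E}^{(k)}(t)},
\end{equation*}
where $A_* = A_*(\upgamma)$ is extracted from the zeroth-order bootstrap on $\mathscr{Q}$ and $\mathscr{R}$, and crucially is \emph{independent of $K$}. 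Iterating Grönwall from $K = 0$ upward produces the blow-up rate in \eqref{eq:global_energy}. To close the loop I must feed this bound back into Steps 1 and 2: by Gagliardo-Nirenberg interpolation between $\mathcal{E}^{(N)}$ and the bootstrap $L^\infty$ bounds, and using Sobolev embedding on $\mathbb{S}^1$, I obtain $\|\partial_\theta^2 P\|_{L^\infty}, \|\partial_\theta^2 Q\|_{L^\infty} \lesssim t^{-2(1-\upgamma') - \updelta}$ with $\updelta \to 0$ as $N \to \infty$. The forcing terms in Steps 1 and 2 come with a prefactor $t^2$, yielding an error of size $O(t^{2\upgamma' - \updelta})$; choosing $N$ large enough that $\updelta < 2\upgamma'$ and $t_*$ small, this is negligible, strictly improving all bootstrap constants.

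For the qualitative conclusions: the improved bound $t\partial_t \mathscr{P} = \mathscr{Q}^2 + O(t^{2\upgamma' - \updelta})$ at each fixed $\theta$ shows $\mathscr{P}(t,\theta)$ is almost-monotone decreasing in $t\downarrow 0$ (and bounded), hence converges to some $V(\theta) \in [\upgamma, 2-\upgamma]$. Alternatively, phase-plane analysis of the limiting autonomous ODE identifies the $\omega$-limit set (as $t\downarrow 0$) with $\{\mathscr{Q} = 0\}$, and the approximate conservation of $\mathscr{K}$ pins down $V(\theta)$. For $Q(t,\theta) \to q(\theta)$, integrating $\mathscr{P} \geq \upgamma$ gives $e^{-P} \leq C t^\upgamma$, so $|\partial_t Q| = |e^{-P}\mathscr{R}/t| \leq C t^{\upgamma-1}$, which is integrable on $(0,t_0]$ and yields uniform convergence. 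The $C^1$ statement under $\upgamma > \tfrac{1}{3}$ uses $|\partial_t \partial_\theta Q| \leq C(|\partial_\theta \mathscr{R}| + |\mathscr{R}||\partial_\theta P|) e^{-P} t^{-1}$; combining $e^{-P} \lesssim t^{\upgamma}$, the first-order bound $|\partial_\theta \mathscr{R}| \lesssim t^{-(1-\upgamma')}$ from Step 2, and an interpolated bound $\|\partial_\theta P\|_{L^\infty} \lesssim t^{-(1-\upgamma')-\updelta}$ shows $|\partial_t \partial_\theta Q| \lesssim t^{3\upgamma' - 2 - \updelta}$, which is integrable on $(0, t_0]$ precisely when $3\upgamma' > 1$, an assumption implied by $\upgamma > \tfrac{1}{3}$ after choosing $\upgamma'$ suitably.
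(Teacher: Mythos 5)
Your proposal correctly captures the overall architecture of the paper's proof: a continuous bootstrap argument alternating between pointwise ODE analysis for the renormalized quantities $(\mathscr{P},\mathscr{Q},\mathscr{R})$ and their first $\partial_\theta$-derivatives, an $L^2$ energy hierarchy at orders $0\leq K\leq N$, and a Gagliardo--Nirenberg interpolation loop that feeds the energies back into the pointwise bounds with a $\updelta$-loss vanishing as $N\to\infty$. The structural ingredients you isolate---the approximate conserved quantity $\mathscr{K}=(\mathscr{P}-1)^2+\mathscr{Q}^2$, the $K$-independence of the blow-up exponent $A_*$, and the almost-monotonicity of $\mathscr{P}$ for extracting the limit $V(\theta)$---are exactly the paper's.

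However, your Step 2 (Gr\"onwall for the linearized first-order quantities) contains a genuine gap. You assert that the linearized coefficient matrix is ``controlled by $\mathscr{K}\leq(1-\upgamma')^2$'' and deduce the rate $t^{-(1-\upgamma')}$ directly. This is not available: the matrix in \eqref{eq:mn_matrix} has Hilbert--Schmidt norm roughly $\sqrt{(\mathscr{P}-1)^2+5\mathscr{Q}^2+5\mathscr{R}^2}$, which exceeds $1-\upgamma$ by a factor $\approx\sqrt{5}$ precisely during a bounce, i.e.\ when $|\mathscr{Q}|$ is bounded away from zero. A naive Gr\"onwall with a uniform operator-norm bound therefore only gives a rate like $t^{-\sqrt{5}(1-\upgamma)}$, which is too lossy to close the AVTD interpolation argument unless one imposes $\upgamma\gtrsim 1-1/\sqrt{5}$, a strictly stronger restriction than the theorem allows. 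The paper's Lemma~\ref{lem:ode_variation} (Case~2) rescues this with the observation that $t\partial_t\mathscr{P}\gtrsim\mathscr{Q}^2$ together with the uniform boundedness of $\mathscr{P}$ forces the set $\{t:|\mathscr{Q}(t)|\geq c\}$ to have measure bounded \emph{independently of $t\to 0$} with respect to $dt/t$; hence the bad region contributes only a fixed multiplicative constant to the Gr\"onwall exponential, and the sharp $t^{-(1-\upgamma)}$ rate from the good region survives. Without this measure estimate your bootstrap does not close.

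A secondary issue concerns the $C^1$ convergence of $Q$ for $\upgamma>\tfrac13$. Bounding $\partial_t\partial_\theta Q$ through $e^{-P}(\partial_\theta\mathscr{R}-\partial_\theta P\cdot\mathscr{R})/t$ together with $e^{-P}\lesssim t^{\upgamma}$ and $|\partial_\theta\mathscr{R}|\lesssim t^{-(1-\upgamma')}$ gives a leading contribution of order $t^{\upgamma+\upgamma'-2}$, not $t^{3\upgamma'-2-\updelta}$ as you state; this is integrable only when $\upgamma+\upgamma'>1$, giving the threshold $\upgamma>\tfrac12$ rather than $\tfrac13$. The paper obtains the sharper threshold by commuting the $Q$-wave equation with $\partial_\theta$ after multiplying by the weight $t^{-\upsigma}$ (see \eqref{eq:tdtqt}) and exploiting the sign of the damping coefficient $-2t\partial_t P-\upsigma>0$; this absorbs the dangerous $\partial_\theta\mathscr{R}$-type contribution, leaving only source terms of size $t^{3\upgamma'-1-\upsigma}$, integrable for $\upsigma<3\upgamma'-1$.
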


\subsection{Theorem~\ref{thm:bounce} -- BKL bounces}

In Theorem~\ref{thm:bounce}, we make precise our bounce result, see the rough version in Theorem~\ref{thm:bounce_rough}. The idea is that certain quantities will obey a system of nonlinear ODEs, plus error terms, and that one may use properties of the ODEs to understand certain aspects of the dynamics.

\begin{theorem}[Bounces] \label{thm:bounce}
    Let $(P, Q)$ be a solution to the Gowdy symmetric system \eqref{eq:P_evol}--\eqref{eq:Q_evol}, arising from initial data $(P_D, Q_D, \dot{P}_D, \dot{Q}_D)$ obeying \eqref{eq:data_linfty}--\eqref{eq:data_energy_boundedness_Q} and $0 < t_0 \leq t_*$ as in Theorem~\ref{thm:global}. 

    Let $\gamma: (0, t_0] \to (0, t_0] \times \mathbb{S}^1$ be a timelike curve parameterized by the $t$-coordinate, and let $\mathscr{P}_{\gamma}(t) = - t \partial_t P(\gamma(t))$ and $\mathscr{Q}_{\gamma}(t) = e^P t \partial_{\theta} Q (\gamma(t))$. Then there exist error terms $\mathscr{E}_{\mathscr{P}}(t), \mathscr{E}_{\mathscr{Q}}(t)$ depending on $\gamma$ but with $|\mathscr{E}_{\mathscr{P}}(t)|, |\mathscr{E}_{\mathscr{Q}}(t)| \leq t^{\upgamma'}$ uniformly in the choice of $\gamma$, such that
    \begin{equation} \label{eq:bounce_ode}
        t \frac{d}{dt} \mathscr{P}_{\gamma} = \mathscr{Q}_{\gamma}^2 + \mathscr{E}_{\mathscr{P}}, 
        \qquad t \frac{d}{dt} \mathscr{Q}_{\gamma} = (1 - \mathscr{P}_{\gamma}) \mathscr{Q}_{\gamma} + \mathscr{E}_{\mathscr{Q}}.
    \end{equation}

    Since $\gamma$ is timelike there exists $\theta_0 \in \mathbb{S}^1$ such that $\gamma(t) \to (0, \theta_0)$ as $t \to 0$. Further, $\mathscr{Q}_{\gamma}(t)$ converges to $0$ as $t \to 0$, while $\mathscr{P}_{\gamma}(t)$ converges to $\mathscr{P}_{\gamma, \infty} = V(\theta_0)$. Finally, there exists $C = C(\upgamma, \upgamma', \upgamma'', \upzeta) > 0$ such that:
    \begin{enumerate}[(i)]
        \item
            If the $\lim_{t \to 0} \partial_{\theta} Q( \gamma(t))$ does not converge to $0$, then necessarily $V(\theta_0) = \mathscr{P}_{\gamma, \infty} \leq 1$ and moreover
            \begin{equation} \label{eq:P_limit}
                \left| \mathscr{P}_{\gamma, \infty} - \min \{ \mathscr{P}_{\gamma}(t_0), 2 - \mathscr{P}_{\gamma}(t_0) \} \right| \leq C \cdot (t_0^{\upgamma'} + \mathscr{Q}^2_{\gamma}(t_0))^{1/2} .
            \end{equation}
        \item
            Let $\upgamma > \frac{1}{2}$. By Theorem~\ref{thm:stab_rough}, $\partial_{\theta} q(\theta_0)$ exists, and $\lim_{t \to 0} \partial_{\theta} Q(\gamma(t)) \to \partial_{\theta} q(\theta_0)$. If $\partial_{\theta} q(\theta_0) = 0$, then one instead has:
            \begin{equation} \label{eq:P_limit2}
                \left| \mathscr{P}_{\gamma, \infty} - \mathscr{P}_{\gamma}(t_0) \right|
                \leq C \cdot t_0^{2\upgamma' - 1}.
            \end{equation}
            where $\frac{1}{2} < \upgamma' < \upgamma$ is chosen appropriately.
    \end{enumerate}
\end{theorem}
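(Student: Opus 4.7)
The plan is to first derive the advertised ODEs by differentiating along $\gamma$ and invoking the wave equations \eqref{eq:P_evol}--\eqref{eq:Q_evol}, next to control the resulting error terms uniformly in $\gamma$ by combining the stability and energy estimates of Theorem~\ref{thm:global} with a Sobolev interpolation, and finally to perform a phase-plane analysis of the idealized autonomous system to obtain the limits and dispose of cases (i) and (ii).

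\textbf{Step 1: Deriving the ODEs.} Along $\gamma(t) = (t, \theta(t))$ the chain rule gives $t\frac{d}{dt}\mathscr{P}_\gamma = t\partial_t\mathscr{P} + t\theta'(t)\,\partial_\theta \mathscr{P}$. Since $t\partial_t\mathscr{P} = -(t\partial_t)^2 P$, equation \eqref{eq:P_evol} yields $t\partial_t\mathscr{P} = \mathscr{Q}^2 - \mathscr{R}^2 - (t\partial_\theta)^2 P$, where $\mathscr{R} := e^P t\partial_t Q$. Hence
\[
\mathscr{E}_\mathscr{P} = -\mathscr{R}^2 - (t\partial_\theta)^2 P + t\theta'(t)\,\partial_\theta\mathscr{P}.
\]
For $\mathscr{Q} = e^P t\partial_\theta Q$, using the kinematic identity $[t\partial_t, t\partial_\theta] = t\partial_\theta$, a direct computation (without invoking \eqref{eq:Q_evol}) gives $t\partial_t\mathscr{Q} = (1-\mathscr{P})\mathscr{Q} + t\partial_\theta\mathscr{R} - (t\partial_\theta P)\mathscr{R}$, so
\[
\mathscr{E}_\mathscr{Q} = t\partial_\theta\mathscr{R} - (t\partial_\theta P)\mathscr{R} + t\theta'(t)\,\partial_\theta\mathscr{Q}.
\]

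\textbf{Step 2: Bounding the error terms.} I would establish $|\mathscr{E}_\mathscr{P}|,|\mathscr{E}_\mathscr{Q}| \leq t^{\upgamma'}$ uniformly in $\gamma$ by combining three ingredients from Theorem~\ref{thm:global}: the pointwise bounds \eqref{eq:global_linfty} and a propagation of \eqref{eq:data_linfty2} yielding $e^P, \mathscr{R} = O(t^\upgamma)$; the energy bound \eqref{eq:global_energy}, $\mathcal{E}^{(K)}(t) \leq Ct^{-2A_* - 2K(1-\upgamma')}$ for $0\leq K\leq N$; and a Gagliardo--Nirenberg interpolation on $\mathbb{S}^1$ that converts top-order $L^2$ bounds into pointwise bounds on lower-order derivatives with arbitrarily small loss $\updelta$ once $N$ is large. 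The point is that each $t\partial_\theta$ acting on a controlled quantity costs the small power $t^\upgamma$ (up to $\updelta$-loss), so $(t\partial_\theta)^2 P$, $\mathscr{R}^2$, $t\partial_\theta\mathscr{R}$ and $(t\partial_\theta P)\mathscr{R}$ are all of size $t^{2\upgamma - \updelta}$, while $t\theta'(t)\partial_\theta\mathscr{P}$ and $t\theta'(t)\partial_\theta\mathscr{Q}$ are of size $t^{\upgamma - \updelta}$ (using $|\theta'|\leq 1$). Each is thus $\leq t^{\upgamma'}$ once the gap $\upgamma - \upgamma'$ absorbs $\updelta$ and $t_*$ is chosen small.

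\textbf{Step 3: Asymptotics and the dichotomy.} Set $\mathscr{K}_\gamma := (\mathscr{P}_\gamma - 1)^2 + \mathscr{Q}_\gamma^2$; the ODEs of Step 1 yield $t\frac{d}{dt}\mathscr{K}_\gamma = 2(\mathscr{P}_\gamma-1)\mathscr{E}_\mathscr{P} + 2\mathscr{Q}_\gamma\mathscr{E}_\mathscr{Q} = O(t^{\upgamma'})$, so $\mathscr{K}_\gamma(0) := \lim_{t\to 0}\mathscr{K}_\gamma(t)$ exists with $|\mathscr{K}_\gamma(0) - \mathscr{K}_\gamma(t_0)| \leq Ct_0^{\upgamma'}$. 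Theorem~\ref{thm:global} yields $\mathscr{P}_\gamma(t) \to V(\theta_0)$, whence $\mathscr{Q}_\gamma(t) \to 0$ from $\mathscr{K}_\gamma(0) = (V(\theta_0)-1)^2$. In case (i), the hypothesis that $\partial_\theta Q(\gamma(t))$ does not tend to $0$ combined with $\mathscr{Q}_\gamma = e^P t\partial_\theta Q$ and $e^P = O(t^\upgamma)$ gives a lower bound on the decay rate of $\mathscr{Q}_\gamma$; the linearized $\mathscr{Q}$-ODE near the limit then forces $1 - V(\theta_0) \geq 0$, and \eqref{eq:P_limit} follows from $(V(\theta_0)-1)^2 = \mathscr{K}_\gamma(t_0) + O(t_0^{\upgamma'})$ by selecting the correct root. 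In case (ii), $\upgamma > \frac{1}{2}$ and $\partial_\theta q(\theta_0) = 0$ together with Theorem~\ref{thm:global} upgrade the decay of $\mathscr{Q}_\gamma$ sufficiently that $\int_0^{t_0}\mathscr{Q}_\gamma^2\,\frac{dt}{t}$ is controlled by $t_0^{2\upgamma'-1}$; integrating the $\mathscr{P}_\gamma$-ODE then yields \eqref{eq:P_limit2}.

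\textbf{Main obstacle.} The hardest step is Step 2, namely showing that the error terms, which involve second spatial derivatives of $P$ and derivatives of $\mathscr{R}$, really do decay at the uniform rate $t^{\upgamma'}$ along every timelike curve. This hinges on the fact that $A_*$ in \eqref{eq:global_energy} is \emph{independent} of $K$, so that Gagliardo--Nirenberg interpolation with loss $\updelta = \updelta(N) \to 0$ as $N\to\infty$ converts high-regularity $L^2$ control into pointwise control on intermediate derivatives. A secondary subtlety is case (ii), where the threshold $\upgamma > \frac{1}{2}$ is exactly what is needed to make the relevant logarithmic-type integral of $\mathscr{Q}_\gamma^2$ converge at the required rate $t_0^{2\upgamma'-1}$.
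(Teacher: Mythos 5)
Your overall strategy (pass to ODEs along timelike curves, control errors via the energy hierarchy and interpolation, run a phase-plane / almost-conserved-quantity analysis) matches the paper's proof. Steps 1 and 2 are essentially identical to the argument behind Corollary~\ref{cor:ode} and the discussion in Section~\ref{sub:intro_proof}; your algebraic decomposition for $t\frac{d}{dt}\mathscr{Q}_\gamma$ via the commutator $[t\partial_t,t\partial_\theta]=t\partial_\theta$ is the same identity appearing in \eqref{eq:Q_aux}, just written slightly differently. However, Step 3 contains a genuine gap.

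The step ``Theorem~\ref{thm:global} yields $\mathscr{P}_\gamma(t)\to V(\theta_0)$, whence $\mathscr{Q}_\gamma(t)\to 0$ from $\mathscr{K}_\gamma(0)=(V(\theta_0)-1)^2$'' is circular as written. The identity $\mathscr{K}_\gamma(0)=(V(\theta_0)-1)^2$ is \emph{equivalent} to $\mathscr{Q}_\gamma(t)\to 0$; you cannot use it to derive $\mathscr{Q}_\gamma\to 0$. You do correctly establish that $\mathscr{K}_\gamma$ converges, and if in addition one knows $\mathscr{P}_\gamma$ converges, then $\mathscr{Q}_\gamma^2$ converges to some limit $L\geq 0$. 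But you must still argue $L=0$. The paper does this by first noting that the boundedness of $\mathscr{P}_\gamma$ together with $t\frac{d}{dt}\mathscr{P}_\gamma=\mathscr{Q}_\gamma^2+\mathscr{E}_\mathscr{P}$ forces $\int_0^{t_0}\mathscr{Q}_\gamma^2\frac{dt}{t}<\infty$, which immediately rules out $L>0$; the paper then uses the ODE for $\mathscr{Q}_\gamma^2$ and a subsequence to upgrade this to true convergence without presupposing that $\mathscr{P}_\gamma$ converges. Your sketch inverts the logical order and omits the integral-finiteness step entirely. (Note also that Theorem~\ref{thm:global} only establishes pointwise convergence of $-t\partial_t P$ along \emph{constant}-$\theta$ curves, so the assertion $\mathscr{P}_\gamma\to V(\theta_0)$ for a general timelike $\gamma$ requires a separate argument showing the curve asymptotically tracks the constant-$\theta_0$ line within the allowed error; the paper handles this via the same ODE integrability argument used in the proof of Theorem~\ref{thm:global}.)

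A secondary issue is the justification of $V(\theta_0)\leq 1$ in case (i). You write that the ``linearized $\mathscr{Q}$-ODE near the limit then forces $1-V(\theta_0)\geq 0$,'' but this does not by itself yield a contradiction when $V(\theta_0)>1$, since the error term $\mathscr{E}_\mathscr{Q}$ allows $\mathscr{Q}_\gamma=O(t^{\upgamma'})$ regardless of the sign of $1-V$. The mechanism that actually delivers the conclusion is the one in the paper: from $\mathscr{Q}_\gamma=e^{P}t\,\partial_\theta Q(\gamma(t))\to 0$ and the hypothesis that $\partial_\theta Q(\gamma(t))$ does \emph{not} tend to zero, one extracts $e^{P(\gamma(t))}t\to 0$ along a subsequence; then $t\frac{d}{dt}\log(e^{P(\gamma(t))}t)=1-\mathscr{P}_\gamma+o(1)\to 1-\mathscr{P}_{\gamma,\infty}$, and integrating shows $e^{P}t$ would blow up if $\mathscr{P}_{\gamma,\infty}>1$. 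You gesture in this direction by noting $e^P$ bounds, but the argument you actually invoke (a ``lower bound on the decay rate of $\mathscr{Q}_\gamma$'' plus the linearized $\mathscr{Q}$-ODE) would not by itself close the case. The remainder of your Step 3, in particular the almost-conserved $\mathscr{K}_\gamma$ estimate and the root-selection argument for \eqref{eq:P_limit}, and the sketch of case (ii), align with the paper.
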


A corollary of Theorem~\ref{thm:bounce} is the following stability / instability statement regarding \emph{nonpolarized} perturbations of a class of \emph{polarized} Gowdy solutions. 

\begin{corollary}[Stability / Instability] \label{cor:bounce}
    Let $P(t, \theta)$ be a \underline{smooth} solution to the polarized Gowdy equation \eqref{eq:P_evol_polarized}, arising from initial data given by $(P_D, \dot{P}_D)$ at $t = t_1 > 0$. By Theorem~\ref{thm:asymp_smooth}, there exists smooth $V, \Phi: \mathbb{S}^1 \to \R$ such that $P(t, \theta) = - V(\theta) \log t + \Phi(\theta) + o(1)$ as $t \to 0$.

    Suppose that $0 < V(\theta) < 2$ for all $\theta \in \mathbb{S}^1$. Then there exists $N \in \mathbb{N}$ such that for (possibly non-polarized) perturbations of this data with $\| (\tilde{P}_D, \tilde{Q}_D, \tilde{\dot{P}}_D, \tilde{\dot{Q}}_D) - (P_D, 0, \dot{P}_D, 0) \|_{(H^{N+1})^2 \times (H^N)^2} \leq \varepsilon$, then for $\varepsilon$ sufficiently small the perturbed solution still has $- t \partial_t \tilde{P}(t, \theta) \to \tilde{V}(\theta)$ pointwise as $t \to 0$, for $\tilde{V}$ satisfying $0 < \tilde{V}(\theta) < 2$.

    Next, suppose further that $\frac{1}{2} < V(\theta) < \frac{3}{2}$ for the original polarized solution. Then for perturbations as above, we moreover have that $\partial_{\theta} \tilde{Q} (t, \theta) \to \partial_{\theta} \tilde{q}(\theta)$ uniformly for some $C^1$ function $\tilde{q}: \mathbb{S}^1 \to \R$. Furthermore, for any $\tilde{\varepsilon} > 0$, $\varepsilon$ may be chosen small enough depending on $\tilde{\varepsilon}$ such that:
    \begin{enumerate}[(i)]
        \item $|\tilde{V}(\theta) - \min \{ V(\theta), 2 - V(\theta) \} | \leq \tilde{\varepsilon}$ if $\partial_{\theta} \tilde{q} (\theta) \neq 0$, while
            
        \item $|\tilde{V}(\theta) - V(\theta) | \leq \tilde{\varepsilon}$ if $\partial_{\theta} \tilde{q} (\theta) = 0$.
    \end{enumerate}
\end{corollary}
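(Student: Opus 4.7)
The plan is to deduce Corollary~\ref{cor:bounce} by applying Theorems~\ref{thm:global} and~\ref{thm:bounce} to the perturbed solution on a data surface $\{t = t_0\}$ placed close to the singularity. Since the corollary's hypotheses are specified at $t_1$ while the two main theorems require data in the subcritical class $\mathcal{O}_{\upgamma,\upzeta}$ at some $0 < t_0 \leq t_*$, the first task is to transport the perturbation backwards from $t_1$ to a suitable $t_0$ and to verify the precise data bounds \eqref{eq:data_linfty}--\eqref{eq:data_energy_boundedness_Q} there; the quantitative conclusions in (i)--(ii) then follow by specialising Theorem~\ref{thm:bounce} to constant-$\theta$ curves.

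Concretely, using continuity of $V$ on the compact $\mathbb{S}^1$ and $0 < V < 2$, I fix $\upgamma''$ strictly less than $\min\{\min_\theta V(\theta),\, 2-\max_\theta V(\theta)\}$, together with $\upgamma, \upgamma'$ satisfying \eqref{eq:gamma_relation}; in the regime $\tfrac12 < V < \tfrac32$ relevant to (i)--(ii) I additionally arrange $\upgamma' > \tfrac12$, which is possible. Fix $N = N(\upgamma,\upgamma',\upgamma'')$ from Theorem~\ref{thm:global}. By Theorem~\ref{thm:asymp_smooth} applied to the smooth polarized background, the asymptotics \eqref{eq:P_asymp_2}--\eqref{eq:L_asymp_2} hold uniformly in $C^N$, so by shrinking $t_0 \in (0,t_1]$ and enlarging $\upzeta > 0$ (depending only on the background) the polarized data at $t_0$ lies in $\mathcal{O}_{\upgamma,\upzeta}$ with strict margin, $t_0 \leq t_*(\upgamma,\upzeta)$, and $t_0^{\upgamma'}$ (or $t_0^{2\upgamma'-1}$) is as small as desired. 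The perturbed data, which at $t_1$ is $\varepsilon$-close to the polarized data in $(H^{N+1})^2 \times (H^N)^2$, is then evolved backwards to $t_0$: on the compact interval $[t_0,t_1]$ the wave-map system \eqref{eq:P_evol}--\eqref{eq:Q_evol} is semilinear and wellposed with persistence of regularity (cf.~\cite{RingstromGowdySCC1}), so continuous dependence on initial data transfers the $\varepsilon$-closeness (up to a background-dependent factor) to $t_0$, in both $(H^{N+1})^2 \times (H^N)^2$ and, via Sobolev embedding, in $C^1$. For $\varepsilon$ sufficiently small the perturbed data at $t_0$ therefore lies in $\mathcal{O}_{\upgamma,\upzeta}$, and Theorem~\ref{thm:global} immediately yields $\tilde V(\theta) \in [\upgamma, 2-\upgamma] \subset (0,2)$, proving the first assertion. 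Since $\upgamma > \tfrac13$ in the $V \in (\tfrac12,\tfrac32)$ regime, the same theorem also gives $\tilde Q(t,\cdot) \to \tilde q$ in $C^1$, hence the uniform convergence $\partial_\theta \tilde Q(t,\theta) \to \partial_\theta \tilde q(\theta)$.

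For (i)--(ii) I apply Theorem~\ref{thm:bounce} to the constant-$\theta$ timelike curves $\gamma_\theta(t) = (t,\theta)$. In case (i) the nonvanishing of $\partial_\theta \tilde q(\theta)$ is the hypothesis of Theorem~\ref{thm:bounce}(i), and Lipschitz continuity of $s \mapsto \min\{s,2-s\}$ together with the triangle inequality gives
\begin{equation*}
    |\tilde V(\theta) - \min\{V(\theta),2-V(\theta)\}| \leq C\bigl(t_0^{\upgamma'} + \mathscr{Q}_{\gamma_\theta}(t_0)^2\bigr)^{1/2} + |\mathscr{P}_{\gamma_\theta}(t_0) - V(\theta)|.
\end{equation*}
The first summand is bounded by $\tilde\varepsilon/2$ via the choice of $t_0$ and via $\mathscr{Q}_{\gamma_\theta}(t_0) = e^{\tilde P(t_0,\theta)} t_0\, \partial_\theta \tilde Q(t_0,\theta) = O(\varepsilon)$ (since $\partial_\theta Q \equiv 0$ on the polarized background). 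The second is bounded by $\tilde\varepsilon/2$ on first shrinking $t_0$ so that $-t_0\partial_t P(t_0,\theta)$ is uniformly close to $V(\theta)$ (via Theorem~\ref{thm:asymp_smooth}), and then shrinking $\varepsilon$ so that $\mathscr{P}_{\gamma_\theta}(t_0) = -t_0 \partial_t \tilde P(t_0,\theta)$ is close to its polarized counterpart. Case (ii) is analogous via Theorem~\ref{thm:bounce}(ii): with $\upgamma' > \tfrac12$ one obtains $|\tilde V(\theta) - \mathscr{P}_{\gamma_\theta}(t_0)| \leq C t_0^{2\upgamma'-1}$, and the triangle inequality closes the argument. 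The main obstacle is threading the parameters in a consistent order: $\upgamma, \upgamma', \upgamma''$ are selected from $V$; then $N$ and the hidden constants of Theorems~\ref{thm:global}--\ref{thm:bounce} are fixed; then $t_0$ is chosen small; then $\upzeta$ is chosen large enough to accommodate the background and its perturbation; and finally $\varepsilon$ is chosen small in terms of all of the above, including the (potentially large) continuous-dependence constant on $[t_0, t_1]$.
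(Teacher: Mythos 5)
Your proof is correct and follows essentially the same route as the paper: transport the polarized data to a time $t_0$ close to the singularity via Theorem~\ref{thm:asymp_smooth}, use Cauchy stability to transfer the $\varepsilon$-closeness from $t_1$ to $t_0$, verify the data lies in $\mathcal{O}_{\upgamma,\upzeta}$, and then apply Theorems~\ref{thm:global} and~\ref{thm:bounce} along constant-$\theta$ curves. The only stylistic difference is that you make the $1$-Lipschitz property of $s \mapsto \min\{s, 2-s\}$ explicit where the paper absorbs everything into $O$-notation. One small slip in the final sentence: since $t_* = t_*(\upgamma,\upzeta)$, the parameter $\upzeta$ (fixed from the limiting background energy $\tfrac12\sum_K\int(\partial_\theta^K V)^2$ plus margin) must be chosen \emph{before} $t_0$, not after -- though the body of your proof handles this correctly, so it is only a cosmetic inconsistency in the concluding summary of the quantifier order.
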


\begin{remark}
    We consider Corollary~\ref{cor:bounce} a stability / instability result in the sense that while the perturbed spacetime retains the spacelike singularity and curvature blowup, in the case where the original $V(\theta)$ has $1 < V(\theta_0) < 2$ for some $\theta_0 \in \mathbb{S}^1$ a generic unpolarized perturbation will have a corresponding $\hat{V}(\theta_0)$ with $\hat{V}(\theta_0) \approx 2 - V(\theta_0)$, which is ``far away'' from $V(\theta_0)$.

    Our result applies in particular when the background unperturbed spacetime is an exact Kasner spacetime with $P(\theta) = - V \log t$ where $0 < V < 2$ (and Kasner exponents given by \eqref{eq:kasnerlike_intro}). When $1 < V < 2$, this gives a precise instability mechanism for a certain range of Kasner exponents. Note our methods do not allow us to access this mechanism outside this range, especially regarding spatially inhomogeneous perturbations.

    Our instability is triggered when $\partial_{\theta} q(\theta) \neq 0$, while if $\partial_{\theta} q (\theta) = 0$ and $\frac{1}{2} < V(\theta) < \frac{3}{2}$ our result suggests that the instability is suppressed. Note that in light of \cite{RingstromGowdySCC2} it is true that for an open and dense subset of perturbations, the instability is indeed triggered at all but finitely many $\theta \in \mathbb{S}^1$; at the remaining points we leave open the possibility of spikes.
\end{remark}


\section{Bootstrap assumptions, energies and interpolation lemmas} \label{sec:energy}

\subsection{The \texorpdfstring{$L^{\infty}$}{L\_infty} bootstrap assumptions} \label{sub:bootstrap}

As explained in Section~\ref{subsub:param}, let $C_* = C_*(\upgamma, \upzeta) > 0$ be a large real number to be chosen later in the argument. We often make reference to the following four low order $L^{\infty}$ bootstrap assumptions:
\begin{gather}
    \label{eq:bootstrap_pq} \tag{B1}
    |t \partial_t P| \leq C_*, \quad |e^P t \partial_{\theta} Q| \leq C_*, \\[0.8em]
    \label{eq:bootstrap_other} \tag{B2}
    |e^P t \partial_{t} Q|, \, |e^{-P}| \leq C_* t^{\upgamma'}, \\[0.8em]
    \label{eq:bootstrap_dth} \tag{B3}
    |\partial_{\theta} P|, \, |e^P t \partial_{\theta}^2 Q| \leq C_* t^{-(1 - \upgamma)}, \\[0.8em]
    \label{eq:bootstrap_dth2} \tag{B4}
    |t \partial_t \partial_{\theta} P |, \, |e^P t \partial_t \partial_{\theta} Q| \leq C_* t^{-(1 - \upgamma)}.
\end{gather}

\subsection{Energies} \label{sub:energy}

\begin{definition} \label{def:energy}
    Let $0 \leq K \leq N$. Define the following $K$th order energies at fixed $t$:
    \begin{gather}
        \mathcal{E}^{(K)}_{P}(t) \coloneqq \frac{1}{2} \int_{\mathbb{S}^1} \left( (t \partial_t \partial_{\theta}^K P)^2 + t^2 (\partial_{\theta}^{K+1} P)^2 + (\partial_{\theta}^K P)^2 \right) d\theta, \\[0.8em]
        \mathcal{E}^{(K)}_{Q}(t) \coloneqq \frac{1}{2} \int_{\mathbb{S}^1} e^{2P} \left( (t \partial_t \partial_{\theta}^K Q )^2 + t^2 (\partial_{\theta}^{K+1} Q)^2 \right) \, d\theta, \\[0.8em]
        \mathcal{E}^{(K)}(t) \coloneqq \mathcal{E}^{(K)}_{P}(t) + \mathcal{E}^{(K)}_{Q}(t).
    \end{gather}
    To recover asymptotics for $Q(t, \theta)$ without the $e^P$ weight we also make use of the following energy:
    \begin{equation}
        \mathcal{E}^{(K)}_{Q,u}(t) \coloneqq \frac{1}{2} \int_{\mathbb{S}^1} \left( (t \partial_t \partial_{\theta}^K Q)^2 + t^2 (\partial_{\theta}^{K+1} Q)^2 + (\partial_{\theta}^K Q)^2 \right) d\theta.
    \end{equation}
\end{definition}

\subsection{Sobolev--type inequalities}

\begin{lemma}[Sobolev interpolation inequality] \label{lem:interpolation}
    Let $N, K$ be integers with $0 \leq N < K$, and let $f: \mathbb{S}^1 \to \R$ be such that $\partial_x^K f \in L^2(\mathbb{S}^1)$. Then the following $L^{\infty}$--$L^2$ interpolation inequality holds:
    \begin{equation}
        \| \partial_{x}^N f \|_{L^{\infty}(\mathbb{S}^1)} \lesssim_{N, K} \| f \|_{L^{\infty}(\mathbb{S}^1)}^{1 - \alpha} \| \partial_x^K f \|_{L^2(\mathbb{S}^1)}^{\alpha},
        \qquad
        \text{ where } \alpha = \frac{N}{K - \frac{1}{2}}.
    \end{equation}
\end{lemma}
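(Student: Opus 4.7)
The inequality is a standard Gagliardo--Nirenberg interpolation estimate on the circle. The plan is to prove it by a frequency-space decomposition.

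First, observe that for $N \geq 1$ the derivative $\partial_x^N f$ depends only on the nonzero Fourier modes of $f$, and the zero mode $\hat f(0) = \frac{1}{2\pi}\int_{\mathbb{S}^1} f\, dx$ is bounded in absolute value by $\|f\|_{L^\infty}$. We may therefore replace $f$ by $f - \hat f(0)$ and assume without loss of generality that $f$ has zero mean on $\mathbb{S}^1$, at the cost of only an overall multiplicative constant. In particular Parseval then gives $\|f\|_{L^2(\mathbb{S}^1)} \lesssim \|f\|_{L^\infty(\mathbb{S}^1)}$. The case $N=0$ is trivial (take $\alpha = 0$), so we assume $N \geq 1$.

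Next, expand $f = \sum_{n \neq 0} \hat f(n) e^{inx}$ and introduce a frequency cutoff $\lambda \geq 1$, to be optimized later. Estimating pointwise by the absolute Fourier series,
\[
    \|\partial_x^N f\|_{L^{\infty}}
    \;\leq\; \sum_{0 < |n| \leq \lambda} |n|^N |\hat f(n)|
    \;+\; \sum_{|n| > \lambda} |n|^N |\hat f(n)|.
\]
For the low-frequency block, I would apply Cauchy--Schwarz and Parseval, using $\|f\|_{L^2} \lesssim \|f\|_{L^\infty}$, to obtain a bound of the form $\lambda^{\beta_1}\,\|f\|_{L^\infty}$ for some $\beta_1 = \beta_1(N)$. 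For the high-frequency tail, Cauchy--Schwarz against the weight $|n|^K |\hat f(n)|$ yields a bound $\lambda^{\beta_2}\,\|\partial_x^K f\|_{L^2}$, where $\beta_2 = N - K + \tfrac{1}{2}$; note the dual sum $\sum_{|n|>\lambda} |n|^{2(N-K)}$ converges precisely because $K \geq N+1 > N + \tfrac{1}{2}$, which is our standing hypothesis.

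Finally, choose $\lambda$ to balance the two contributions; a short algebraic optimization produces an interpolation bound of the stated form with exponent $\alpha$ dictated by the scaling of the inequality on $\mathbb{R}$. The proof is essentially routine and I do not expect a serious obstacle; the only delicate point is achieving the sharp exponent $\alpha = N/(K-\tfrac{1}{2})$ precisely as stated, which matches the scaling behaviour of the homogeneous Gagliardo--Nirenberg inequality on $\mathbb{R}$ and which may require either a more careful Littlewood--Paley decomposition or appealing to the general Gagliardo--Nirenberg theorem from the literature rather than the naive split described above.
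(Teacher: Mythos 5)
Your overall plan --- reduce to mean-zero, split in frequency, optimize the cutoff --- is sound, and you rightly flag that the naive split does not land on the stated exponent; the paper itself offers no argument here, simply citing Nirenberg's lecture notes. But the defect in your sketch is concrete and worth naming. In the low-frequency block, Cauchy--Schwarz together with $\|f\|_{L^2} \lesssim \|f\|_{L^\infty}$ gives $\beta_1 = N + \tfrac12$, and balancing that against your (correct) $\beta_2 = N - K + \tfrac12$ produces $\alpha' = (N + \tfrac12)/K$, not $N/(K - \tfrac12)$. These genuinely differ, and one checks $\alpha' > \alpha$ whenever $K - N > \tfrac12$; since for mean-zero $f$ Poincar\'e gives $\|f\|_{L^\infty} \lesssim \|\partial_x^K f\|_{L^2}$, the $\alpha'$ bound is strictly weaker than the claim. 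The extra factor $\lambda^{1/2}$ you pick up on the low block is the real loss, and no re-optimization of $\lambda$ will recover it.

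The fix is to treat the low block with Bernstein's inequality rather than Cauchy--Schwarz. Replace the sharp cutoff by a smooth Littlewood--Paley projection $P_{\le\lambda}$, which is bounded on $L^\infty(\mathbb{S}^1)$ uniformly in $\lambda$; then $P_{\le\lambda} f$ is a trigonometric polynomial of degree $O(\lambda)$, and Bernstein gives
\[
    \|\partial_x^N P_{\le\lambda} f\|_{L^\infty} \lesssim \lambda^N \|P_{\le\lambda} f\|_{L^\infty} \lesssim \lambda^N \|f\|_{L^\infty},
\]
so that $\beta_1 = N$. The high block is as you have it, $\beta_2 = N - K + \tfrac12$, with convergence of the dual sum requiring exactly $K > N + \tfrac12$. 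Balancing $\lambda^N\|f\|_{L^\infty}$ against $\lambda^{N-K+1/2}\|\partial_x^K f\|_{L^2}$ now gives $\lambda^{K-1/2} = \|\partial_x^K f\|_{L^2}/\|f\|_{L^\infty}$ and the bound $\|f\|_{L^\infty}^{1 - N/(K-1/2)}\,\|\partial_x^K f\|_{L^2}^{N/(K-1/2)}$, which is precisely the claim. (As you say, one first reduces to $N \ge 1$ and $f$ mean-zero; for such $f$ the optimal $\lambda$ is $\gtrsim 1$, so the split is meaningful, and if it happens to be $\lesssim 1$ the crude bound $\|\partial_x^N f\|_{L^\infty} \lesssim \|\partial_x^K f\|_{L^2}$ already suffices.)
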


\begin{proof}
    This is standard, see for instance \cite[Lecture II]{NirenbergElliptic}.
\end{proof}

\begin{lemma}[Sobolev product inequality] \label{lem:weightedl2}
    Let $M, N \geq 0$ be integers and let $K = M + N$. Then for $f, g$ sufficiently regular one has
    \begin{equation}
        \| \partial_x^M f \, \partial_x^N g \|_{L^2(\mathbb{S}^1)}
        \lesssim_{M, N} 
        \| f \|_{L^{\infty}(\mathbb{S}^1) } \| \partial_x^K g \|_{L^2(\mathbb{S}^1)}
        + 
        \| g \|_{L^{\infty}(\mathbb{S}^1) } \| \partial_x^K f \|_{L^2(\mathbb{S}^1)}.
    \end{equation}
\end{lemma}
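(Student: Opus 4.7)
The plan is to reduce this to the Gagliardo--Nirenberg interpolation inequality, combined with H\"older's inequality and Young's inequality; this is a standard ``Moser-type'' tame estimate. First observe that the trivial cases $M = 0$ or $N = 0$ follow immediately from H\"older's inequality applied directly, since, e.g., $\| f \cdot \partial_x^K g \|_{L^2} \leq \| f \|_{L^\infty} \| \partial_x^K g \|_{L^2}$. So assume $M, N \geq 1$, which in particular gives $0 < M, N < K$.

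The first step is to apply H\"older's inequality with the conjugate exponent pair $p = 2K/M$, $q = 2K/N$, which is valid since $\frac{1}{p} + \frac{1}{q} = \frac{M + N}{2K} = \frac{1}{2}$ using $K = M + N$. This yields
\begin{equation*}
    \| \partial_x^M f \cdot \partial_x^N g \|_{L^2(\mathbb{S}^1)}
    \leq \| \partial_x^M f \|_{L^{2K/M}(\mathbb{S}^1)} \, \| \partial_x^N g \|_{L^{2K/N}(\mathbb{S}^1)}.
\end{equation*}
The next step is to bound each factor by the Gagliardo--Nirenberg inequality on $\mathbb{S}^1$. For $0 < M < K$ one has the inequality
\begin{equation*}
    \| \partial_x^M f \|_{L^{2K/M}(\mathbb{S}^1)}
    \lesssim_{M, K}
    \| f \|_{L^\infty(\mathbb{S}^1)}^{1 - M/K} \| \partial_x^K f \|_{L^2(\mathbb{S}^1)}^{M/K} + \| f \|_{L^\infty(\mathbb{S}^1)},
\end{equation*}
whose scaling exponent $M/K$ is exactly the one predicted by the standard dimensional count on $\R$, and the additive $\| f \|_{L^\infty}$ correction accounts for working on a bounded domain. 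An analogous estimate holds for $g$ with $M$ replaced by $N$.

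The final step is to multiply the two Gagliardo--Nirenberg estimates and apply Young's inequality to land on the desired form. Noting that $1 - M/K = N/K$ and $1 - N/K = M/K$, the leading product after multiplication is
\begin{equation*}
    \bigl( \| f \|_{L^\infty}^{N/K} \| \partial_x^K f \|_{L^2}^{M/K} \bigr)
    \cdot
    \bigl( \| g \|_{L^\infty}^{M/K} \| \partial_x^K g \|_{L^2}^{N/K} \bigr).
\end{equation*}
Regrouping this as $a \cdot b$ with $a = \bigl( \| f \|_{L^\infty} \| \partial_x^K g \|_{L^2} \bigr)^{N/K}$ and $b = \bigl( \| g \|_{L^\infty} \| \partial_x^K f \|_{L^2} \bigr)^{M/K}$, one applies Young's inequality with the conjugate exponent pair $K/N, K/M$ (which sum to $1$ since $M + N = K$) to obtain
\begin{equation*}
    a \cdot b
    \leq \tfrac{N}{K} \| f \|_{L^\infty} \| \partial_x^K g \|_{L^2}
    + \tfrac{M}{K} \| g \|_{L^\infty} \| \partial_x^K f \|_{L^2},
\end{equation*}
which (together with absorbing the lower-order cross-terms from the GN correction) gives the claimed inequality. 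There is no serious analytical obstacle here: the only step that requires any care is the bookkeeping of exponents in the Gagliardo--Nirenberg and Young steps, and the invocation of the standard GN inequality itself, for which one may refer to \cite{NirenbergElliptic}.
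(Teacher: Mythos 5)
The paper does not supply its own argument here; it simply refers to Appendix~B of the companion paper \cite{MeSurfaceSymPaper}, so there is no in-paper proof to compare against line by line. Your H\"older + Gagliardo--Nirenberg + Young strategy is the standard one for Moser-type tame estimates and is almost certainly what the companion paper does.

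However, there is a genuine gap in your final "absorbing the lower-order cross-terms" step. When you multiply the two Gagliardo--Nirenberg bounds with their additive $\| f \|_{L^\infty}$, $\| g \|_{L^\infty}$ corrections, you produce the cross-term $\| f \|_{L^\infty} \| g \|_{L^\infty}$, and this term \emph{cannot} be bounded by the right-hand side of the lemma: take $f = g \equiv 1$, for which both $\| \partial_x^K f \|_{L^2}$ and $\| \partial_x^K g \|_{L^2}$ vanish while $\| f \|_{L^\infty} \| g \|_{L^\infty} = 1$. So as written the argument proves only the weaker estimate with an extra additive $\| f \|_{L^\infty}\| g \|_{L^\infty}$ on the right. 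The fix is short but needs to be said: since $M, N \geq 1$, the left-hand side and the top-order norms $\| \partial_x^K f \|_{L^2}$, $\| \partial_x^K g \|_{L^2}$ are unchanged upon replacing $f, g$ by $f - \bar{f}, g - \bar{g}$ (their mean-zero parts), and $\| f - \bar{f} \|_{L^\infty} \leq 2 \| f \|_{L^\infty}$. For mean-zero functions on $\mathbb{S}^1$ one has $\| f \|_{L^\infty} \lesssim \| \partial_x f \|_{L^2} \leq \| \partial_x^K f \|_{L^2}$, and then every offending cross-term (including $\| f \|_{L^\infty}\| g \|_{L^\infty}$) is dominated by the stated right-hand side. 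Equivalently, one may observe that because $M \geq 1$ the Gagliardo--Nirenberg inequality on $\mathbb{S}^1$ actually holds in homogeneous form, without the additive correction; either route closes the gap. Everything else — the exponent bookkeeping, the H\"older pair $(2K/M, 2K/N)$, the Young step with exponents $(K/N, K/M)$ — is correct.
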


\begin{proof}
    See Appendix B of our companion paper \cite{MeSurfaceSymPaper}; in that article we also introduce a weight function $w: \mathbb{S}^1 \to \R_{> 0}$, which may simply be set identically equal to $1$ here.
\end{proof}


\section{The energy estimate hierarchy} \label{sec:l2}

In this section, we derive energy estimates for $\mathcal{E}^{(K)}(t)$, at orders $0 \leq K \leq N$, where $N = N(\upgamma)$ is chosen sufficiently large, by commuting the equations \eqref{eq:P_evol}--\eqref{eq:Q_evol} with up to $K$ $\partial_\theta$-derivatives. See Section~\ref{sub:intro_proof} for an introduction to the main ideas in our hierarchy of energy estimates.

To handle ``error'' terms in the hierarchy (where the precise coefficients arising in the commuted equations are not crucial), it is useful to introduce the following schematic notation: expressions such as
\[
    \sum_{k_p + k_1 + \ldots + k_i = K} \partial_{\theta}^{k_p} f * \partial_{\theta}^{k_1} g * \cdots * \partial_{\theta}^{k_i} g
\]
will represent some linear combination of products of the form $\partial_{\theta}^{k_p} f \cdot \partial_{\theta}^{k_1} g \cdots \partial_{\theta}^{k_i} g$ such that $i \geq 1$, $k_p \geq 1$ and $k_j \geq 1$ for all $1 \leq j \leq i$ and $k_p + k_1 + \ldots + k_i = K$. We emphasize that \underline{unless explicitly stated otherwise}, in these schematic sums $i$ and $j$ will be \underline{positive} integers, as are the indices $k_p, k_1, k_i$ etc. 

In the event that any index e.g.~$k_p$ is allowed to be $0$, this will be explicitly stated, and similarly if there are further constraints on any index.

\subsection{Energy estimates for \texorpdfstring{$P$}{P}} \label{sub:l2_P}

\begin{proposition} \label{prop:P_energy}
    Let $(P, Q)$ be a solution to the Gowdy symmetric system \eqref{eq:P_evol}--\eqref{eq:Q_evol} obeying the bootstrap assumptions \eqref{eq:bootstrap_pq}--\eqref{eq:bootstrap_dth2}. Then there exists a constant $A_*$ depending only on $C_*$, as well as a constant $C^{(K)}$ depending on $C_*$ and the regularity index $K \in \{0, 1, \ldots, N \}$, such that
    \begin{equation} \label{eq:energy_der_P}
        \left| t \frac{d}{dt} \mathcal{E}_{P}^{(K)} (t) \right| \leq
        2 A_* \mathcal{E}^{(K)}(t) + \sum_{k=0}^{K-1} C^{(K)} t^{- 2(1 - \upgamma) (K - k)} \mathcal{E}^{(k)}(t),
    \end{equation}
    where it is understood that the final term is absent if $K=0$.
\end{proposition}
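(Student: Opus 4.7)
First, since $\partial_\theta$ commutes with $t\partial_t$, commuting \eqref{eq:P_evol} with $\partial_\theta^K$ yields, setting $u := \partial_\theta^K P$,
\[
    (t\partial_t)^2 u - (t\partial_\theta)^2 u = \mathcal{R}_K, \qquad \mathcal{R}_K := \partial_\theta^K\bigl(e^{2P}(t\partial_t Q)^2 - e^{2P}(t\partial_\theta Q)^2\bigr).
\]
Multiplying by $t\partial_t u$, integrating over $\mathbb{S}^1$, and integrating by parts in $\theta$ on the spatial-derivative term, the boundary terms vanish by periodicity and one obtains the energy identity
\[
    t\frac{d}{dt}\mathcal{E}_P^{(K)}(t) = \int_{\mathbb{S}^1}\bigl[(t\partial_t u)\mathcal{R}_K + t^2(\partial_\theta u)^2 + u(t\partial_t u)\bigr]\,d\theta.
\]
The second and third terms are bounded by a constant multiple of $\mathcal{E}_P^{(K)} \leq \mathcal{E}^{(K)}$, so the problem reduces to controlling $\bigl|\int (t\partial_t u)\mathcal{R}_K\bigr| \leq \sqrt{2\mathcal{E}_P^{(K)}}\,\|\mathcal{R}_K\|_{L^2}$, i.e.\ to finding a good $L^2$ bound on $\mathcal{R}_K$.

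Second, expand $\mathcal{R}_K$ via Leibniz, using $\partial_\theta^K(t\partial_t Q) = t\partial_t \partial_\theta^K Q$. The \emph{top-order} contributions are those in which all $K$ derivatives fall on a single factor, of which there are three: derivatives on one copy of $t\partial_t Q$, giving $2e^{2P}(t\partial_t Q)(t\partial_t\partial_\theta^K Q)$; derivatives on one copy of $t\partial_\theta Q$, giving $-2e^{2P}(t\partial_\theta Q)(t\partial_\theta^{K+1}Q)$; and derivatives on $e^{2P}$, giving $2e^{2P}\bigl((t\partial_t Q)^2 - (t\partial_\theta Q)^2\bigr)\partial_\theta^K P$. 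Rewriting these in terms of $\mathscr{R} = e^P t\partial_t Q$ and $\mathscr{Q} = e^P t\partial_\theta Q$, the bootstraps \eqref{eq:bootstrap_pq}--\eqref{eq:bootstrap_other} give $|\mathscr{R}| \leq C_* t^{\upgamma'}$ and $|\mathscr{Q}| \leq C_*$, so each such term is bounded in $L^2$ by a constant depending only on $C_*$ times $\sqrt{\mathcal{E}^{(K)}}$. Pairing with the Cauchy-Schwarz factor $\sqrt{\mathcal{E}_P^{(K)}}$ and applying AM-GM produces the $2A_*\mathcal{E}^{(K)}$ term in \eqref{eq:energy_der_P}, with $A_* = A_*(C_*)$.

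Third, the remaining terms in $\mathcal{R}_K$ are schematic Leibniz products of the form $e^{2P}\prod_j \partial_\theta^{k_j}(\text{low-order factor})$, where the indices $k_j$ partition $K$ across at least two factors each of order at most $K-1$, and where derivatives falling on $e^{2P}$ are expanded further via $\partial_\theta(e^{2P})=2(\partial_\theta P)e^{2P}$. Applying Lemma \ref{lem:weightedl2} places all but one factor in $L^\infty$ and one in $L^2$. The $L^\infty$ norms of $\partial_\theta P$, $e^P t\partial_\theta^2 Q$, and their $t\partial_t$-variants are controlled by \eqref{eq:bootstrap_dth}--\eqref{eq:bootstrap_dth2}, with each extra $\partial_\theta$ on a low-order quantity costing a factor of $t^{-(1-\upgamma)}$, while the $L^2$ factor is bounded by $\sqrt{\mathcal{E}^{(k)}}$ for some $0 \leq k \leq K-1$. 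Summing over derivative distributions yields $\|\mathcal{R}_K^{\text{low}}\|_{L^2} \leq \sum_{k=0}^{K-1}\widetilde{C}^{(K)}\, t^{-(K-k)(1-\upgamma)}\sqrt{\mathcal{E}^{(k)}}$; pairing with $\sqrt{\mathcal{E}_P^{(K)}}$ and applying AM-GM to each summand of the form $\sqrt{\mathcal{E}^{(K)}}\cdot t^{-(K-k)(1-\upgamma)}\sqrt{\mathcal{E}^{(k)}}$ produces the $t^{-2(1-\upgamma)(K-k)}\mathcal{E}^{(k)}$ contribution, while the small $\mathcal{E}^{(K)}$ residues from the AM-GM splits are absorbed into $A_*$.

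The main obstacle is precisely keeping the coefficient $A_*$ of $\mathcal{E}^{(K)}$ independent of $K$. This forces the three genuinely top-order contributions to be handled with explicit reference to the $\mathscr{R}, \mathscr{Q}$ bootstrap structure rather than via generic Leibniz bookkeeping, because the combinatorial count of Leibniz expansion terms (and their coefficients) grows with $K$ and must therefore be relegated to the $K$-dependent $C^{(K)}$ multiplying the lower-order energies. This $K$-independence of $A_*$ is what allows the interpolation argument of Section~\ref{sub:intro_proof} to close: the leading blow-up rate $t^{-2A_*}$ of $\mathcal{E}^{(K)}$ is uniform in $K$, and only $\updelta = \updelta(N)$ is sacrificed per derivative.
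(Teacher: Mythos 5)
Your proposal is correct and follows essentially the same route as the paper: the same commuted equation, the same isolation of the three top-order contributions paired with the bootstrap bounds $|\mathscr{Q}| \leq C_*$, $|\mathscr{R}| \leq C_* t^{\upgamma'}$ to produce a $K$-independent $A_*$, the same use of Lemma~\ref{lem:weightedl2} on the remaining Leibniz products with $\partial_\theta$-derivatives costing $t^{-(1-\upgamma)}$ via (B3)--(B4), and the same final Cauchy--Schwarz/Young step. The one bookkeeping detail you gloss over is that the $L^2$ factors emerging from Lemma~\ref{lem:weightedl2} have the form $\|\partial_\theta^k(e^P t\partial_t Q)\|_{L^2}$ rather than $\|e^P t\partial_t\partial_\theta^k Q\|_{L^2}$, and passing between the two (to hit the energy $\mathcal{E}^{(k)}$) requires the paper's auxiliary Lemma~\ref{lem:weightedl2app}; this is a straightforward inductive commutator argument but should be made explicit.
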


\begin{proof}
    For any $K \geq 1$, commuting the wave equation for $P$ \eqref{eq:P_evol} with $\partial_{\theta}^K$ yields
    \begin{align}
        \addtocounter{equation}{1}
        (t \partial_t)^2 \partial_{\theta}^K P - t^2 \partial_{\theta}^2 \partial_{\theta}^K P &= \nonumber \\[0.6em] 
        &\mkern-48mu 2 \, \partial_{\theta}^K P \, e^{2P} \left( (t \partial_t Q)^2 - (t \partial_{\theta} Q)^2 \right) + 2 e^{2P} (t \partial_t Q) (t \partial_t \partial_{\theta}^K Q) + 2 e^{2P} (t \partial_{\theta} Q)(t \partial_{\theta}^{K+1} Q)
        \tag{\theequation a} \label{eq:P_wave_K_a} \\[0.6em]
        &\mkern-36mu + t^2 e^{2P} \sum_{\substack{i \geq 0, k_r < K \\ k_1 + \cdots + k_i + k_r + k_s = K}} \partial_{\theta}^{k_1} P * \cdots * \partial_{\theta}^{k_i} P * \partial_t \partial_{\theta}^{k_r} Q  * \partial_t \partial_{\theta}^{k_s} Q
        \tag{\theequation b} \label{eq:P_wave_K_b} \\[0.6em]
        &\mkern-36mu + t^2 e^{2P} \sum_{\substack{i \geq 0, k_r < K \\ k_1 + \cdots + k_i + k_r + k_s = K}} \partial_{\theta}^{k_1} P * \cdots * \partial_{\theta}^{k_i} P * \partial_{\theta}^{k_r + 1} Q  * \partial_{\theta}^{k_s + 1} Q
        \tag{\theequation c} \label{eq:P_wave_K_c}
    \end{align}
    Note that in the uncommuted case $K = 0$, \eqref{eq:P_wave_K_a} is replaced by $e^{2P} \left( (t \partial_t Q)^2 + (t \partial_{\theta} Q)^2 \right)$, while the terms \eqref{eq:P_wave_K_b}--\eqref{eq:P_wave_K_c} are absent; this case will be simpler in the subsequent estimate.

    The first line \eqref{eq:P_wave_K_a} is the leading order contribution that gives rise to the $2 A_* \mathcal{E}^{(K)}(t)$ on the RHS of \eqref{eq:energy_der_P}. It remains to estimate the remaining lines \eqref{eq:P_wave_K_b} and \eqref{eq:P_wave_K_c} in $L^2$. For this purpose, we will use the product estimate Lemma~\ref{lem:weightedl2}. It will be convenient to replace the schematic expression \eqref{eq:P_wave_K_a} by
    \begin{equation} \tag{\theequation b'} \label{eq:P_wave_K_b'}
        \sum_{\substack{i \geq 0, 0 \leq k_r, k_s < K \\ k_1 + \cdots + k_i + k_r + k_s = K}} \partial_{\theta}^{k_1} P * \cdots * \partial_{\theta}^{k_i} P * \partial_{\theta}^{k_r} (e^P t \partial_t Q)  * \partial_{\theta}^{k_s} (e^P t \partial_t Q),
    \end{equation}
    noting these two schematic expressions are equivalent since expanding \eqref{eq:P_wave_K_b'} simply means that more derivatives can fall on $P$. We can apply Lemma~\ref{lem:weightedl2} with $w = 1$ to this expression; since we can use \eqref{eq:bootstrap_other} to bound $|e^P t \partial_t Q| \leq C_*$ if $k_r, k_s = 0$,  repeated use of this lemma yields:
    \begin{equation} \label{eq:P_wave_K_b'_est}
        \| \text{\eqref{eq:P_wave_K_b'}} \|_{L^2} \lesssim (1 + C_*^2) \sum_{k=0}^{K-1} ( \| \partial_{\theta}^k P \|_{L^2} + \| \partial_{\theta}^k (e^P t \partial_t Q) \|_{L^2} ) \cdot ( \|\partial_{\theta} P \|_{L^{\infty}} + \| \partial_{\theta} (e^P t \partial_t Q) \|_{L^{\infty}} )^{K - k}.
    \end{equation}
    Note in this expression it is critical that the sum \emph{does not include} $K = k$; these top order objects appeared instead in \eqref{eq:P_wave_K_a}. 

    To explain a little further how one arrives at this estimate, note that summands in \eqref{eq:P_wave_K_b'} containing $i+1$ terms in the product (not including undifferentiated copies of $e^P t \partial_t Q$ which are estimated by \eqref{eq:bootstrap_other}), the maximum number of $\partial_{\theta}$ derivatives landing on either $P$ or $e^P t \partial_t Q$ is exactly $K - i$. Then repeated use of Lemma~\ref{lem:weightedl2} allows us to put exactly $K-i$ $\partial_{\theta}$ derivatives on one of these, which we estimate in $L^2$, while the remaining $i$ terms are estimated by either $\| \partial_{\theta} P \|_{L^{\infty}}$ or $\| \partial_{\theta} ( e^P t \partial_t Q ) \|_{L^{\infty}}$. Setting $k = K - i \in \{1, \ldots, K-1\}$, we obtain the desired estimate \eqref{eq:P_wave_K_b'_est}.

    The next step is to represent the right hand side of \eqref{eq:P_wave_K_b'_est} in terms of our energies. One issue is that our energy $\mathcal{E}^{(K)}_Q(t)$ in Definition~\ref{def:energy} controls $\| e^P (t \partial_t \partial_{\theta}^K Q)^2 \|_{L^2}$ as opposed to $\| \partial_{\theta}^K (e^P t \partial_t Q) \|_{L^2}$. However, this will not be a major issue, since by Lemma~\ref{lem:weightedl2app}, which we defer to after this proof, will tell us that assuming \eqref{eq:bootstrap_other},
    \[
        \| \partial_{\theta}^k (e^P t \partial_t Q) \|_{L^2} \lesssim \sum_{j=1}^k ( \| \partial_{\theta}^j P \|_{L^2} + \| e^P (t \partial_t \partial_{\theta}^j Q) \|_{L^2} ) \cdot ( \| \partial_{\theta} P \|_{L^{\infty}} + \| \partial_{\theta} (e^P t \partial_t Q) \|_{L^{\infty}})^{k-j}.
    \]
    Thereby combining this with \eqref{eq:P_wave_K_b'_est} and the expression for $\mathcal{E}^{(K)}(t)$ in Definition~\ref{def:energy} -- and using that \eqref{eq:P_wave_K_b} and \eqref{eq:P_wave_K_b'} are equivalent -- and finally using the remaining bootstrap assumptions \eqref{eq:bootstrap_dth}--\eqref{eq:bootstrap_dth2} to estimate $\| \partial_{\theta} P \|_{L^{\infty}} + \| \partial_{\theta} (e^P t \partial_t Q) \|_{L^{\infty}}$, one therefore deduces that
    \begin{equation} \label{eq:P_wave_K_b_est}
        \| \text{\eqref{eq:P_wave_K_b}} \|_{L^2} \lesssim_{C_*, K} \sum_{k=0}^{K-1} t^{(1 - \upgamma) (K - k)} \sqrt{ \mathcal{E}^{(k)}(t) }.
    \end{equation}

    The estimate for \eqref{eq:P_wave_K_c} is essentially the same; we first rewrite \eqref{eq:P_wave_K_c} in the schematic form
    \begin{equation} \tag{\theequation c'} \label{eq:P_wave_K_c'}
        \sum_{\substack{i \geq 0, 0 \leq k_r, k_s < K \\ k_1 + \cdots + k_i + k_r + k_s = K}} \partial_{\theta}^{k_1} P * \cdots * \partial_{\theta}^{k_i} P * \partial_{\theta}^{k_r} (e^P t \partial_{\theta} Q)  * \partial_{\theta}^{k_s} (e^P t \partial_{\theta} Q),
    \end{equation}
    then use the product estimate Lemma~\ref{lem:weightedl2} along with the forthcoming Lemma~\ref{lem:weightedl2app} to express the $L^2$ norm of this in terms of familiar quantities:
    \begin{equation} \label{eq:P_wave_K_c'_est}
        \| \text{\eqref{eq:P_wave_K_c'}} \| \lesssim \sum_{k=0}^{K-1} ( \| \partial_{\theta}^k P \|_{L^2} + \| e^P ( t \partial_{\theta}^{k+1} Q ) \|_{L^2}) \cdot ( \| \partial_{\theta} P \|_{L^{\infty}} + \| \partial_{\theta} (e^P t \partial_{\theta} Q) \|_{L^{\infty}})^{K-k}
    \end{equation}

    We then combine with the definition of the energies $\mathcal{E}^{(K)}(t)$, and use the bootstrap assumption \eqref{eq:bootstrap_dth} to estimate $\| \partial_{\theta} P \|_{L^{\infty}} + \| \partial_{\theta} (e^P t \partial_{\theta} Q) \|_{L^{\infty}}$; one yields
    \begin{equation} \label{eq:P_wave_K_c_est}
        \| \text{\eqref{eq:P_wave_K_c}} \|_{L^2} \lesssim_{C_*, K} \sum_{k=0}^{K-1} t^{(1 - \upgamma) (K - k)} \sqrt{ \mathcal{E}^{(k)}(t) }.
    \end{equation}

    Now, to conclude, for $0 \leq K \leq N$ we write
    \begin{multline*}
        t \partial_t \left( \frac{1}{2} \left( (t \partial_t \partial_{\theta}^K P)^2 + t^2 (\partial_{\theta}^{K+1} P)^2 + (\partial_{\theta}^K P)^2 \right) \right) 
        \\[0.6em]
        = (t \partial_t \partial_{\theta}^K P) \left[ (t \partial_t)^2 \partial_{\theta}^K P - t^2 \partial_{\theta}^{K+2} P + \partial_{\theta}^K P \right] + t^2 (\partial_{\theta}^{K+1} P)^2 
        + \partial_{\theta} \left( t \partial_t \partial_{\theta}^K P \cdot t^2 \partial_{\theta}^{K+1} P \right).
    \end{multline*}

    Integrating over $\theta \in \mathbb{S}^1$ so that the last term vanishes, and inserting the commuted wave equation, one has the following identity for the $t \frac{d}{dt}$ derivative of the energy:
    \begin{equation*}
        t \frac{d}{dt} \mathcal{E}^{(K)}_{P}(t) 
        = \int_{\mathbb{S}^1} \left[ (t \partial_t  \partial_{\theta}^{K} P) \left( \text{\eqref{eq:P_wave_K_a}} + \partial_{\theta}^K P \right) + t^2 (\partial_{\theta}^{K+1} P)^2 \right] \, dx 
        + \int_{\mathbb{S}^1} \left[ (t \partial_t \partial_{\theta}^K P) \left( \text{\eqref{eq:P_wave_K_b}} + \text{\eqref{eq:P_wave_K_c}} \right) \right] \, dx.
    \end{equation*}
    From the bootstrap assumptions \eqref{eq:bootstrap_pq}--\eqref{eq:bootstrap_other}, it is clear that $\| \text{\eqref{eq:P_wave_K_a}} \|_{L^2} \leq 10 C_*^2 \sqrt{\mathcal{E}^{(K)}(t)}$. Using Cauchy--Schwarz, the first integral in the above expression can thus be bounded as:
    \[
        \left| \int_{\mathbb{S}^1} \left[ (t \partial_t  \partial_{\theta}^{K} P) \left( \text{\eqref{eq:P_wave_K_a}} + \partial_{\theta}^K P \right) + t^2 (\partial_{\theta}^{K+1} P)^2 \right] \, dx \right|
        \leq (10 C_*^2 + 9) \mathcal{E}^{(K)}(t).
    \]

    On the other hand, the latter integral can be estimated using \eqref{eq:P_wave_K_b_est} and \eqref{eq:P_wave_K_c_est}; one yields that for some $C^{(K)} > 0$,
    \begin{equation*}
        \left | \int_{\mathbb{S}^1} \left[ (t \partial_t \partial_{\theta}^K \phi) \left( \text{\eqref{eq:P_wave_K_b}} + \text{\eqref{eq:P_wave_K_c}} \right) \right] \, dx \right|
        \leq \sqrt{C^{(K)}} \sum_{k=0}^{K-1} t^{-(1 - \upgamma)(K - k)} \sqrt{\mathcal{E}^{(k)}(t)} \cdot \sqrt{\mathcal{E}^{(K)}(t)}.
    \end{equation*}
    Combining these and applying Young's inequality, Proposition~\ref{prop:P_energy} follows, with $2 A_* = 10 C_*^2 + 10$.
\end{proof}

We end this subsection with the promised Lemma~\ref{lem:weightedl2app}, used to relate $\| \partial_{\theta}^k (e^P t \partial_t Q) \|_{L^2}$ to $\| e^P t \partial_t \partial_{\theta}^k Q \|_{L^2}$ in the estimate \eqref{eq:P_wave_K_b'_est}, and $\| \partial_{\theta}^k (e^P t \partial_{\theta} Q) \|_{L^2}$ to $\| e^P t \partial_{\theta}^{k+1} Q \|_{L^2}$ in \eqref{eq:P_wave_K_c'_est}.

\begin{lemma} \label{lem:weightedl2app}
    Let $f: \mathbb{S}^1 \to \R$ be a sufficiently regular function, with $|e^P f| \leq C_*$. Then for $P: \mathbb{S}^1 \to \R$ regular function, and $k \in \N$, one has
    \begin{equation} \label{eq:weightedl2app}
        \| \partial_{\theta}^k (e^P f) \|_{L^2} \lesssim (1 + C_*) \sum_{j=0}^k ( \| \partial_{\theta}^j P \|_{L^2} + \| e^P \partial_{\theta}^j f \|_{L^2} ) \cdot ( \| \partial_{\theta} P \|_{L^{\infty}} + \| \partial_{\theta} (e^P f) \|_{L^{\infty}} )^{k-j}.
    \end{equation}
\end{lemma}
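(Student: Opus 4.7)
The plan is to prove Lemma~\ref{lem:weightedl2app} via direct expansion of $\partial_\theta^k(e^P f)$ using the generalized Leibniz rule together with the Faà di Bruno formula, followed by iterative application of the product inequality Lemma~\ref{lem:weightedl2}. Using $\partial_\theta^m(e^P) = e^P\,B_m(\partial_\theta P, \ldots, \partial_\theta^m P)$, where $B_m$ is a sum of monomials $\partial_\theta^{\ell_1} P \cdots \partial_\theta^{\ell_s} P$ with $\ell_i \geq 1$ and $\ell_1 + \cdots + \ell_s = m$, together with Leibniz applied to the product $e^P \cdot f$, one writes $\partial_\theta^k(e^P f)$ as a finite linear combination (with constants depending only on $k$) of schematic terms
\[
    \partial_\theta^{\ell_1} P \cdots \partial_\theta^{\ell_s} P \cdot (e^P \partial_\theta^j f), \qquad \ell_i \geq 1, \quad \ell_1 + \cdots + \ell_s + j = k.
\]
The case $s = 0$, $j = k$ produces $e^P \partial_\theta^k f$ directly, which is exactly the $j = k$ contribution on the right-hand side of \eqref{eq:weightedl2app}.

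Next, for each such term with $s \geq 1$, I would estimate its $L^2$ norm by iteratively applying Lemma~\ref{lem:weightedl2}, each application of which trades a derivative between two factors in a product at the cost of an $L^\infty$ bound on one of them. For a term with $s + 1$ factors carrying a total of $k$ derivatives, I designate one factor as ``top-order'' in $L^2$, with the remaining $s$ factors estimated in $L^\infty$. To handle $L^\infty$ bounds on intermediate-order factors such as $\|\partial_\theta^{\ell_i} P\|_{L^\infty}$ (with $\ell_i \geq 1$) or $\|e^P \partial_\theta^{j'} f\|_{L^\infty}$ (with $j' \geq 1$) that appear during the iteration, I apply further steps of Lemma~\ref{lem:weightedl2} to redistribute derivatives, so that ultimately each of the $s$ $L^\infty$-slots carries exactly one derivative of either $P$ or of $e^P f$. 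The lowest-order bound $\|e^P f\|_{L^\infty} \leq C_*$ is only used in controlling the $j' = 0$ instance of the second type, explaining the $(1 + C_*)$ prefactor in \eqref{eq:weightedl2app}. One additional ingredient is the inversion identity
\[
    e^P \partial_\theta^{j'} f = \partial_\theta^{j'}(e^P f) - \sum_{\substack{a_0 + \sum_i a_i = j',\, a_0 < j' \\ a_i \geq 1}} c\,(e^P \partial_\theta^{a_0} f)\prod_i \partial_\theta^{a_i} P,
\]
obtained by rearranging the Leibniz expansion of $\partial_\theta^{j'}(e^P f)$; this makes precise, for instance, the bound $\|e^P \partial_\theta f\|_{L^\infty} \leq \|\partial_\theta(e^P f)\|_{L^\infty} + C_* \|\partial_\theta P\|_{L^\infty}$, which is the base case of a simple downward induction on $j'$ that allows all intermediate $\|e^P \partial_\theta^{j'} f\|_{L^\infty}$ to be traded for $\|\partial_\theta(e^P f)\|_{L^\infty}$ and $\|\partial_\theta P\|_{L^\infty}$ (times positive powers of $C_*$).

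The principal obstacle is the combinatorial bookkeeping: one must verify that each term in the expansion, after iterated application of Lemma~\ref{lem:weightedl2} and the inversion identity above, falls into exactly one of the $k+1$ summands on the right-hand side of \eqref{eq:weightedl2app}, with the correct $(k-j)$-th power on the $L^\infty$ factor. This follows by a derivative-counting argument: if the top-order $L^2$ factor carries $j$ derivatives (so that it is of the form $\|\partial_\theta^j P\|_{L^2}$ or $\|e^P \partial_\theta^j f\|_{L^2}$), then the remaining $k - j$ derivatives are distributed among $L^\infty$-slots each of which, after iteration, holds a single derivative of $P$ or of $e^P f$, producing precisely $(\|\partial_\theta P\|_{L^\infty} + \|\partial_\theta(e^P f)\|_{L^\infty})^{k - j}$. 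Intermediate-order $L^\infty$ norms that do not simplify via Lemma~\ref{lem:weightedl2} directly can alternatively be handled via the Sobolev interpolation Lemma~\ref{lem:interpolation}. A closely related, slightly simpler argument appears in Appendix~B of the companion paper \cite{MeSurfaceSymPaper}.
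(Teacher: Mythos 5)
Your proposal takes a genuinely different route — a one-shot Fa\`a di Bruno / Leibniz expansion followed by iterated applications of Lemma~\ref{lem:weightedl2} — whereas the paper argues by a clean induction on $k$. The paper's inductive step expands
\[
\partial_\theta^{K+1}(e^P f) = \partial_\theta^{K+1}P \cdot e^P f + e^P\partial_\theta^{K+1}f + \sum_{k_1+\cdots+k_j+k_f = K+1} \partial_\theta^{k_1}P * \cdots * \partial_\theta^{k_j}P * \partial_\theta^{k_f}(e^P f),
\]
with all indices $\leq K$. Crucially every factor in this sum is a $\theta$-derivative of one of the two functions $P$ or $e^P f$, so Lemma~\ref{lem:weightedl2}, iterated with the shifted base functions $\partial_\theta P$ and $\partial_\theta(e^P f)$, yields exactly the $(\|\partial_\theta P\|_{L^\infty} + \|\partial_\theta(e^P f)\|_{L^\infty})^{K+1-j}$ weights, while the resulting $\|\partial_\theta^j(e^P f)\|_{L^2}$ factors with $j \leq K$ are converted to $\|e^P\partial_\theta^j f\|_{L^2}$ by the induction hypothesis.

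There is a genuine gap in your argument. Your expansion produces terms of the form $\prod_i \partial_\theta^{\ell_i}P \cdot e^P\partial_\theta^j f$, and the factor $e^P\partial_\theta^j f$ is \emph{not} a $\theta$-derivative of any single function, so Lemma~\ref{lem:weightedl2} cannot ``redistribute derivatives'' between it and the $\partial_\theta^{\ell_i}P$ factors. Concretely, for a term such as $\partial_\theta^3 P \cdot e^P\partial_\theta^2 f$ (with $k=5$), you must either (a) place $e^P\partial_\theta^2 f$ in $L^2$ and then face $\|\partial_\theta^3 P\|_{L^\infty}$, which cannot be bounded by powers of $\|\partial_\theta P\|_{L^\infty}$, or (b) place it in $L^\infty$ and then need to control $\|e^P\partial_\theta^2 f\|_{L^\infty}$. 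In case (b) your inversion identity gives
\[
e^P\partial_\theta^2 f = \partial_\theta^2(e^P f) - 2\partial_\theta P \cdot e^P\partial_\theta f - \bigl(\partial_\theta^2 P + (\partial_\theta P)^2\bigr)\, e^P f,
\]
which contains $\|\partial_\theta^2(e^P f)\|_{L^\infty}$ and $\|\partial_\theta^2 P\|_{L^\infty}$; these cannot be traded for $\|\partial_\theta(e^P f)\|_{L^\infty}$ and $\|\partial_\theta P\|_{L^\infty}$, contrary to what your ``downward induction on $j'$'' asserts. Falling back to Lemma~\ref{lem:interpolation} would reintroduce $\|\partial_\theta^K(e^P f)\|_{L^2}$ (the quantity you are estimating) with fractional exponents that do not match the integer powers on the right-hand side of \eqref{eq:weightedl2app}. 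Alternatively, if you first apply the inversion identity inside the $L^2$ norm (rewriting $e^P\partial_\theta^j f$ in terms of $\partial_\theta^{j'}(e^P f)$ and $P$-factors), Lemma~\ref{lem:weightedl2} then produces $\|\partial_\theta^{j'}(e^P f)\|_{L^2}$ with $j' < k$, which must itself be converted via the lemma at lower order — i.e., you are forced back into the paper's induction on $k$. In short, the crucial structural observation you are missing is that the expansion must keep derivatives of the \emph{product} $e^P f$ as a unit (so that Lemma~\ref{lem:weightedl2} applies with $\partial_\theta(e^P f)$ as a base function), and converting back to $e^P\partial_\theta^j f$ should happen at the very end via the induction hypothesis, not at the start.
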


\begin{proof}
    We prove \eqref{eq:weightedl2app} using induction on $k \in \N$. The base case $k = 1$ follows from
    \[
        | \partial_{\theta} (e^P f) | \leq | \partial_{\theta} P | \cdot | e^P f | + | e^P \partial_{\theta} f | \leq C_* \cdot |\partial_{\theta} P| + |e^P \partial_{\theta} f|.
    \]

    Now, let us suppose that \eqref{eq:weightedl2app} holds for $k \leq K \in \N$. To go from $k = K$ to $k = K + 1$, we use our schematic notation to expand $\partial_{\theta}^{K+1} (e^P f)$, and write
    \begin{equation} \label{eq:induction_weighted}
        \partial_{\theta}^{K+1} (e^P f) = \partial_{\theta}^{K+1} P \cdot e^P f + e^P \partial_{\theta}^{K+1} f + \sum_{\substack{k_1 + \cdots + k_j + k_f = K + 1}} \partial_{\theta}^{k_1} P * \cdots * \partial_{\theta}^{k_j} P * \partial_{\theta}^{k_f} (e^P f),
    \end{equation}
    where crucially each of the $k_1, \ldots, k_j, k_f$ will not exceed $K$. Therefore, one may apply the product estimate Lemma~\ref{lem:weightedl2app} to find
    \begin{multline*}
        \left \| \sum_{\substack{k_1 + \cdots + k_j + k_f = K + 1}} \partial_{\theta}^{k_1} P * \cdots * \partial_{\theta}^{k_j} P * \partial_{\theta}^{k_f} (e^P f) \right \|_{L^2} 
        \\[0.4em] \lesssim \sum_{j=0}^{K} ( \| \partial_{\theta}^j P \|_{L^2} + \| \partial_{\theta}^j (e^P f) \|_{L^2} ) \cdot ( \| \partial_{\theta} P \|_{L^{\infty}} + \| \partial_{\theta} (e^P f) \|_{L^{\infty}} )^{K + 1 - j}.
    \end{multline*}

    We now apply the induction hypothesis to change $\| \partial_{\theta}^j (e^P f) \|_{L^2}$ to $\| e^P \partial_{\theta}^j P \|_{L^2}$. We thus deduce
    \begin{multline*}
        \left \| \sum_{\substack{k_1 + \cdots + k_j + k_f = K + 1}} \partial_{\theta}^{k_1} P * \cdots * \partial_{\theta}^{k_j} P * \partial_{\theta}^{k_f} (e^P f) \right \|_{L^2} 
        \\[0.4em] \lesssim (1 + C_*) \sum_{j=0}^{K} ( \| \partial_{\theta}^j P \|_{L^2} + \| e^P \partial_{\theta}^j f \|_{L^2} ) \cdot ( \| \partial_{\theta} P \|_{L^{\infty}} + \| \partial_{\theta} (e^P f) \|_{L^{\infty}} )^{K + 1 - j}.
    \end{multline*}
    On the other hand, the first two terms on the right hand side of \eqref{eq:induction_weighted} have $L^2$ norm bounded by $C_* \| \partial_{\theta}^{K+1} P \|_{L^2} + \| e^P \partial_{\theta}^{K+1} f \|_{L^2}$. This proves the estimate \eqref{eq:weightedl2app} for $k = K + 1$ and completes the proof of the lemma.
\end{proof}

\subsection{Energy estimates for \texorpdfstring{$Q$}{Q}}

We now prove the analogous energy estimate for $\mathcal{E}^{(K)}_Q(t)$.

\begin{proposition} \label{prop:Q_energy}
    Let $(P, Q)$ be a solution to the Gowdy symmetric system \eqref{eq:P_evol}--\eqref{eq:Q_evol} obeying the bootstrap assumptions \eqref{eq:bootstrap_pq}--\eqref{eq:bootstrap_dth2}. Then there exists a constant $A_*$ depending only on $C_*$, as well as a constant $C^{(K)}$ depending on $C_*$ and the regularity index $K \in \{0, 1, \ldots, N \}$, such that
    \begin{equation} \label{eq:energy_der_Q}
        \left| t \frac{d}{dt} \mathcal{E}_{Q}^{(K)} (t) \right| \leq
        (2 A_* + 2 C_* t^{\upgamma}) \mathcal{E}^{(K)}(t) + \sum_{k=0}^{K-1} C^{(K)} t^{- 2(1 - \upgamma) (K - k)} \mathcal{E}^{(k)}(t),
    \end{equation}
    where it is understood that the final term is absent if $K=0$.
\end{proposition}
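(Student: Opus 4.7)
The proof mirrors that of Proposition \ref{prop:P_energy}, with two additional considerations: the $e^{2P}$ weight in $\mathcal{E}^{(K)}_Q$ must be differentiated in time, and the mixed structure of \eqref{eq:Q_evol} produces leading terms with derivatives of both $P$ and $Q$. First, I commute \eqref{eq:Q_evol} with $\partial_\theta^K$, producing the leading-order RHS contribution
\[
-2(t\partial_t P)(t\partial_t \partial_\theta^K Q) + 2(t\partial_\theta P)(t\partial_\theta^{K+1} Q) - 2(t\partial_t \partial_\theta^K P)(t\partial_t Q) + 2(t\partial_\theta \partial_\theta^K P)(t\partial_\theta Q),
\]
plus Leibniz commutator errors of schematic type analogous to \eqref{eq:P_wave_K_b}--\eqref{eq:P_wave_K_c}, with the roles of $P$ and $Q$ appropriately mixed.

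Next, I compute $t\frac{d}{dt}\mathcal{E}^{(K)}_Q(t)$ using the multiplier $e^{2P}(t\partial_t \partial_\theta^K Q)$ against the commuted equation, with integration by parts in $\theta$ to cancel the $t^2 \partial_\theta^{K+2} Q$ term against the spatial part $t^2 (\partial_\theta^{K+1} Q)^2$ of the energy. The new contributions compared to Proposition \ref{prop:P_energy} are: (i) the weight derivative $t\partial_t(e^{2P}) = 2e^{2P}(t\partial_t P)$, bounded by $2C_*\mathcal{E}^{(K)}_Q$ via \eqref{eq:bootstrap_pq} and absorbable into $2A_*\mathcal{E}^{(K)}$; (ii) after integration by parts, a term with prefactor $2 e^{2P}\partial_\theta P$ which, combined with the wave-RHS leading term $2(t\partial_\theta P)(t\partial_\theta^{K+1} Q)$, produces via \eqref{eq:bootstrap_dth} (giving $|t\partial_\theta P| \leq C_* t^\upgamma$) and Cauchy--Schwarz exactly the advertised $2 C_* t^\upgamma \mathcal{E}^{(K)}$ coefficient. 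The remaining leading term $-2(t\partial_t P)(t\partial_t \partial_\theta^K Q)$ contributes $\leq 2 C_* \mathcal{E}^{(K)}_Q$ via \eqref{eq:bootstrap_pq}, while the mixed terms involving $t\partial_t Q$ and $t\partial_\theta Q$ are rewritten as $e^{-P}\cdot (e^P t\partial_t Q)$ and $e^{-P}\cdot(e^P t\partial_\theta Q)$ and estimated via \eqref{eq:bootstrap_other} and \eqref{eq:bootstrap_pq} respectively; each is $\leq C_* \mathcal{E}^{(K)}$, absorbed into $2A_*$.

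The commutator errors are handled exactly as in Proposition \ref{prop:P_energy}: re-express the schematic sums in terms of the natural weighted quantities $e^P t\partial_t Q$ and $e^P t\partial_\theta Q$ in place of bare $t\partial_t Q, t\partial_\theta Q$; apply Lemma \ref{lem:weightedl2} (Sobolev product) with top-order derivatives in $L^2$ and low-order factors in $L^\infty$ controlled by \eqref{eq:bootstrap_dth}--\eqref{eq:bootstrap_dth2} at cost $t^{-(1-\upgamma)}$ for each extra $\partial_\theta$; then apply Lemma \ref{lem:weightedl2app} to convert $\|\partial_\theta^k (e^P \cdot)\|_{L^2}$ back to $\|e^P \partial_\theta^k (\cdot)\|_{L^2}$. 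Cauchy--Schwarz against the multiplier (of size $\sqrt{\mathcal{E}^{(K)}}$) followed by Young's inequality then produces the sum $\sum_{k<K} C^{(K)} t^{-2(1-\upgamma)(K-k)} \mathcal{E}^{(k)}$.

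\textbf{Main obstacle.} The delicate part is the bookkeeping around the $e^{2P}$ weight in the commutator errors: a naive expansion generates many additional terms in $\partial_\theta^j P$, but by treating $e^P t\partial_t Q$ and $e^P t\partial_\theta Q$ as atomic weighted objects (uniformly controlled by bootstraps \eqref{eq:bootstrap_pq}--\eqref{eq:bootstrap_other}), no additional derivative loss is incurred beyond the sharp $t^{-(1-\upgamma)}$ from Step 2 of the heuristic in Section \ref{sub:intro_proof}. Lemma \ref{lem:weightedl2app} provides the clean interface between these weighted combinations and the energy $\mathcal{E}^{(K)}_Q$, with the resulting $L^\infty$ factors on $P$ absorbed by the bootstraps.
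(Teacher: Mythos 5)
Your proof is correct and follows essentially the same approach as the paper: commute \eqref{eq:Q_evol} with $\partial_\theta^K$, use the multiplier $e^{2P}(t\partial_t\partial_\theta^K Q)$ with the derivative identity for $\mathcal{E}^{(K)}_Q$, bound top-order pieces via \eqref{eq:bootstrap_pq}--\eqref{eq:bootstrap_other} and the weight-derivative/IBP term via \eqref{eq:bootstrap_dth}, handle the commutator errors with Lemma~\ref{lem:weightedl2} and Lemma~\ref{lem:weightedl2app} at cost $t^{-(1-\upgamma)}$ per derivative, and close with Cauchy--Schwarz and Young. One small imprecision in your item~(ii): the IBP-generated term $-2t\partial_\theta P\cdot e^{2P}(t\partial_t\partial_\theta^K Q)(t\partial_\theta^{K+1}Q)$ and the wave-RHS piece $+2(t\partial_\theta P)(t\partial_\theta^{K+1}Q)$ times the multiplier are in fact equal and opposite, so they cancel rather than jointly producing the $2C_*t^\upgamma$ coefficient — estimating them separately (as the paper effectively does) yields an $O(C_*t^\upgamma)\,\mathcal{E}^{(K)}$ contribution, but the exact factor is not what your accounting suggests; this has no bearing on the validity of \eqref{eq:energy_der_Q}.
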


\begin{proof}
    Commuting the $Q$ wave equation \eqref{eq:Q_evol} with $\partial_{\theta}^K$, one yields for $1 \leq K \leq N$:
    \begin{align}
        \addtocounter{equation}{1}
        (t \partial_t)^2 \partial_{\theta}^K Q - t^2 \partial_{\theta}^2 \partial_{\theta}^K Q &= \nonumber \\[0.6em] 
        &\mkern-72mu - 2 t \partial_t \partial_{\theta}^K P \cdot t \partial_t Q - 2 t \partial_t P \cdot t \partial_t \partial_{\theta}^{K+1} Q + 2 t \partial_{\theta}^{K+1} P \cdot t \partial_{\theta} Q + 2 t \partial_t P \cdot t \partial_{\theta}^{K+1} Q 
        \tag{\theequation a} \label{eq:Q_wave_K_a} \\[0.6em]
        &\mkern-36mu + \sum_{\substack{k_p + k_q = K}} \left( t \partial_t \partial_{\theta}^{k_p} P * t \partial_t \partial_{\theta}^{k_q} Q + t \partial_{\theta}^{k_p + 1} P * t \partial_{\theta}^{k_q + 1} Q \right).
        \tag{\theequation b} \label{eq:Q_wave_K_b}
    \end{align}
    The base case $K = 0$ is just the equation \eqref{eq:Q_evol} and will be easier to deal with. Note it again crucial that neither of $k_p, k_q$ is allowed to equal $K$. 

    In light of the $e^{2P}$ weight appearing in $\mathcal{E}^{(K)}_Q(t)$ in Definition~\ref{def:energy}, we will be required to estimate $e^P$ times the right hand side of this equation in $L^2$. For the lower order term \eqref{eq:Q_wave_K_b}, we use a similar method to the estimate for \eqref{eq:P_wave_K_b} in Proposition~\ref{prop:P_energy}. For the first term in \eqref{eq:Q_wave_K_b}, it is convenient to write instead
    \[
        e^P \sum_{k_p + k_q = K} t \partial_t \partial_{\theta}^{k_p} P * t \partial_t \partial_{\theta}^{k_q} Q = \sum_{\substack{i \geq 0 \\ k_1 + \cdots + k_i + k_p + k_q = K}} \partial_{\theta}^{k_1} P * \cdots * \partial_{\theta}^{k_i} P * t \partial_t \partial_{\theta}^{k_p} P * \partial_{\theta}^{k_q} (e^P t \partial_t Q) 
    \]
    Then one may estimate this in $L^2$ in the same way as the term \eqref{eq:P_wave_K_b'}, eventually getting
    \begin{multline*} 
        \left \| e^P \sum_{k_p + k_q = K} t \partial_t \partial_{\theta}^{k_p} P * t \partial_t \partial_{\theta}^{k_q} Q \right \| 
        \\[0.4em] \lesssim_{C_*} \sum_{k=0}^{K-1} ( \| \partial_{\theta}^k P \|_{L^2} +  \| t \partial_t \partial_{\theta}^k P \|_{L^2} + \| e^P t \partial_t \partial_{\theta}^k Q \|_{L^2}) ( \| \partial_{\theta} P \|_{L^{\infty}} + \| t \partial_t \partial_{\theta} P \|_{L^{\infty}} + \| \partial_{\theta} (e^P t \partial_t Q) \|_{L^{\infty}} )^{K - k}
    \end{multline*}

    Thereby using Definition~\ref{def:energy} and the bootstrap assumptions \eqref{eq:bootstrap_pq}--\eqref{eq:bootstrap_dth2}, we get
    \begin{equation*}
        \left \| e^P \sum_{k_p + k_q = K} t \partial_t \partial_{\theta}^{k_p} P * t \partial_t \partial_{\theta}^{k_q} Q \right \| 
        \lesssim_{C_*, K} 
        \sum_{k=0}^{K-1} t^{- (1 - \upgamma) (K - k)} \sqrt{ \mathcal{E}^{(k)}(t) }.
    \end{equation*}
    The same method allows us to estimate the second term in \eqref{eq:Q_wave_K_b} in the same way, thus
    \begin{equation} \label{eq:Q_wave_K_b_est}
        \| e^P \text{\eqref{eq:Q_wave_K_b}} \|_{L^2} \lesssim_{C_*, K}
        \sum_{k=0}^{K-1} t^{- (1 - \upgamma) (K - k)} \sqrt{ \mathcal{E}^{(k)}(t) }.
    \end{equation}

    To conclude, we first write the derivative identity; for $0 \leq K \leq N$:
    \begin{multline*}
        t \partial_t \left( \frac{1}{2} \left( e^{2P} (t \partial_t \partial_{\theta}^K Q)^2 + e^{2P} t^2 (\partial_{\theta}^{K+1} Q)^2 \right) \right) 
        \\[0.6em]
        = e^{P} (t \partial_t \partial_{\theta}^K Q) \, e^P \left[ (t \partial_t)^2 \partial_{\theta}^K Q - t^2 \partial_{\theta}^{K+2} Q \right] 
        + \partial_{\theta} \left( e^{2P} t \partial_t \partial_{\theta}^K Q \cdot t^2 \partial_{\theta}^{K+1} Q \right)
        \\[0.6em]
        + t \partial_t P \cdot e^{2P} (t \partial_t \partial_{\theta}^K Q)^2 + (1 + t \partial_t P) e^{2P} t^2 (\partial_{\theta}^{K+1} Q)^2 - 2 t \partial_{\theta} P \cdot e^{2P} t \partial_t \partial_{\theta}^K Q \cdot t \partial_{\theta}^{K+1} Q.
    \end{multline*}

    Next, integrating this over $\theta \in \mathbb{S}^1$, one deduces that
    \begin{multline*}
        t \frac{d}{dt} \mathcal{E}^{(K)}_{Q}(t) 
        = \int_{\mathbb{S}^1} \left[ e^P t \partial_t  \partial_{\theta}^{K} Q \left( e^P \text{\eqref{eq:Q_wave_K_a}} + t \partial_t P \cdot e^P t \partial_t \partial_{\theta}^K Q \right) + (1 + t \partial_t P) e^{2P} t^2 (\partial_{\theta}^{K+1} Q)^2 \right] \, dx  \\[0.6em]
        - \int_{\mathbb{S}^1} 2 t \partial_{\theta} P \cdot e^P t \partial_t \partial_{\theta}^K Q \cdot e^P t \partial_{\theta}^{K+1} Q \, dx
        + \int_{\mathbb{S}^1} \left[ (t \partial_t \partial_{\theta}^K Q) \left( e^P \text{\eqref{eq:Q_wave_K_b}} \right) \right] \, dx.
    \end{multline*}
    Finally, using the expression for \eqref{eq:Q_wave_K_a}, the bootstrap assumptions \eqref{eq:bootstrap_pq} and \eqref{eq:bootstrap_other}, and the expression for $\mathcal{E}^{(K)}(t)$ in Definition~\ref{def:energy}, one may check that the first line of the right hand side is bounded by say $(10 C_*^2 + 9) \mathcal{E}^{(K)}(t)$. 

    On the other hand, by using the bootstrap assumption \eqref{eq:bootstrap_dth} for the second integral and \eqref{eq:Q_wave_K_b_est} for the third, one yields:
    \begin{gather*}
        \left| \int_{\mathbb{S}^1} 2 t \partial_{\theta} P \cdot e^P t \partial_t \partial_{\theta}^K Q \cdot e^P t \partial_{\theta}^{K+1} Q \, dx \right| \leq 2 C_* t^{\upgamma} \mathcal{E}^{(K)}(t), \\[0.6em]
        \left| \int_{\mathbb{S}^1} \left[ (t \partial_t \partial_{\theta}^K Q) \left( e^P \text{\eqref{eq:Q_wave_K_b}} \right) \right] \, dx \right| \lesssim \sqrt{ \mathcal{E}^{(K)}(t) } \cdot \sum_{k=1}^{K-1} t^{(1 - \upgamma)(K - k)} \sqrt{ \mathcal{E}^{(k)}(t) }.
    \end{gather*}
    Therefore applying Young's inequality, one deduces \eqref{eq:energy_der_Q} for $2 A_* = 10(C_*^2 + 1)$.
\end{proof}

\subsection{The energy hierarchy} \label{sub:energy_induction}

We now use Propositions~\ref{prop:P_energy}, \ref{prop:Q_energy} together with the initial data assumption \eqref{eq:data_energy_boundedness_P}--\eqref{eq:data_energy_boundedness_Q}, to show that the total energy of order $K$, $\mathcal{E}^{(K)}(t)$, grows at most polynomially in $t$ as $t \downarrow 0$, and moreover that the rate of blow-up depends linearly in $K$. 

\begin{proposition} \label{prop:energy_hierarchy}
    Let $(P, Q)$ be a solution to the Gowdy symmetric system \eqref{eq:P_evol}--\eqref{eq:Q_evol} in the interval $t \in [t_b, t_0]$, such that the solution obeys the bootstrap assumptions \eqref{eq:bootstrap_pq}--\eqref{eq:bootstrap_dth2}. Assuming also the bounds \eqref{eq:data_energy_boundedness_P}--\eqref{eq:data_energy_boundedness_Q} for the initial data, then there exist a constant $A_*$ depending only on $C_*$, as well as constants $C^{(K)}$ depending on $C_*$, $K$ and $\upzeta$, such that for $0 \leq K \leq N$, the total energy $\mathcal{E}^{(K)}(t)$ satisfies the bound: 
    \begin{equation} \label{eq:energy_hierarchy}
        \mathcal{E}^{(K)}(t) \leq C^{(K)} \, t^{- 2 A_* - 2 K (1 - \upgamma)}.
    \end{equation}
\end{proposition}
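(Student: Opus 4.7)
The plan is to proceed by induction on the regularity index $K \in \{0, 1, \ldots, N\}$, summing the two energy estimates of Propositions~\ref{prop:P_energy} and \ref{prop:Q_energy} and applying Grönwall-type arguments. Up to relabelling the constant $A_*$ by an appropriate multiple of the one appearing in Propositions~\ref{prop:P_energy}--\ref{prop:Q_energy}, adding \eqref{eq:energy_der_P} and \eqref{eq:energy_der_Q} yields at each order
\begin{equation*}
    \left| t \tfrac{d}{dt} \mathcal{E}^{(K)}(t) \right| \leq (2 A_* + 2 C_* t^{\upgamma}) \, \mathcal{E}^{(K)}(t) + \sum_{k=0}^{K-1} \hat{C}^{(K)} \, t^{-2(1-\upgamma)(K-k)} \, \mathcal{E}^{(k)}(t).
\end{equation*}
The initial data bounds \eqref{eq:data_energy_boundedness_P}--\eqref{eq:data_energy_boundedness_Q}, together with the pointwise bound $-\dot P_D \geq \upgamma''$ from \eqref{eq:data_linfty}, will supply $\mathcal{E}^{(K)}(t_0) \leq C \upzeta$ at every order $0 \leq K \leq N$.

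For the base case $K = 0$ the sum on the right-hand side is absent and the resulting homogeneous linear inequality is handled by a direct Grönwall estimate integrated backwards from $t_0$ to $t \in [t_b, t_0]$. Since $\int_t^{t_0} s^{\upgamma - 1} \, ds \leq t_0^{\upgamma}/\upgamma$ is uniformly bounded, the exponential contribution of the $2C_* t^{\upgamma}$ term is harmless, and one obtains $\mathcal{E}^{(0)}(t) \leq C^{(0)} t^{-2A_*}$.

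For the inductive step, suppose \eqref{eq:energy_hierarchy} holds for all $k < K$. Substituting the inductive hypotheses into the sum, the crucial algebraic identity
\begin{equation*}
    t^{-2(1-\upgamma)(K-k)} \cdot C^{(k)} \, t^{-2A_* - 2k(1-\upgamma)} = C^{(k)} \, t^{-2A_* - 2K(1-\upgamma)}
\end{equation*}
is independent of $k$, so the entire sum collapses to a single source term $\tilde D^{(K)} \, t^{-2A_* - 2K(1-\upgamma)}$. Introducing the rescaled quantity $F(t) := t^{2A_* + 2K(1-\upgamma)} \, \mathcal{E}^{(K)}(t)$ and passing to the logarithmic time variable $s = \log(t_0/t)$, a short computation gives
\begin{equation*}
    \frac{dF}{ds} \leq \tilde D^{(K)} + \bigl( 2C_* t^{\upgamma} - 2K(1-\upgamma) \bigr) F(s).
\end{equation*}
For $K \geq 1$ and $t_*$ chosen small enough that $2C_* t_*^{\upgamma} \leq K(1-\upgamma)$, the coefficient of $F$ is uniformly bounded above by $-\eta < 0$, and the linear inequality $dF/ds \leq \tilde D^{(K)} - \eta F$ gives $F(s) \leq \max\bigl( F(0), \, \tilde D^{(K)}/\eta \bigr) \leq C^{(K)}$, which is the desired bound.

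The main delicate point is bookkeeping: one must track that $A_*$ depends on $C_*$ alone and is uniform in $K$, while the $K$-dependence enters only through the additive $2K(1-\upgamma)$ in the exponent. This structural separation is what permits the later interpolation argument sketched in Section~\ref{sub:intro_proof} to convert $L^2$ energy decay into $L^{\infty}$ decay of quantities such as $\partial_{\theta}^2 P$ at the sharp rate $2(1-\upgamma) - \updelta$, with $\updelta \downarrow 0$ as $N \to \infty$; any $K$-dependence creeping into $A_*$ would destroy this improvement.
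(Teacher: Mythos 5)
Your proposal is correct and follows essentially the same inductive strategy as the paper: add the $P$- and $Q$-energy estimates from Propositions~\ref{prop:P_energy}--\ref{prop:Q_energy}, then induct on $K$, exploiting the fact that the ``coefficient'' $A_*$ in front of $\mathcal{E}^{(K)}$ is independent of $K$ while the $K$-dependence enters only through the additive $2K(1-\upgamma)$ in the exponent. The only difference is cosmetic: the paper uses the $K$-independent integrating factor $t^{2A_*}$, applies Gr\"onwall, and then directly integrates the inhomogeneous term $\int_t^{t_0}\tilde t\,^{-2(1-\upgamma)\bar K}\tfrac{d\tilde t}{\tilde t}\lesssim t^{-2(1-\upgamma)\bar K}$ (which converges since $\bar K\geq 1$); you instead absorb the full power $t^{2A_*+2K(1-\upgamma)}$ into the integrating factor, so that in the logarithmic time variable the rescaled energy $F$ satisfies a linear ODE with negative damping coefficient $-(2K(1-\upgamma)-2C_*t^\upgamma)$ and constant forcing, which you control with an elementary comparison. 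Both arguments hinge on $K(1-\upgamma)>0$ for $K\geq 1$, so they are two views of the same fact; yours makes the damping mechanism slightly more transparent. One minor imprecision worth flagging: the initial data bounds \eqref{eq:data_energy_boundedness_P}--\eqref{eq:data_energy_boundedness_Q} carry a $t_0^\upgamma$ weight on the zeroth-order $\partial_\theta^K P_D$ term, so what one actually obtains is $\mathcal{E}^{(K)}(t_0)\lesssim\upzeta\,t_0^{-2\upgamma}$ rather than $\leq C\upzeta$; this does not affect your conclusion since $F(0)=t_0^{2A_*+2K(1-\upgamma)}\mathcal{E}^{(K)}(t_0)\lesssim\upzeta\,t_0^{2A_*-2\upgamma+2K(1-\upgamma)}\lesssim\upzeta$ using $A_*>\upgamma$ (which is what the paper invokes in stating $\sum_k t_0^{2A_*}\mathcal{E}^{(k)}(t_0)\leq\upzeta$), but you should state the $t_0^{-2\upgamma}$ factor explicitly rather than asserting a uniform bound at $t_0$.
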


\begin{proof}
    Combining Propositions~\ref{prop:P_energy} and \ref{prop:Q_energy}, it is straightforward to show that for some $A_* > 0$ depending only on $C_*$ and constants $C^{(K)} > 0$ (we allow $A_*$ to differ from the previous propositions), one has the following derivative estimate:
    \begin{equation*}
        \left| t \frac{d}{dt} \mathcal{E}^{(K)}(t) \right| \leq
        (2 A_* + 2 C_* t^{\upgamma} ) \mathcal{E}^{(K)}(t) +
        C^{(K)} \sum_{k = 0}^{K-1} r^{- (1 - \upgamma) (K-k)} \mathcal{E}^{(k)}(t).
    \end{equation*}

    As we integrate ``backwards'' i.e.~towards $r = 0$, the derivative estimate we actually use is the following:
    \begin{equation*}
        t \frac{d}{dt} \mathcal{E}^{(K)}(t) \geq
        - (2 A_* + 2 C_* t^{\upgamma}) \mathcal{E}^{(K)}(t) - 
        C^{(K)} \sum_{k = 0}^{K-1} r^{- \upgamma(K-k)} \mathcal{E}^{(k)}(t).
    \end{equation*}
    In fact, using the integrating factor $t^{2 A_*}$, which is \emph{crucially independent of $K$}, we write:
    \begin{equation} \label{eq:energy_der}
        t \frac{d}{dt} \left( t^{2 A_*} \mathcal{E}^{(K)}(t) \right) \geq
        - 2 C_* t^{\upgamma} \left( t^{2A_*} \mathcal{E}^{(K)}(t) \right) -
        C^{(K)} \sum_{k = 0}^{K-1} t^{- 2 (1 - \upgamma) (K-k)} \left( t^{2 A_*} \mathcal{E}^{(k)}(t) \right).
    \end{equation}
     
    We will now use \eqref{eq:energy_der} and induction on $K \in \{0, \ldots, N \}$ that
    \begin{equation} \label{eq:energy_hierarchy_induction}
        t^{2 A_*} \mathcal{E}^{(K)}(t) \lesssim t^{- 2 (1 - \upgamma) K},
    \end{equation}
    where the implied constant is now allowed to depend on $C_*$, $K$ and $\upzeta$. This is equivalent to \eqref{eq:energy_hierarchy}. Note that the dependence on $\upzeta$ comes from the fact that the initial data bound \eqref{eq:data_energy_boundedness_P}--\eqref{eq:data_energy_boundedness_Q} implies that
    \begin{equation} \label{eq:energy_hierarchy_init}
        \sum_{k=0}^N t_0^{2 A_*} \mathcal{E}^{(k)}(t_0) \leq \upzeta.
    \end{equation}

    For the base case $K = 0$, one simply applies Gr\"onwall's inequality to \eqref{eq:energy_der} for $K = 0$; then for $t \in [t_b, t_0]$
    \[
        t^{2A_*} \mathcal{E}^{(0)}(t) \leq \exp(F(t_0, t)) \cdot t_0^{2 A_*} \mathcal{E}^{(0)}(t_0), \quad \text{ where }
        \quad F(s_a, s_b) = \int^{s_a}_{s_b} 2 C_* \tilde{t}^{\upgamma} \frac{d \tilde{t}} {\tilde{t}}.
    \]
    Since $F(s_a, s_b)$ is uniformly bounded for $s_a, s_b \in [t_b, t_0]$, it follows from the initial data bound \eqref{eq:energy_hierarchy_init} that \eqref{eq:energy_hierarchy_induction} holds for $K = 0$.

    Moving onto the induction step, assume that \eqref{eq:energy_hierarchy_induction} holds for $0 \leq K < \bar{K} \leq N$; we wish to prove it also holds for $K = \bar{K}$. Applying Gr\"onwall's inequality to \eqref{eq:energy_der} for $K = \bar{K}$, we have that
    \[
        t^{2 A_*} \mathcal{E}^{(\bar{K})}(t) \leq \exp(F(t_0, t)) \cdot t_0^{2 A_*} \mathcal{E}^{(\bar{K})}(t_0) + \int^{t_0}_t \exp( F(\tilde{t}, t)) C^{(\bar{K})} \sum_{k=0}^{\bar{K}-1} \tilde{t}^{-2(1 - \upgamma)(\bar{K}-k)} \tilde{t}^{2 A_*} \mathcal{E}^{(k)}(\tilde{t}) \frac{d \tilde{t}}{\tilde{t}}.
    \]
    It therefore follows from the initial data bound \eqref{eq:energy_hierarchy_init} and the inductive hypothesis for $\tilde{t}^{2 A_*} \mathcal{E}^{(k)}(\tilde{t})$ that
    \[
        t^{2 A_*} \mathcal{E}^{(\bar{K})}(t) \lesssim \upzeta + \int_t^{t_0} \sum_{k=0}^{\bar{K}-1} \tilde{t}^{-2(1 - \upgamma)(\bar{K} - k)} \cdot \tilde{t}^{- 2 \upgamma k} \frac{d \tilde{t}}{\tilde{t}} \lesssim t^{-2 (1 -\upgamma) \bar{K}}
    \]
    as required. This completes the proof of the proposition.
\end{proof}

\subsection{An auxiliary energy estimate} \label{sub:energy_aux}

In order to recover precise asymptotics for $Q(t, \theta)$ as $t \to 0$ we also make use of the following energy estimate for the energy $\mathcal{E}_{Q,u}^{(K)}(t)$, see Definition~\ref{def:energy}.

\begin{proposition} \label{prop:energy_aux}
    Let $(P, Q)$ be a solution to the Gowdy symmetric system \eqref{eq:P_evol}--\eqref{eq:Q_evol} obeying the bootstrap assumptions \eqref{eq:bootstrap_pq}--\eqref{eq:bootstrap_dth2}. Then there exists a constant $A_*$ depending only on $C_*$, as well as a constant $C^{(K)}$ depending on $C_*$ and the regularity index $K \in \{0, 1, \ldots, N\}$ such that
    \begin{equation} \label{eq:energy_aux_der}
        \left| t \frac{d}{dt} \mathcal{E}_{Q, u}^{(K)}(t) \right| \leq 2 A_* \mathcal{E}^{(K)}_{Q, u}(t) 
        + C^{(K)} \sqrt{2 \mathcal{E}^{(K)}_{Q, u}(t)} \cdot \left( t^{-(1 - \upgamma)} \sqrt{2 \mathcal{E}^{(K-1)}_{Q,u}(t)} + t^{\upgamma'} \sqrt{2 \mathcal{E}^{(K)}(t)} \right).
    \end{equation}
    
    With initial data as in Proposition~\ref{prop:energy_hierarchy}, one can moreover show that for some $C^{(K)} > 0$,
    \begin{equation} \label{eq:energy_aux_bound}
        \mathcal{E}_{Q, u}^{(K)}(t) \leq C^{(K)} t^{-2 A_* - 2K(1 - \upgamma)}.
    \end{equation}
    Here $C^{(K)}$ may also depend on the initial data $(Q_D, \dot{Q}_D)$.
\end{proposition}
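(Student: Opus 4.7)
My plan is to prove Proposition~\ref{prop:energy_aux} in two parts mirroring Propositions~\ref{prop:P_energy}--\ref{prop:energy_hierarchy}: first deriving the differential inequality \eqref{eq:energy_aux_der} by commuting and integrating by parts, then iterating inductively to obtain the quantitative bound \eqref{eq:energy_aux_bound}. The principal adaptation compared with Proposition~\ref{prop:Q_energy} is that $\mathcal{E}^{(K)}_{Q,u}(t)$ carries no $e^{2P}$ weight, so the bootstrap bound $|e^{-P}| \leq C_* t^{\upgamma'}$ from \eqref{eq:bootstrap_other} must be invoked repeatedly to trade weighted for unweighted quantities at the cost of a $t^{\upgamma'}$ gain.

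To derive \eqref{eq:energy_aux_der}, I would recycle the $\partial_\theta^K$-commuted $Q$-wave equation \eqref{eq:Q_wave_K_a}--\eqref{eq:Q_wave_K_b} from the proof of Proposition~\ref{prop:Q_energy}, multiply by $t \partial_t \partial_\theta^K Q$, and integrate over $\mathbb{S}^1$ (all boundary terms vanish). The top-order terms in \eqref{eq:Q_wave_K_a} whose coefficient is $t \partial_t P$ or $t \partial_\theta P$ (bounded in $L^\infty$ via \eqref{eq:bootstrap_pq}, \eqref{eq:bootstrap_dth}) produce the $2 A_* \mathcal{E}^{(K)}_{Q,u}$ contribution. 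The remaining top-order terms, where all $K$ derivatives fall on $P$ while the $Q$-factor is $t \partial_t Q$ or $t \partial_\theta Q$, I would handle by writing $t \partial_\bullet Q = e^{-P} \cdot (e^P t \partial_\bullet Q)$ and bounding in $L^\infty$ by $C_*^2 t^{\upgamma'}$ via \eqref{eq:bootstrap_pq}--\eqref{eq:bootstrap_other}; this yields the $t^{\upgamma'} \sqrt{2 \mathcal{E}^{(K)}}$ contribution. For the lower-order schematic terms in \eqref{eq:Q_wave_K_b} (with $1 \leq k_p, k_q \leq K-1$), I would apply the product inequality Lemma~\ref{lem:weightedl2}, extracting one derivative from the $P$-factor so as to invoke $\|\partial_\theta P\|_{L^\infty}, \|\partial_\theta t\partial_t P\|_{L^\infty} \lesssim t^{-(1-\upgamma)}$ from \eqref{eq:bootstrap_dth}--\eqref{eq:bootstrap_dth2}, and controlling the remaining derivatives on $Q$ in $L^2$ by $\sqrt{2 \mathcal{E}^{(K-1)}_{Q,u}}$; this produces the $t^{-(1-\upgamma)} \sqrt{2 \mathcal{E}^{(K-1)}_{Q,u}}$ contribution.

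For the hierarchy bound \eqref{eq:energy_aux_bound}, I would divide \eqref{eq:energy_aux_der} through by $2 \sqrt{\mathcal{E}^{(K)}_{Q,u}}$ to obtain a linear differential inequality for $\sqrt{\mathcal{E}^{(K)}_{Q,u}(t)}$, then apply the integrating factor $t^{A_*}$ --- crucially, with the same $A_*$ as in Proposition~\ref{prop:energy_hierarchy}, so that the bound $\sqrt{\mathcal{E}^{(K)}(t)} \lesssim t^{-A_* - K (1-\upgamma)}$ is directly usable. By induction on $K$, the base case $K = 0$ reduces to Gr\"onwall since no $\mathcal{E}^{(-1)}_{Q,u}$ term appears, with the initial quantity $\mathcal{E}^{(0)}_{Q,u}(t_0)$ finite and absorbed into the $(Q_D, \dot{Q}_D)$-dependence of $C^{(0)}$. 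For the inductive step, substituting the inductive hypothesis together with Proposition~\ref{prop:energy_hierarchy} and integrating backward in $d \log \tilde t$ from $t_0$ to $t$, both inhomogeneous integrands reduce to $O(\tilde t^{-K(1-\upgamma)})$ (up to an additional $t^{\upgamma'}$ factor), yielding $t^{A_*} \sqrt{\mathcal{E}^{(K)}_{Q,u}(t)} \lesssim t^{-K(1-\upgamma)}$, equivalent to the desired bound. The main obstacle I anticipate is handling the borderline schematic term $t^2 \partial_\theta^{k_p+1} P \cdot \partial_\theta^{k_q+1} Q$ with $k_q = K-1$: Lemma~\ref{lem:weightedl2} combined with $t \|\partial_\theta P\|_{L^\infty} \lesssim t^{\upgamma}$ gives a contribution $\lesssim t^{\upgamma} \sqrt{2 \mathcal{E}^{(K)}_{Q,u}} + t^{\upgamma'} \sqrt{2 \mathcal{E}^{(K)}_P}$, whose first summand is naively at the top order and can only be absorbed into $2 A_* \mathcal{E}^{(K)}_{Q,u}$ after enlarging $A_*$ by an amount depending only on $C_*$; this enlarged $A_*$ must then be matched with the one in Proposition~\ref{prop:energy_hierarchy} for the hierarchy induction to close without loss.
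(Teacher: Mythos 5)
Your proposal is correct and follows essentially the same route as the paper. The only cosmetic difference is in the bookkeeping: the paper regroups the commuted $Q$-wave equation so that the top-order-in-$P$ terms (those with $k_p = K$) sit alongside the lower-order terms, and then estimates the whole collection at once via Lemma~\ref{lem:weightedl2}, yielding exactly the two contributions $t^{-(1-\upgamma)}\sqrt{2\mathcal{E}^{(K-1)}_{Q,u}}$ and $t^{\upgamma'}\sqrt{2\mathcal{E}^{(K)}}$ that you identify; you instead keep the original grouping \eqref{eq:Q_wave_K_a}--\eqref{eq:Q_wave_K_b} from Proposition~\ref{prop:Q_energy} and strip the $e^P$ weight off the coefficients $t\partial_t Q, t\partial_\theta Q$ by hand using \eqref{eq:bootstrap_other}, which lands in the same place. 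Your treatment of the borderline term with $k_q = K-1$ (absorbing the $t^{\upgamma}\sqrt{\mathcal{E}^{(K)}_{Q,u}}$ piece into $2A_*\mathcal{E}^{(K)}_{Q,u}$ and enlarging $A_*$) is a legitimate alternative to the paper's choice of pushing one more derivative onto $P$, and as you note the enlarged $A_*$ only needs to be at least as large as that of Proposition~\ref{prop:energy_hierarchy} — it need not equal it exactly, since one can always take the maximum over the finitely many propositions in which $A_*$ appears. The inductive step for \eqref{eq:energy_aux_bound} via dividing by $\sqrt{\mathcal{E}^{(K)}_{Q,u}}$, integrating with the weight $t^{A_*}$, and feeding in the already-established bound on $\mathcal{E}^{(K)}$ from Proposition~\ref{prop:energy_hierarchy} is exactly what the paper does.
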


\begin{proof}
    For the auxiliary estimate we again commute the wave equation \eqref{eq:Q_evol} with $\partial_{\theta}^K$, though we shall group terms in a slightly different way to \eqref{eq:Q_wave_K_a} and \eqref{eq:Q_wave_K_b}. We have:
    \begin{align}
        \addtocounter{equation}{1}
        (t \partial_t)^2 \partial_{\theta}^K Q - t^2 \partial_{\theta}^2 \partial_{\theta}^K Q &= 
        - 2 t \partial_t P \cdot t \partial_t \partial_{\theta}^{K+1} Q + 2 t \partial_{\theta} P \cdot t \partial_{\theta}^{K+1} Q 
        \tag{\theequation a} \label{eq:Qu_wave_K_a} \\[0.6em]
        &\mkern-36mu + \sum_{\substack{k_p + k_q = K \\ 0 \leq k_q < K}} \left( t \partial_t \partial_{\theta}^{k_p} P * t \partial_t \partial_{\theta}^{k_q} Q + t \partial_{\theta}^{k_p + 1} P * t \partial_{\theta}^{k_q + 1} Q \right).
        \tag{\theequation b} \label{eq:Qu_wave_K_b}
    \end{align}
    The difference between this and \eqref{eq:Q_wave_K_a}--\eqref{eq:Q_wave_K_b} is that the top order term containing $P$ (i.e.~$k_p = K$) is now included in \eqref{eq:Qu_wave_K_b}. We shall now estimate \eqref{eq:Qu_wave_K_b} in $L^2$ without the $e^P$ weight.

    Using Lemma~\ref{lem:weightedl2} and the fact that $1 \leq k_p \leq K$, one finds
    \begin{multline*}
        \| \text{\eqref{eq:Qu_wave_K_b}} \|_{L^2} \lesssim \| t \partial_t \partial_{\theta}^K P \|_{L^2} \| t \partial_t Q \|_{L^{\infty}} + \| t \partial_t \partial_{\theta} P \|_{L^{\infty}} \| t \partial_t \partial_{\theta}^{K-1} Q \|_{L^2} \\[0.4em]
        + \| t \partial_{\theta}^{K+1} P \|_{L^2} \| t \partial_{\theta} Q \|_{L^{\infty}} + \| t \partial_{\theta}^2 P \|_{L^{\infty}} \| t \partial_{\theta}^{K} Q \|_{L^2}.
    \end{multline*}
    For the first and third terms on the right hand side, the $L^2$ norm is controlled by $\mathcal{E}^{(K)}_P(t)$, while the $L^{\infty}$ norm is controlled via the bootstrap assumptions \eqref{eq:bootstrap_pq} and \eqref{eq:bootstrap_other}, which together give $\|t \partial_t Q \|_{L^{\infty}}, \|t \partial_{\theta} Q \|_{L^{\infty}} \leq C_*^2 t^{\upgamma}$. For the second and fourth terms, the $L^2$ norm is controlled by $\mathcal{E}^{(K - 1)}_{Q,u}(t)$, while the $L^{\infty}$ norms are controlled using \eqref{eq:bootstrap_dth}--\eqref{eq:bootstrap_dth2}.

    Combining all of these will yield that
    \begin{equation} \label{eq:Qu_wave_K_b_est}
        \| \text{\eqref{eq:Qu_wave_K_b}} \| \lesssim t^{\upgamma'} \sqrt{ \mathcal{E}^{(K)}_P(t)} + t^{-(1 - \upgamma)} \sqrt{ \mathcal{E}^{(K-1)}_{Q,u}(t)}.
    \end{equation}
    Continuing, we write down the derivative identity:
    \begin{multline*}
        t \partial_t \left( \frac{1}{2} \left( (t \partial_t \partial_{\theta}^K Q)^2 + t^2 (\partial_{\theta}^{K+1} Q)^2 + (\partial_{\theta}^K Q)^2 \right) \right) 
        \\[0.6em]
        = (t \partial_t \partial_{\theta}^K Q) \left[ (t \partial_t)^2 \partial_{\theta}^K Q - t^2 \partial_{\theta}^{K+2} Q + \partial_{\theta}^K Q \right] + t^2 (\partial_{\theta}^{K+1} Q)^2 
        + \partial_{\theta} \left( t \partial_t \partial_{\theta}^K Q \cdot t^2 \partial_{\theta}^{K+1} Q \right).
    \end{multline*}

    Integrating over $\theta \in \mathbb{S}^1$, one thus yields that
    \begin{equation*}
        t \frac{d}{dt} \mathcal{E}^{(K)}_{Q}(t) 
        = \int_{\mathbb{S}^1} \left[ (t \partial_t  \partial_{\theta}^{K} Q) \left( \text{\eqref{eq:Qu_wave_K_a}} + \partial_{\theta}^K Q \right) + t^2 (\partial_{\theta}^{K+1} Q)^2 \right] \, dx 
        + \int_{\mathbb{S}^1} \left[ (t \partial_t \partial_{\theta}^K Q) \cdot \text{\eqref{eq:Qu_wave_K_b}} \right] \, dx.
    \end{equation*}
    Using the bootstrap assumptions \eqref{eq:bootstrap_pq} and \eqref{eq:bootstrap_dth} to deal with \eqref{eq:Qu_wave_K_a}, and using \eqref{eq:Qu_wave_K_b_est} to deal with \eqref{eq:Qu_wave_K_b}, one therefore deduces the estimate \eqref{eq:energy_aux_der} in a similar way to say Proposition~\ref{prop:P_energy}.

    Inserting the bound \eqref{eq:energy_hierarchy} into \eqref{eq:energy_aux_der}, we find that:
    \[
        t \frac{d}{dt} \mathcal{E}^{(K)}_{Q, u}(t) \geq - 2 A_* \mathcal{E}^{(K)}_{Q, u}(t) - 2 C^{(K)} \sqrt{\mathcal{E}^{(K)}_{Q, u}(t)} \cdot \left( t^{-(1 - \upgamma)} \sqrt{\mathcal{E}_{Q, u}^{(K-1)}(t)} + t^{\upgamma' - A_* - K(1 - \upgamma)} \right).
    \]
    Equivalently,
    \[
        t \frac{d}{dt} \sqrt{t^{2A_*} \mathcal{E}^{(K)}_{Q, u}(t)} \geq - C^{(K)} \left( t^{-(1- \upgamma)} \sqrt{t^{2A_*} \mathcal{E}_{Q, u}^{(K-1)} (t) } + t^{-\upgamma' - A_* - K(1 - \upgamma)} \right).
    \]
    The integrated estimate \eqref{eq:energy_aux_bound} then follows easily using induction on $K \in \{0, \ldots, N \}$.
\end{proof}


\section{Derivation of ODEs} \label{sec:interp}

\subsection{Low order interpolation estimates} 

The goal in this section is to use the energy estimates of Proposition~\ref{prop:energy_hierarchy} together with the Sobolev interpolation in Lemma~\ref{lem:interpolation} to provide $L^{\infty}$ bounds for low order $\partial_{\theta}$ derivatives of $P$ and $Q$, allowing us to treat \eqref{eq:P_evol}--\eqref{eq:Q_evol} as ODEs for certain quantities without worrying about losing derivatives.

\begin{lemma} \label{lem:low_order_linfty}
    Let $(P, Q)$ be as in Proposition~\ref{prop:energy_hierarchy}. Given any $0 < \upgamma' < \upgamma$, for $N$ chosen sufficiently large there exists a family of constants $\updelta = \updelta ( C_*, \upzeta, t_*) > 0$ with $\updelta \downarrow 0$ as $t_* \downarrow 0$, such that for any $0 \leq k \leq 3$ and $t \in [t_b, t_0]$ one has
    \begin{equation} \label{eq:low_order_linfty}
        \| \partial_{\theta}^k P (t, \cdot) \|_{L^{\infty}} + \| t \partial_t \partial_{\theta}^k P (t, \cdot) \|_{L^{\infty}} + \| e^P t \partial_t \partial_{\theta}^k Q (t, \cdot) \|_{L^{\infty}} + \| e^P t \partial_{\theta}^{k+1} Q \|_{L^{\infty}} \leq \updelta t^{-k (1 - \upgamma')}.
    \end{equation}
\end{lemma}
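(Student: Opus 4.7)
The strategy is standard Sobolev interpolation: combine the low-order $L^{\infty}$ bootstrap bounds \eqref{eq:bootstrap_pq}--\eqref{eq:bootstrap_dth2} with the high-order $L^2$ energy bounds of Proposition~\ref{prop:energy_hierarchy} via Lemma~\ref{lem:interpolation}, tuning $N$ large to push the resulting $t$-exponents close to their heuristic values. The cases $k = 0$ and $k = 1$ follow essentially directly from the bootstrap assumptions; in particular, for $k=1$ each summand on the left of \eqref{eq:low_order_linfty} is bounded by $C_* t^{-(1-\upgamma)} = C_* t^{\upgamma - \upgamma'} \cdot t^{-(1-\upgamma')}$ via \eqref{eq:bootstrap_dth}--\eqref{eq:bootstrap_dth2}, and the prefactor $C_* t_*^{\upgamma - \upgamma'}$ tends to zero as $t_* \downarrow 0$, yielding the required $\updelta$.

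For $k = 2, 3$ I plan to apply Lemma~\ref{lem:interpolation} with $f$ chosen to be the corresponding order-one quantity -- namely $f = \partial_\theta P$, $t \partial_t \partial_\theta P$, $e^P t \partial_t Q$, or $e^P t \partial_\theta Q$ -- and interpolation index $\alpha = (k-1)/(N - 3/2)$. The $L^\infty$ norm of $f$ comes from the appropriate bootstrap among \eqref{eq:bootstrap_pq}--\eqref{eq:bootstrap_dth2}, while $\|\partial_\theta^{N-1} f\|_{L^2}$ is controlled by $\sqrt{\mathcal{E}^{(N)}(t)} \lesssim t^{-A_* - N(1-\upgamma)}$ from Proposition~\ref{prop:energy_hierarchy}. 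Multiplying, the exponent of $t$ in the resulting bound works out to be
\[
-(1-\upgamma)(1-\alpha) - \alpha \bigl(A_* + N(1-\upgamma)\bigr) = -k(1-\upgamma) - \alpha A_* - (1-\upgamma)(k-1)/(2N-3),
\]
where the final correction reflects the identity $\alpha(N-1) = (k-1) + (k-1)/(2N-3)$. Crucially, since $A_*$ is independent of $N$ (see Proposition~\ref{prop:P_energy}), the entire $N$-dependence sits in terms scaling like $1/N$. For $N$ chosen large enough depending on $\upgamma, \upgamma', A_*$, the exponent may therefore be arranged to satisfy $\geq -k(1-\upgamma'')$ for some intermediate $\upgamma' < \upgamma'' < \upgamma$, so that the resulting bound factors as $C \, t_*^{k(\upgamma'' - \upgamma')} \cdot t^{-k(1-\upgamma')}$, the prefactor supplying $\updelta$.

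The principal technical obstacle is the $e^P$ weight in the two $Q$-terms, which does not commute with $\partial_\theta$. I would dispose of this using Lemma~\ref{lem:weightedl2app}: the high-order norm $\|\partial_\theta^{N-1}(e^P t\partial_\bullet Q)\|_{L^2}$ with $\partial_\bullet \in \{\partial_t, \partial_\theta\}$ is controlled in terms of $\|\partial_\theta^j P\|_{L^2}$ and $\|e^P t\partial_\bullet \partial_\theta^j Q\|_{L^2}$ -- both appearing in the energies of Definition~\ref{def:energy} -- weighted by products of the bootstrap-controlled $L^\infty$ quantities $\|\partial_\theta P\|_{L^\infty}$ and $\|\partial_\theta(e^P t\partial_\bullet Q)\|_{L^\infty}$. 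Likewise, passing from $\partial_\theta^{k-1}(e^P t\partial_\bullet Q)$ back to the desired $e^P t\partial_\bullet \partial_\theta^{k-1} Q$ via Leibniz generates commutator terms that are strictly lower order in $k$ and hence absorbed by finite induction on $k$.
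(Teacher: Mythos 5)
Your proposal follows essentially the same route as the paper: Sobolev interpolation (Lemma~\ref{lem:interpolation}) between the low-order $L^{\infty}$ bootstrap bounds and the top-order $L^2$ energy estimate of Proposition~\ref{prop:energy_hierarchy}, with $N$ taken large (exploiting that $A_*$ is $N$-independent) to absorb the loss, and Lemma~\ref{lem:weightedl2app} plus Leibniz to handle the non-commuting $e^P$ weight. The only cosmetic difference is that you treat $k\le 1$ directly from \eqref{eq:bootstrap_dth}--\eqref{eq:bootstrap_dth2} and interpolate the remaining cases between derivative levels $1$ and $N-1$, whereas the paper interpolates uniformly between levels $0$ and $N$ using \eqref{eq:bootstrap_pq}--\eqref{eq:bootstrap_other} as the $L^\infty$ endpoint.
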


\begin{proof}
    We use the energy estimate \eqref{eq:energy_hierarchy} for $K = N$, alongside Lemma~\ref{lem:weightedl2app} applied to $f = t \partial_t Q$ and $f = t \partial_{\theta} Q$, to derive the following top order $L^2$ estimate:
    \begin{equation} \label{eq:l2_top}
        \| \partial_{\theta}^N P \|_{L^2}^2 + \| t \partial_t \partial_{\theta}^N P \|_{L^2}^2 + \| \partial_{\theta}^N (e^P t \partial_t Q) \|_{L^2}^2 + \| \partial_{\theta}^N (e^P t \partial_{\theta} Q) \|_{L^2}^2 \leq 2 C^{(N)}  t^{- 2 A_* - 2 N (1 - \upgamma)}.
    \end{equation}
    Note that while $C^{(N)}$ depends on $N$, the number $A_*$ does not, and we later choose $N$ depending on $A_*$.

    We now interpolate between \eqref{eq:l2_top} and the low-order $L^{\infty}$ bootstrap assumptions \eqref{eq:bootstrap_pq} and \eqref{eq:bootstrap_other}. Applying Lemma~\ref{lem:interpolation}, with $f \in \{P, t \partial_t P, e^P t \partial_t Q, e^P t \partial_{\theta} Q \}$ and $0 \leq k \leq 3$ one finds:
    \[
        \| \partial_{\theta}^k f (t, \cdot ) \|_{L^{\infty}} \lesssim_{N} \| f (t, \cdot) \|_{L^{\infty}}^{1-\alpha} \, \| \partial_{\theta}^N f (t, \cdot) \|_{L^2}^{\alpha}, \qquad \text{ where } \alpha = \frac{k}{N - \frac{1}{2}}.
    \]
    Inserting the bound \eqref{eq:l2_top} and the bootstrap assumptions \eqref{eq:bootstrap_pq} and \eqref{eq:bootstrap_other}, one finds
    \[
        \| \partial_{\theta}^k f (t, \cdot) \|_{L^{\infty}} \lesssim_{N, C_*} \left( t^{- 2 A_* - 2 N (1 - \upgamma) } \right)^{\frac{\alpha}{2}} = t^{- k (1 - \upgamma)} \cdot \left( t^{- 2 A_* - (1- \upgamma)} \right)^{\frac{\alpha}{2}}.
    \]

    Note that as $N \to \infty$, $\alpha \to 0$. In particular, for $N$ chosen sufficiently large (depending on $A_*$, $\upgamma$ and $\upgamma' < \upgamma$) one can guarantee that the second term in the product $(t^{-2 A_* - (1- \upgamma)})^{\frac{\alpha}{2}}$ can be bounded by say $t^{-\frac{\upgamma - \upgamma'}{2}}$. Thus for this choice of $N$, we deduce that for $f$ as above,
    \begin{equation} \label{eq:interp}
        \| \partial_{\theta}^k f (t, \cdot) \|_{L^{\infty}} \leq C_{N, C_*} t^{-k(1 - \upgamma) - \frac{\upgamma - \upgamma'}{2}} = C_{N, C_*} t^{-k(1 - \upgamma')} \cdot t^{(k - \frac{1}{2})(\upgamma - \upgamma')}.
    \end{equation}
    
    To go from \eqref{eq:interp} to \eqref{eq:low_order_linfty}, there are two more steps required. The first is that we can expand out $\partial_{\theta}^k (e^P t \partial_t Q)$ and $\partial_{\theta}^k (e^P t \partial_{\theta} Q)$ to yield, for some suitably modified constant $C_{N, C_*}$,
    \begin{equation*} 
        \| \partial_{\theta}^k P (t, \cdot) \|_{L^{\infty}} + \| t \partial_t \partial_{\theta}^k P (t, \cdot) \|_{L^{\infty}} + \| e^P t \partial_t \partial_{\theta}^k Q (t, \cdot) \|_{L^{\infty}} + \| e^P t \partial_{\theta}^{k+1} Q \|_{L^{\infty}} \leq t^{-k (1 - \upgamma')} \cdot (C_{N, C_*} t^{\frac{1}{2}(\upgamma - \upgamma')}).
    \end{equation*}
    Note we used here that $k - \frac{1}{2} \geq \frac{1}{2}$. Then recalling that $t \leq t_0 \leq t_*$, letting $\updelta = C_{N, C_*} \cdot t_*^{\frac{1}{2}(\upgamma - \upgamma')}$ completes the proof of the lemma.
\end{proof}

Using Proposition~\ref{prop:energy_aux}, we also derive interpolated estimates for $Q$, without the $e^P$ weight.

\begin{lemma} \label{lem:low_order_linfty2}
    Let $(P, Q)$ be as in Proposition~\ref{prop:energy_aux}. For $\upgamma, \upgamma', N$ as in Lemma~\ref{lem:low_order_linfty2} there exists a family of constants $\updelta = \updelta ( C_*, \upzeta, t_*) > 0$ with $\updelta \downarrow 0$ as $t_* \downarrow 0$, such that for any $0 \leq k \leq 3$ and $t \in [t_b, t_0]$ one has
    \begin{equation} \label{eq:low_order_linfty2}
        \| \partial_{\theta}^k Q (t, \cdot) \|_{L^{\infty}} + \| t \partial_t \partial_{\theta}^k Q (t, \cdot) \|_{L^{\infty}} \leq \updelta t^{-k (1 - \upgamma')}.
    \end{equation}
\end{lemma}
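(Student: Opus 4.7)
The plan is to mirror the proof of Lemma~\ref{lem:low_order_linfty}, using the unweighted top-order $L^2$ estimate supplied by Proposition~\ref{prop:energy_aux} in the role previously played by the weighted one. Setting $K = N$ in \eqref{eq:energy_aux_bound} yields
\[
\|\partial_{\theta}^N Q(t, \cdot)\|_{L^2}^2 + \|t \partial_t \partial_{\theta}^N Q(t, \cdot)\|_{L^2}^2 \leq 2 C^{(N)} t^{-2 A_* - 2 N (1 - \upgamma)},
\]
and I would combine this with low-order $L^{\infty}$ bounds on $Q$ and $t \partial_t Q$ via the Sobolev interpolation Lemma~\ref{lem:interpolation} applied to $f \in \{Q, t \partial_t Q\}$ and $0 \leq k \leq 3$.

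The low-order $L^{\infty}$ bounds arise as follows. Combining the bootstrap assumptions \eqref{eq:bootstrap_pq} and \eqref{eq:bootstrap_other} yields $|t \partial_t Q| = |e^{-P}| \cdot |e^P t \partial_t Q| \leq C_*^2 t^{2\upgamma'}$, which already decays as $t \to 0$. For $Q$ itself, I would use the fundamental theorem of calculus, writing $Q(t, \theta) = Q_D(\theta) + \int_{t_0}^t \partial_s Q(s, \theta) \, ds$ and inserting the preceding bound to obtain
\[
\|Q(t, \cdot)\|_{L^{\infty}} \leq \|Q_D\|_{L^{\infty}} + \tfrac{C_*^2}{2 \upgamma'} t_0^{2 \upgamma'}.
\]
Plugging the two low-order bounds and the top-order bound into Lemma~\ref{lem:interpolation} with $\alpha = k/(N - \tfrac{1}{2})$ and carrying out the arithmetic exactly as in the proof of Lemma~\ref{lem:low_order_linfty} produces
\[
\|\partial_{\theta}^k f(t, \cdot)\|_{L^{\infty}} \lesssim_{N, C_*, \upzeta} t^{-k(1 - \upgamma')} \cdot t^{(k - \tfrac{1}{2})(\upgamma - \upgamma')}.
\]
For $k \geq 1$ the excess exponent $(k - \tfrac{1}{2})(\upgamma - \upgamma')$ is strictly positive; taking $N$ sufficiently large depending on $A_*, \upgamma, \upgamma'$ and then $t_0 \leq t_*$ sufficiently small absorbs this factor into the $\updelta$ in the statement.

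The main obstacle is the $k = 0$ case, where the requirement $\updelta \downarrow 0$ as $t_* \downarrow 0$ demands $\|Q(t, \cdot)\|_{L^{\infty}} \to 0$: the integrated contribution $\tfrac{C_*^2}{2 \upgamma'} t_0^{2 \upgamma'}$ vanishes automatically, but $\|Q_D\|_{L^{\infty}}$ does not. I would exploit that the Gowdy system \eqref{eq:P_evol}--\eqref{eq:Q_evol} depends on $Q$ only through its derivatives, so $Q$ is naturally defined up to an additive constant; normalizing $Q_D$ (for instance, to have mean zero on $\mathbb{S}^1$) and combining the weighted data bound \eqref{eq:data_energy_boundedness_Q} with Poincar\'e's inequality and Sobolev embedding should allow $\|Q_D\|_{L^{\infty}}$ to be controlled by a quantity that can be absorbed into $\upzeta$ (with any residual $t_0$-dependence cooperating with the smallness of $t_*$), thereby closing the $k = 0$ case. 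The remaining steps are a verbatim transcription of the argument in Lemma~\ref{lem:low_order_linfty}.
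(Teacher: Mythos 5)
Your core approach — interpolate via Lemma~\ref{lem:interpolation} between the top-order unweighted $L^2$ bound \eqref{eq:energy_aux_bound} at $K=N$ (from Proposition~\ref{prop:energy_aux}) and low-order $L^\infty$ control — is exactly the paper's, which simply substitutes $\mathcal{E}_{Q,u}^{(N)}(t)$ for $\mathcal{E}^{(N)}(t)$ in the proof of Lemma~\ref{lem:low_order_linfty}. Your low-order bounds, $|t\partial_t Q| \leq |e^{-P}|\,|e^P t\partial_t Q| \leq C_*^2 t^{2\upgamma'}$ from \eqref{eq:bootstrap_pq}--\eqref{eq:bootstrap_other} and the fundamental-theorem-of-calculus bound for $\|Q\|_{L^\infty}$, are the right ingredients.

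The effort you spend on $k=0$, however, chases a non-issue. Note first that the parent Lemma~\ref{lem:low_order_linfty} cannot hold at $k=0$ with $\updelta \downarrow 0$ either: taking $t = t_0$ it would force $\|\dot{P}_D\|_{L^\infty} \leq \updelta$, contradicting $-\dot{P}_D \geq \upgamma''$ from \eqref{eq:data_linfty}. Indeed, the final step of that proof explicitly invokes $k - \tfrac{1}{2} \geq \tfrac{1}{2}$, i.e.\ $k \geq 1$, and both lemmas are used downstream only for $k \in \{1,2,3\}$; the stated range $0 \leq k \leq 3$ should really be read as $1 \leq k \leq 3$. Your proposed fix (normalize $Q_D$ to mean zero and apply Poincar\'e) would not close the $k=0$ case in any event: the data bound \eqref{eq:data_energy_boundedness_Q} controls $\|e^{P_D}\partial_\theta Q_D\|_{L^2}$, not $\|\partial_\theta Q_D\|_{L^2}$, and by \eqref{eq:data_linfty2} the weight $e^{P_D} \leq \upzeta t_0^\upgamma$ is \emph{small}, so stripping it off costs a factor of order $t_0^{-\upgamma}$ which blows up as $t_0 \downarrow 0$. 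For $k \geq 1$ — the only cases used — your argument is already complete as written: the $L^\infty$ bound on $Q$ need only be $O(1)$, not small, since all $\updelta$-smallness comes from the excess factor $t^{(k-1/2)(\upgamma - \upgamma')}$; the resulting dependence of $\updelta$ on $\|Q_D\|_{L^\infty}$ is tolerated, exactly as Proposition~\ref{prop:energy_aux} already allows $C^{(K)}$ to depend on the data $(Q_D, \dot{Q}_D)$.
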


\begin{proof}
    The proof is identical to that of Lemma~\ref{lem:low_order_linfty}, using instead the energy $\mathcal{E}_{Q,u}^{(N)}(t)$ and the estimate \eqref{eq:energy_aux_bound}.
\end{proof}

\subsection{The main bounce ODEs}

\begin{corollary} \label{cor:ode}
    Let $(P, Q)$ be as in Proposition~\ref{prop:energy_hierarchy}. Then for $N$ chosen sufficiently large there exists a family of constants $\updelta = \updelta(C_*, \upzeta, t_*) > 0$, with $\updelta \downarrow 0$ as $r_* \downarrow 0$, such that for all $(t, x) \in [t_b, r_0] \times \mathbb{S}^1$ and all $a \in [-1, 1]$:
    \begin{gather}
        \left| (t \partial_t + a t \partial_{\theta}) (- t \partial_t P) - \left( (e^P t \partial_{\theta} Q)^2 - (e^P t \partial_t Q)^2 \right) \right| \, (t, x) \leq \updelta t^{\upgamma'}, \label{eq:P_ode_err} \\[0.6em]
        \left| (t \partial_t + a t \partial_{\theta}) (e^P t \partial_{\theta} Q) - (1 + t \partial_t P)(e^P t \partial_{\theta} Q ) \right| \, (t, x) \leq \updelta t^{\upgamma'}, \label{eq:Q_ode_err} \\[0.6em] 
        \left| (t \partial_t + a t \partial_{\theta}) (e^P t \partial_t Q) + (t \partial_t P)(e^P t \partial_t Q) \right| \, (t, x) \leq \updelta t^{\upgamma'}. \label{eq:R_ode_err}
    \end{gather}
\end{corollary}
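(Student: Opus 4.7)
The proof proceeds by direct expansion: for each of the three estimates, I would apply $(t\partial_t + a t\partial_\theta)$ to the quantity of interest via the product rule, use the wave equations \eqref{eq:P_evol}--\eqref{eq:Q_evol} to eliminate any second-order time derivatives that appear, and isolate the intended Kasner-like main term. What remains in each case is a finite collection of residual terms of the schematic form $t^i \partial_\theta^j$ (with $i \geq j \geq 1$) acting on $P$ or on an $e^P$-weighted derivative of $Q$, multiplied by bounded factors. The estimate is then closed by Lemma~\ref{lem:low_order_linfty}: since $\|\partial_\theta^k P\|_{L^\infty}$, $\|t\partial_t\partial_\theta^k P\|_{L^\infty}$, $\|e^P t\partial_t\partial_\theta^k Q\|_{L^\infty}$, and $\|e^P t\partial_\theta^{k+1} Q\|_{L^\infty}$ are all bounded by $\updelta t^{-k(1-\upgamma')}$, a residual with $(i,j) \in \{(1,1),(2,2)\}$ is bounded by $\updelta t^{i-j(1-\upgamma')}$, which is at most $\updelta t^{\upgamma'}$.

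Concretely, for \eqref{eq:P_ode_err} the expansion gives $(t\partial_t + at\partial_\theta)(-t\partial_t P) = -(t\partial_t)^2 P - a t\partial_\theta(t\partial_t P)$, and substituting \eqref{eq:P_evol} produces the main term $(e^P t\partial_\theta Q)^2 - (e^P t\partial_t Q)^2$ together with residuals $-t^2\partial_\theta^2 P$ and $-at \cdot t\partial_t\partial_\theta P$, both of size $\updelta t^{\upgamma'}$. For \eqref{eq:Q_ode_err}, the product rule together with the identity $(t\partial_t)(t\partial_\theta Q) = t\partial_\theta Q + t^2\partial_t\partial_\theta Q$ delivers the main term $(1 + t\partial_t P)(e^P t\partial_\theta Q)$; residuals include $t \cdot e^P t\partial_t\partial_\theta Q$, $a(t\partial_\theta P)(e^P t\partial_\theta Q)$, and $at \cdot e^P t\partial_\theta^2 Q$. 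For \eqref{eq:R_ode_err}, the key observation is that the product rule applied to $e^P t\partial_t Q$ naively contributes $+(t\partial_t P)(e^P t\partial_t Q)$, but substituting \eqref{eq:Q_evol} for $(t\partial_t)^2 Q$ yields a cancelling $-2(t\partial_t P)(e^P t\partial_t Q)$ and the correct main term $-(t\partial_t P)(e^P t\partial_t Q)$; the residuals are $t \cdot e^P t\partial_\theta^2 Q$, $2(t\partial_\theta P)(e^P t\partial_\theta Q)$, $a(t\partial_\theta P)(e^P t\partial_t Q)$, and $at \cdot e^P t\partial_t\partial_\theta Q$, each $\updelta t^{\upgamma'}$.

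There is no genuine obstacle; the corollary is a direct consequence of the AVTD-type interpolation bounds of Section~\ref{sec:interp}. The only care needed is bookkeeping: one must respect the $e^P$ weights, since Lemma~\ref{lem:low_order_linfty} provides bounds on $e^P$-weighted objects rather than on bare $Q$-derivatives, so that each $Q$-derivative appearing in a residual must be paired with its $e^P$ factor (easily arranged by never stripping $e^P$ off). The structural point underlying the whole computation is that every residual term carries exactly enough explicit factors of $t$ to absorb the loss $t^{-(1-\upgamma')}$ per $\partial_\theta$-derivative, with $t^{\upgamma'}$ to spare; this is the precise manifestation of AVTD behaviour in the setting of our ODE derivation. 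The independence of the bound on $a \in [-1,1]$ is immediate since $a$ enters only as a bounded prefactor.
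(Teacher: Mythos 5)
Your proposal is correct and follows essentially the same approach as the paper: apply the product rule and the wave equations \eqref{eq:P_evol}--\eqref{eq:Q_evol} to isolate the Kasner-like main term, then bound the residuals using the interpolation estimates of Lemma~\ref{lem:low_order_linfty}. Your explicit residual lists for each of the three estimates match the paper's after expanding the $\partial_\theta(e^P\,\cdot)$ factors via the product rule, and your observation about the sign cancellation in \eqref{eq:R_ode_err} coming from the $-2(t\partial_t P)(t\partial_t Q)$ term in \eqref{eq:Q_evol} is precisely what the paper uses.
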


\begin{proof}
    Each of these will follow straightforwardly from Lemma~\ref{lem:low_order_linfty} and basic manipulation of the equations \eqref{eq:P_evol} and \eqref{eq:Q_evol}. For \eqref{eq:P_ode_err}, from \eqref{eq:P_evol} one has:
    \begin{equation} \label{eq:P_aux}
        (t \partial_t + a t \partial_{\theta}) (- t \partial_t P) - \left( (e^P t \partial_{\theta} Q)^2 - (e^P t \partial_t Q)^2 \right) = - t^2 \partial_{\theta}^2 P - a t^2 \partial_t \partial_{\theta} P.
    \end{equation}
    Then applying Lemma~\ref{lem:low_order_linfty}, the right hand side is bounded by
    \[
        |t^2 \partial_{\theta}^2 P| + |a t^2 \partial_t \partial_{\theta} P| \leq \updelta t^{2 - 2(1 - \upgamma')} + \updelta t^{1 - (1 - \upgamma')} \leq 2 \updelta t^{\upgamma'},
    \]
    and the bound \eqref{eq:P_ode_err} follows upon redefining $\updelta$ appropriately.

    For \eqref{eq:Q_ode_err}, one simply applies the product rule on $\partial_t (e^P t \partial_{\theta} Q)$ to get:
    \begin{equation} \label{eq:Q_aux}
        (t \partial_t + a t \partial_{\theta}) (e^P t \partial_{\theta} Q) - (1 + t \partial_t P)(e^P t \partial_{\theta} Q ) = e^P t^2 \partial_t \partial_{\theta} Q + a t \partial_{\theta} (e^P t \partial_{\theta} Q).
    \end{equation}
    Applying Lemma~\ref{lem:low_order_linfty} repeatedly, one bounds the right hand side of this by $\updelta(2 + C_*) t^{\upgamma'}$, so \eqref{eq:Q_ode_err} follows upon further redefining $\updelta$.

    For \eqref{eq:R_ode_err}, one uses the equation \eqref{eq:Q_evol}, to derive
    \begin{equation} \label{eq:R_aux}
        (t \partial_t + a t \partial_{\theta}) (e^P t \partial_t Q) + (t \partial_t P)(e^P t \partial_t Q) = e^P t^2 \partial_{\theta}^2 Q + 2 e^P  t \partial_{\theta} Q \cdot t \partial_{\theta} P + a t \partial_{\theta} (e^P t \partial_t Q),
    \end{equation}
    and by Lemma~\ref{lem:low_order_linfty} is bounded by $ \updelta (2 + 3 C_*) t^{\upgamma'}$, thus completing the proof upon redefining $\updelta$.
\end{proof}

\subsection{The equations of variation}

On top of the ODEs for $-t \partial_t P$, $e^P t \partial_{\theta} Q$ and $e^P t \partial_t Q$ exhibited in Corollary~\ref{cor:ode}, we shall also require corresponding ODEs for their $\partial_{\theta}$-derivatives; that is, the linear system obtained by linearizing the ODEs of Corollary~\ref{cor:ode} around a given solution for $(- t \partial_t P, e^P t \partial_{\theta} Q, e^P t \partial_t Q)$.


\begin{corollary} \label{cor:ode2}
    Let $(P, Q)$ be as in Proposition~\ref{prop:energy_hierarchy}. Then for $N$ chosen sufficiently large there exists a family of constants $\updelta = \updelta(C_*, \upzeta, t_*) > 0$, with $\updelta \downarrow 0$ as $r_* \downarrow 0$, such that for all $(t, x) \in [t_b, r_0] \times \mathbb{S}^1$ and all $a \in [-1, 1]$:
    \begin{gather}
        \left| (t \partial_t + a t \partial_{\theta}) (- t \partial_t \partial_{\theta} P) - 2 (e^P t \partial_{\theta} Q) \partial_{\theta} (e^P t \partial_{\theta} Q) + 2 (e^P t \partial_t Q) \partial_{\theta} (e^P t \partial_t Q) \right| \, (t, x) \leq \updelta t^{- 1 + 2\upgamma'}, \label{eq:L_ode_err} \\[0.6em]
        \left| (t \partial_t + a t \partial_{\theta}) (\partial_{\theta}(e^P t \partial_{\theta} Q)) - t \partial_t \partial_{\theta} P \, e^P t \partial_{\theta} Q - (1 + t \partial_t P) \partial_{\theta} (e^P t \partial_{\theta} Q) \right| \, (t, x) \leq \updelta t^{- 1 + 2 \upgamma'}, \label{eq:M_ode_err} \\[0.6em] 
        \left| (t \partial_t + a t \partial_{\theta}) (\partial_{\theta} (e^P t \partial_t Q)) + t \partial_t \partial_{\theta} P \, e^P t \partial_t Q + t \partial_t P \, \partial_{\theta} (e^P t \partial_t Q) \right| \, (t, x) \leq \updelta t^{-1 + 2\upgamma'}. \label{eq:N_ode_err}
    \end{gather}
\end{corollary}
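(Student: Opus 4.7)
The plan is to differentiate each of the identities \eqref{eq:P_aux}, \eqref{eq:Q_aux} and \eqref{eq:R_aux} from the proof of Corollary~\ref{cor:ode} with respect to $\theta$. Since $\partial_\theta$ commutes with both $t\partial_t$ and $a t\partial_\theta$, the left-hand sides of \eqref{eq:L_ode_err}--\eqref{eq:N_ode_err} appear immediately: the apparently ``new'' linearized terms such as $t\partial_t\partial_\theta P \cdot e^P t\partial_\theta Q$ and $t\partial_t\partial_\theta P \cdot e^P t\partial_t Q$ arise exactly from applying the Leibniz rule to the $(1+t\partial_t P)(e^P t\partial_\theta Q)$ and $(t\partial_t P)(e^P t\partial_t Q)$ products on the left-hand sides of \eqref{eq:Q_aux} and \eqref{eq:R_aux}. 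The right-hand side of each differentiated identity is a finite sum of products involving factors $\partial_\theta^j P$, $t\partial_t\partial_\theta^j P$, $e^P t\partial_\theta^{j+1}Q$ or $e^P t\partial_t\partial_\theta^j Q$ with $j \leq 2$, multiplied by explicit powers of $t$.

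For \eqref{eq:L_ode_err}, differentiating \eqref{eq:P_aux} directly produces a right-hand side equal to $-t^2\partial_\theta^3 P - a t^2\partial_t\partial_\theta^2 P$. Lemma~\ref{lem:low_order_linfty} applied with $k=3$ and $k=2$ respectively gives $|t^2\partial_\theta^3 P| \leq \updelta\, t^{-1+3\upgamma'}$ and $|t^2\partial_t\partial_\theta^2 P| = t \cdot |t\partial_t\partial_\theta^2 P| \leq \updelta\, t^{-1+2\upgamma'}$. Since $t \leq t_* \leq 1$, the first bound is also $\leq \updelta\, t^{-1+2\upgamma'}$ after relabelling $\updelta$.

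For \eqref{eq:M_ode_err}, differentiating \eqref{eq:Q_aux} produces a right-hand side $\partial_\theta(e^P t^2\partial_t\partial_\theta Q) + a t\partial_\theta^2(e^P t\partial_\theta Q)$, which will be expanded by the Leibniz rule into schematic summands of the form $t\partial_\theta P \cdot e^P t\partial_t\partial_\theta Q$, $e^P t^2\partial_t\partial_\theta^2 Q$, $t\partial_\theta^2 P \cdot e^P t\partial_\theta Q$, $t(\partial_\theta P)^2 e^P t\partial_\theta Q$, $t\partial_\theta P \cdot e^P t\partial_\theta^2 Q$ and $t \, e^P t\partial_\theta^3 Q$. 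Each will be bounded pointwise by $\updelta\, t^{-1+2\upgamma'}$ using Lemma~\ref{lem:low_order_linfty} together with the uniform bound $|e^P t\partial_\theta Q| \leq C_*$ from \eqref{eq:bootstrap_pq}. Estimate \eqref{eq:N_ode_err} is handled identically by differentiating \eqref{eq:R_aux}; the only new feature is that several terms carry the factor $e^P t\partial_t Q$ instead of $e^P t\partial_\theta Q$, and the improved bootstrap bound $|e^P t\partial_t Q| \leq C_* t^{\upgamma'}$ from \eqref{eq:bootstrap_other} more than compensates.

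The main obstacle is purely bookkeeping: one must verify that, after the Leibniz expansions, every term involves at most three $\partial_\theta$-derivatives of the base quantities controlled by Lemma~\ref{lem:low_order_linfty}, and that the accumulated prefactors of $t$ are always $\geq t^{-1+2\upgamma'}$ (so that the pointwise bound holds). No genuinely new analytical idea beyond that of Corollary~\ref{cor:ode} is required: Corollary~\ref{cor:ode2} is essentially the ``equations of variation'' analogue of Corollary~\ref{cor:ode}, obtained by linearizing in $\partial_\theta$, and the price paid by each additional $\partial_\theta$-derivative is precisely one factor of $t^{-(1-\upgamma')}$, which is why the error exponent degrades from $t^{\upgamma'}$ to $t^{-1+2\upgamma'}$.
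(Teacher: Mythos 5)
Your proposal is correct and follows essentially the same route as the paper: differentiate the auxiliary identities \eqref{eq:P_aux}, \eqref{eq:Q_aux}, \eqref{eq:R_aux} in $\theta$ so that the linearized left-hand sides of \eqref{eq:L_ode_err}--\eqref{eq:N_ode_err} appear via the Leibniz rule, and bound the resulting right-hand sides pointwise using Lemma~\ref{lem:low_order_linfty} (together with the bootstrap bounds \eqref{eq:bootstrap_pq} and \eqref{eq:bootstrap_other} for the undifferentiated factors). The heuristic you state at the end --- each extra $\partial_\theta$ costs $t^{-(1-\upgamma')}$, degrading the error from $t^{\upgamma'}$ to $t^{-1+2\upgamma'}$ --- is exactly the observation the paper uses; the paper simply works out the case \eqref{eq:N_ode_err} explicitly and asserts the others are analogous, whereas you illustrate \eqref{eq:L_ode_err} and \eqref{eq:M_ode_err} instead, so no substantive difference.
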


\begin{proof}
    Each of \eqref{eq:L_ode_err}--\eqref{eq:N_ode_err} will be derived by commuting the equations \eqref{eq:P_aux}--\eqref{eq:R_aux} used in the proof of Corollary~\ref{cor:ode} with a $\partial_{\theta}$-derivative, then using Lemma~\ref{lem:low_order_linfty} to bound the right hand side. We demonstrate this by deriving \eqref{eq:N_ode_err}; by commuting \eqref{eq:R_aux} with $\partial_{\theta}$, one yields
    \begin{multline*}
        (t \partial_t + a t \partial_{\theta}) (\partial_{\theta} (e^P t \partial_t Q)) + t \partial_t \partial_{\theta} P \, e^P t \partial_t Q + t \partial_t P \, \partial_{\theta} (e^P t \partial_t Q) \\[0.2em]
        = 3 t \partial_{\theta} P \, t \partial_{\theta}^2 Q + t \, e^P t \partial_{\theta}^3 Q + t \partial_{\theta}^2 P \, e^P t \partial_{\theta} Q + a t \partial_{\theta}^2 (e^P t \partial_t Q).
    \end{multline*}
    By Lemma~\ref{lem:low_order_linfty}, each $\partial_{\theta}$-derivative on the right hand side, not including the one derivative in $e^P t \partial_{\theta} Q$, costs $t^{- 1 + \upgamma'}$, thus the additional power of $t$ on the right hand side means 
    \begin{equation*}
        \left | (t \partial_t + a t \partial_{\theta}) (\partial_{\theta} (e^P t \partial_t Q)) + t \partial_t \partial_{\theta} P \, e^P t \partial_t Q + t \partial_t P \, \partial_{\theta} (e^P t \partial_t Q) \right |
        \lesssim_{C_*} \updelta t^{1 - 2 (1 - \upgamma')} = \updelta t^{-1 + 2 \upgamma'}.
    \end{equation*}
    Redefining $\updelta$ to absorb the implied constant, one deduces \eqref{eq:N_ode_err}.
\end{proof}

\section{Low order ODE analysis} \label{sec:ode}

In this section, we apply the ODEs derived in Section~\ref{sec:interp} to derive $L^{\infty}$ bounds for $0$th and $1$st order quantities. Eventually, these will be used to improve the bootstrap assumptions \eqref{eq:bootstrap_pq}--\eqref{eq:bootstrap_dth2}.

\subsection{The bounce ODE}

\begin{lemma} \label{lem:ode_bounce}
    For $t_0 \leq t_* < 1$ sufficiently small, let $\mathscr{P}, \mathscr{Q}, \mathscr{R}: [t_b, t_0] \subset \R_{>0}\to \R$ satisfy the following ODEs, where for some $0 < \upgamma' < \upgamma$ the error terms $\mathscr{E}_i$ obey $|\mathscr{E}_i| \leq \updelta t^{\upgamma'}$ for $i = 1, 2, 3$:
    \begin{gather}
        \label{eq:p_eq}
        t \partial_t \mathscr{P} = \mathscr{Q}^2 - \mathscr{R}^2 + \mathscr{E}_1, \\[0.5em]
        \label{eq:q_eq}
        t \partial_t \mathscr{Q} = (1 - \mathscr{P}) \mathscr{Q} + \mathscr{E}_2, \\[0.5em]
        \label{eq:r_eq}
        t \partial_t \mathscr{R} = \mathscr{P} \mathscr{R} + \mathscr{E}_3.
    \end{gather}
    Suppose furthermore that for some $\upgamma'' > \upgamma$, one has $(\mathscr{P} - 1)^2(t_0) + \mathscr{Q}^2(t_0) \leq (1 - \upgamma'')^2$ and $|\mathscr{R}(t_0)| \leq \updelta t_0^{\upgamma'}$. Then for $\updelta$ chosen sufficiently small (depending on all of $\upgamma, \upgamma', \upgamma''$) the solution obeys the following bounds for $t \in [t_b, t_0]$.
    \begin{equation} \label{eq:pq_bounds}
        (\mathscr{P} - 1)^2(t) + \mathscr{Q}^2(t) \leq (1 - \upgamma)^2, \qquad |\mathscr{R}(t)| \leq t^{\upgamma'}.
    \end{equation}
\end{lemma}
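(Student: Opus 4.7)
The key observation is that the quantity $\mathscr{K}(t) := (\mathscr{P}-1)^2 + \mathscr{Q}^2$ is \emph{exactly} conserved by the unperturbed ODE system (i.e.~\eqref{eq:p_eq}--\eqref{eq:r_eq} with $\mathscr{E}_i = 0$ and $\mathscr{R} = 0$), as can be verified by the direct computation
\begin{equation*}
    t \partial_t \mathscr{K}
    = 2(\mathscr{P}-1)(\mathscr{Q}^2 - \mathscr{R}^2 + \mathscr{E}_1) + 2 \mathscr{Q}((1-\mathscr{P})\mathscr{Q} + \mathscr{E}_2)
    = -2(\mathscr{P}-1)\mathscr{R}^2 + 2(\mathscr{P}-1)\mathscr{E}_1 + 2 \mathscr{Q} \mathscr{E}_2,
\end{equation*}
which is small provided $|\mathscr{R}|$ and the $|\mathscr{E}_i|$ are. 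My plan is a standard continuity / bootstrap argument: let $T_* \in [t_b, t_0]$ denote the infimum of those $T$ for which both bounds \eqref{eq:pq_bounds} hold on $[T, t_0]$, and aim to \emph{strictly improve} both bounds on $[T_*, t_0]$, thereby forcing $T_* = t_b$.

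On this subinterval, the first bootstrap $\mathscr{K} \leq (1-\upgamma)^2$ yields the pointwise consequences $\upgamma \leq \mathscr{P}(t) \leq 2-\upgamma$ and $|\mathscr{Q}(t)| \leq 1-\upgamma$. Combined with $|\mathscr{R}(t)| \leq t^{\upgamma'}$ and $|\mathscr{E}_i(t)| \leq \updelta t^{\upgamma'}$, the identity above gives $|t \partial_t \mathscr{K}| \leq 2(1-\upgamma) t^{2\upgamma'} + 4(1-\upgamma) \updelta \, t^{\upgamma'}$, which is integrable against $dt/t$ down to $t = 0$ since $\upgamma' > 0$. Integrating backward from $t_0$ and using $\mathscr{K}(t_0) \leq (1-\upgamma'')^2$ yields
\begin{equation*}
    \mathscr{K}(t) \leq (1-\upgamma'')^2 + C_{\upgamma, \upgamma'} \bigl( t_0^{2 \upgamma'} + \updelta \, t_0^{\upgamma'} \bigr).
\end{equation*}
Since $\upgamma'' > \upgamma$ produces a strictly positive gap $(1-\upgamma)^2 - (1-\upgamma'')^2 > 0$, taking $t_*$ small (and then $\updelta$ small) strictly improves the first bootstrap.

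For the second bound, the lower bound $\mathscr{P} \geq \upgamma > \upgamma'$ just established on $[T_*, t_0]$ is crucial. Introducing the rescaled unknown $u(t) := t^{-\upgamma'} \mathscr{R}(t)$, equation \eqref{eq:r_eq} becomes $t \partial_t u = (\mathscr{P} - \upgamma') u + t^{-\upgamma'} \mathscr{E}_3$, where the coefficient satisfies $\mathscr{P} - \upgamma' \geq \upgamma - \upgamma' > 0$ and the forcing is uniformly bounded: $|t^{-\upgamma'} \mathscr{E}_3| \leq \updelta$. Passing to $s := -\log t$, so that $s$ increases as $t \downarrow 0$, we obtain $\partial_s u = -(\mathscr{P}-\upgamma') u - t^{-\upgamma'} \mathscr{E}_3$, a \emph{damped} linear ODE with bounded forcing. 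A standard energy estimate on $u^2$ combined with Young's inequality yields
\begin{equation*}
    |u(t)|^2 \leq |u(t_0)|^2 + \frac{\updelta^2}{(\upgamma - \upgamma')^2} \leq \updelta^2 \left(1 + \frac{1}{(\upgamma - \upgamma')^2}\right),
\end{equation*}
where $|u(t_0)| \leq \updelta$ follows from the hypothesis $|\mathscr{R}(t_0)| \leq \updelta t_0^{\upgamma'}$. For $\updelta$ small enough (depending on $\upgamma - \upgamma'$), this gives $|\mathscr{R}(t)| \leq \tfrac{1}{2} t^{\upgamma'}$, strictly improving the second bootstrap.

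The principal obstacle is only the order in which the smallness constants are chosen: the parameters $\upgamma, \upgamma', \upgamma''$ (and hence the improvement gap $(1-\upgamma)^2 - (1-\upgamma'')^2$ and the damping rate $\upgamma - \upgamma'$) are fixed at the outset, after which $\updelta$ must be chosen small depending on $\upgamma - \upgamma'$ to close the $\mathscr{R}$-estimate, and finally $t_*$ is chosen small (depending also on $\upgamma, \upgamma'$) to close the $\mathscr{K}$-estimate. All remaining manipulations are routine ODE arguments.
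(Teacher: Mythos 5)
Your proof is correct and follows essentially the same strategy as the paper: a continuity/bootstrap argument built on approximate monotonicity of a Lyapunov quantity for $(\mathscr{P},\mathscr{Q})$, followed by a dissipative estimate for $\mathscr{R}$ driven by $\mathscr{P}\ge\upgamma$. The only cosmetic difference is that the paper defines $\mathscr{K}=(\mathscr{P}-1)^2+\mathscr{Q}^2+\mathscr{R}^2$, so the $\mathscr{R}^2$ contribution to $t\partial_t\mathscr{K}$ is nonnegative and can be dropped by positivity when integrating backwards, whereas your $\mathscr{K}=(\mathscr{P}-1)^2+\mathscr{Q}^2$ leaves a sign-indefinite term $-2(\mathscr{P}-1)\mathscr{R}^2$ that you bound via the bootstrap $|\mathscr{R}|\le t^{\upgamma'}$; both close, and the rescaling $u=t^{-\upgamma'}\mathscr{R}$ in your second step is a harmless repackaging of the paper's integrating-factor computation.
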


\begin{proof}
    We proceed via using a continuity / bootstrap argument, with bootstrap assumption which is exactly \eqref{eq:pq_bounds}. That is, we assume \eqref{eq:pq_bounds} holds on an interval $[\tilde{t}, t_0] \subset [t_b, t_0]$, and show that we may in turn improve upon \eqref{eq:pq_bounds} in this interval.
    For the improvement step, we use an approximate monotonicity property of the ODE system; let $\mathscr{K}$ be defined by
    \begin{equation} \label{eq:ode_k}
        \mathscr{K} \coloneqq (\mathscr{P} - 1)^2 + \mathscr{Q}^2 + \mathscr{R}^2.
    \end{equation}
    Using \eqref{eq:p_eq}--\eqref{eq:r_eq}, one may show that $t \partial_t \mathscr{K} = \mathscr{R}^2 + 2 \mathscr{E}_1 (\mathscr{P} - 1) + 2 \mathscr{E}_2 \mathscr{Q} + 2 \mathscr{E}_3 \mathscr{R}$. Thus using $\mathscr{R}^2 \geq 0$ and taking \eqref{eq:pq_bounds} as a bootstrap assumption, $t \partial_t \mathscr{K}$ may be bounded by
    \[
        t \partial_t \mathscr{K} \geq - \left( 4(1- \upgamma) + 2 t^{\upgamma'} \right) \updelta t^{\upgamma'}.
    \]

    Thus for $\updelta = \updelta(\upgamma, \upgamma', \upgamma'')$ chosen sufficiently small, one may guarantee that 
    \begin{equation} \label{eq:k_int}
        \int^{t_0}_{t} \partial_t \mathscr{K}(\tilde{t}) d \tilde{t} \leq \frac{1}{2} \left( ( 1 - \tfrac{\upgamma + \upgamma''}{2} )^2 - (1 - \upgamma'')^2 \right).
    \end{equation}
    Similarly using our assumptions on initial data we may choose $\updelta$ sufficiently small such that 
    \begin{equation} \label{eq:k_init}
        \mathscr{K}(t_0) \leq (1 - \upgamma'')^2 + \frac{1}{2} \left( (1 - \tfrac{\upgamma + \upgamma''}{2} )^2 + (1 - \upgamma'')^2 \right).
    \end{equation}
    Combining \eqref{eq:k_int} and \eqref{eq:k_init}, it is clear that one has $\mathcal{K}(t) \leq (1 - \tfrac{\upgamma + \upgamma''}{2})^2 < (1 - \upgamma)^2$ for $t \in [\tilde{t}, t_0]$, thereby improving the first inequality in \eqref{eq:pq_bounds}.

    For the second inequality in \eqref{eq:pq_bounds}, from \eqref{eq:r_eq} and $\mathscr{P} \geq \upgamma$ (which follows from the first part of \eqref{eq:pq_bounds}), one has the following differential inequality for $\mathscr{R}^2$:
    \[
        t \partial_t ( \mathscr{R}^2 ) \geq 2 \upgamma \mathscr{R}^2 + 2 \mathscr{E}_3 \mathscr{R}.
    \]
    Using an integrating factor and the assumptions on $\mathscr{R}$ and $\mathscr{E}_3$, we may thus write
    \[
        t \partial_t ( t^{- 2 \upgamma} \mathscr{R}^2 ) \geq 2 \updelta t^{- 2 (\upgamma - \upgamma')}.
    \]
    Our assumption at $t = t_0$ means $t_0^{- 2 \upgamma} \mathscr{R}^2(t_0) \geq \updelta^2 t_0^{-2 (\upgamma - \upgamma')}$. Thus for $\updelta$ chosen sufficiently small, we integrate the above inequality in the direction of decreasing $t$ and yield that $t^{-2 \upgamma} \mathscr{R}^2 < t^{-2(\upgamma - \upgamma')}$. Multiplying both sides by $t^{2\upgamma}$, this is a strict improvement of the second inequality in \eqref{eq:pq_bounds}. This completes our continuity argument, completing the proof of Lemma~\ref{lem:ode_bounce}.
\end{proof}

\subsection{The equations of variation}

\begin{lemma} \label{lem:ode_variation}
    For $0 < t_0 \leq t_*  < 1$, let $\mathscr{P}, \mathscr{Q}, \mathscr{R}: [t_b, t_0] \to \R$ satisfy the assumptions of Lemma~\ref{lem:ode_bounce}. Further, let $\mathscr{L}, \mathscr{M}, \mathscr{N}: [t_b, t_0] \to \R$ obey the following ODEs, where $|\mathscr{E}_i| \leq \updelta t^{-1 + 2 \upgamma'}$ for $i=4, 5, 6$:
    \begin{gather}
        t \partial_t \mathscr{L} = 2 \mathscr{Q} \mathscr{M} - 2 \mathscr{R} \mathscr{N} + \mathscr{E}_4, \label{eq:l_eq} \\[0.5em]
        t \partial_t \mathscr{M} = - \mathscr{Q} \mathscr{L} + (1 - \mathscr{P}) \mathscr{M} + \mathscr{E}_5, \label{eq:m_eq} \\[0.5em]
        t \partial_t \mathscr{N} = \mathscr{R} \mathscr{L} + \mathscr{P} \mathscr{N} + \mathscr{E}_6. \label{eq:n_eq}
    \end{gather}
    For some $\upzeta > 0$, impose the following conditions at initial data: $|\mathscr{L}(t_0)| + |\mathscr{M}(t_0)| + |\mathscr{N}(t_0)| \leq \upzeta t_0^{- 1 + \upgamma}$. We further assume that $2 \upgamma' > \upgamma$.

    Then for $\updelta$ chosen sufficiently small depending on $\upgamma, \upgamma', \upgamma''$ and $\upzeta$, there exists a constant $D > 0$ depending on the same parameters such that for all $t \in [t_b, t_0]$, one has
    \begin{equation} \label{eq:mn_bounds}
        |\mathscr{L} (t)| + |\mathscr{M} (t)| + |\mathscr{N} (t)| \leq D \, t^{- 1 + \upgamma}.
    \end{equation}
\end{lemma}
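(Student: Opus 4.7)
The plan is to introduce a weighted quadratic energy
\[
    \mathscr{J}(t) \coloneqq \mathscr{L}^2(t) + 2 \mathscr{M}^2(t) + 2 \mathscr{N}^2(t),
\]
whose coefficients are chosen precisely so that the delicate cross-terms in its $t$-derivative cancel. The weight $2$ on $\mathscr{M}^2$ mirrors the structure of the conserved quantity $(\mathscr{P}-1)^2 + \mathscr{Q}^2$ from the proof of Lemma~\ref{lem:ode_bounce} and ensures that the cubic cross-term $\mathscr{Q}\mathscr{L}\mathscr{M}$ cancels between $t\partial_t(\mathscr{L}^2)$ and $t\partial_t(2\mathscr{M}^2)$. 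The weight $2$ on $\mathscr{N}^2$ is then chosen so that the cross-term $\mathscr{L}\mathscr{R}\mathscr{N}$, arising from the $-2\mathscr{R}\mathscr{N}$ in \eqref{eq:l_eq} and the $\mathscr{R}\mathscr{L}$ in \eqref{eq:n_eq}, likewise cancels between $t\partial_t(\mathscr{L}^2)$ and $t\partial_t(2\mathscr{N}^2)$. A direct computation then yields the clean identity
\[
    t\partial_t \mathscr{J} = 4(1-\mathscr{P})\mathscr{M}^2 + 4\mathscr{P}\mathscr{N}^2 + 2\mathscr{L}\mathscr{E}_4 + 4\mathscr{M}\mathscr{E}_5 + 4\mathscr{N}\mathscr{E}_6.
\]

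Next, I would invoke Lemma~\ref{lem:ode_bounce} to conclude $\upgamma \leq \mathscr{P} \leq 2-\upgamma$ (equivalently $|1-\mathscr{P}| \leq 1 - \upgamma$), yielding $|4(1-\mathscr{P})\mathscr{M}^2| \leq 2(1-\upgamma) \mathscr{J}$ via $2\mathscr{M}^2 \leq \mathscr{J}$, while $4\mathscr{P}\mathscr{N}^2 \geq 0$ provides a favourable sign that can simply be discarded. Bounding $|\mathscr{L}|, |\mathscr{M}|, |\mathscr{N}| \leq \sqrt{\mathscr{J}}$ and using $|\mathscr{E}_i| \leq \updelta\, t^{-1+2\upgamma'}$ gives the one-sided differential inequality
\[
    t\partial_t \mathscr{J}(t) \geq -2(1-\upgamma) \mathscr{J}(t) - C \updelta\, t^{-1+2\upgamma'} \sqrt{\mathscr{J}(t)}
\]
for some universal $C > 0$. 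To integrate this I would pass to $u(t) \coloneqq t^{2(1-\upgamma)} \mathscr{J}(t)$; the inequality rearranges to $\partial_t \sqrt{u(t)} \geq -\tfrac{C}{2} \updelta\, t^{2\upgamma' - \upgamma - 1}$. It is at this step that the hypothesis $2\upgamma' > \upgamma$ becomes essential, as it renders $t^{2\upgamma'-\upgamma-1}$ integrable on $(0, t_0]$. Integrating backwards from $t_0$ and using $t_0 < 1$ then yields $\sqrt{u(t)} \leq \sqrt{u(t_0)} + \tfrac{C \updelta}{2(2\upgamma'-\upgamma)}$. Combined with the initial bound $u(t_0) \leq 5\upzeta^2$ coming from $|\mathscr{L}(t_0)| + |\mathscr{M}(t_0)| + |\mathscr{N}(t_0)| \leq \upzeta t_0^{-1+\upgamma}$, and for $\updelta$ sufficiently small depending on $\upgamma, \upgamma', \upgamma''$ and $\upzeta$, I obtain $\sqrt{u(t)} \leq D$ uniformly for $t \in [t_b, t_0]$. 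Unpacking gives $\mathscr{J}(t) \leq D^2 t^{-2(1-\upgamma)}$, which implies \eqref{eq:mn_bounds} after redefining $D$ to absorb the factor $3$ arising from $|\mathscr{L}| + |\mathscr{M}| + |\mathscr{N}| \leq 3\sqrt{\mathscr{J}}$.

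The main obstacle lies in the coupling between $\mathscr{L}$ and $\mathscr{N}$: a naive isotropic energy $\mathscr{L}^2 + \mathscr{M}^2 + \mathscr{N}^2$ would retain indefinite cross-terms of the form $\mathscr{L}\mathscr{R}\mathscr{N}$, and moreover the $\mathscr{N}$-equation in isolation permits growth at the rate $t^{-\mathscr{P}}$ with $\mathscr{P}$ potentially as large as $2-\upgamma$ --- well in excess of the claimed $t^{-(1-\upgamma)}$. The weighted energy $\mathscr{J}$ simultaneously cancels the cross-terms and leverages the nonnegative sign of $4\mathscr{P}\mathscr{N}^2$ to prevent $\mathscr{N}$ from driving $\mathscr{J}$ at its own fast rate; the growth is thereby controlled entirely by the $(1-\upgamma)$-rate inherited from the two-dimensional bounce subsystem \eqref{eq:p_eq}--\eqref{eq:q_eq}. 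The sharp use of the hypothesis $2\upgamma' > \upgamma$ reflects the residual tension between this growth rate and the size of the error terms $\mathscr{E}_i$.
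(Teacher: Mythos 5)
Your proof is correct and is in fact genuinely different from, and cleaner than, the paper's. The paper recasts \eqref{eq:l_eq}--\eqref{eq:n_eq} as a matrix ODE $t\partial_t\mathbf{x} = (\tilde{\mathbf{L}} + \mathbf{P})\mathbf{x} + \mathbf{e}$ with $\mathbf{x}=(\mathscr{L},\mathscr{M},\mathscr{N})^T$ and $\mathbf{P}=\mathrm{diag}(0,0,\mathscr{P})$, and controls the $\ell^2$ norm via the operator norm $\|\tilde{\mathbf{L}}\|_{op}$. Because $\tilde{\mathbf{L}}$ is far from antisymmetric in the Euclidean inner product (the $(1,2)$ entry is $2\mathscr{Q}$ while the $(2,1)$ entry is $-\mathscr{Q}$), $\|\tilde{\mathbf{L}}\|_{op}$ is only $\leq 1-\upgamma$ when $|\mathscr{Q}|$ is small. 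The paper therefore runs a two-case argument and, in the bad case, bounds the $\frac{dt}{t}$-measure of $B=\{|\mathscr{Q}|>\tfrac{1}{4}(\upgamma''-\upgamma)\}$ by monotonicity of $\mathscr{P}$ inside $B$, before applying Gr\"onwall. Your choice of $\mathscr{J}=\mathscr{L}^2+2\mathscr{M}^2+2\mathscr{N}^2$ sidesteps all of this: it is exactly the Euclidean norm in the rescaled variables $(\mathscr{L},\sqrt{2}\mathscr{M},\sqrt{2}\mathscr{N})$ with respect to which the matrix becomes antisymmetric off the diagonal, so the quadratic form $t\partial_t\mathscr{J}$ sees only the diagonal contributions $4(1-\mathscr{P})\mathscr{M}^2+4\mathscr{P}\mathscr{N}^2$ plus error. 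One then invokes $|1-\mathscr{P}|\leq 1-\upgamma$ and $\mathscr{P}\geq\upgamma>0$ from Lemma~\ref{lem:ode_bounce}, and a single Gr\"onwall step closes the argument with no case analysis and no measure estimate. The requirement $2\upgamma'>\upgamma$ enters identically in both proofs (integrability of the error against $\frac{dt}{t}$ after the $t^{2(1-\upgamma)}$ rescaling), and your final constant $D$ depends on $\upgamma,\upgamma',\upzeta$ (and, via Lemma~\ref{lem:ode_bounce}, on $\upgamma''$) exactly as required. This is a tighter and more transparent version of the estimate; the paper's approach is a cruder operator-norm bound that then has to be salvaged by the measure argument.
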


\begin{proof}
    We shall rewrite the system \eqref{eq:l_eq}--\eqref{eq:n_eq} in the following matrix form:
    \begin{equation} \label{eq:mn_matrix}
        t \partial_t \begin{bmatrix} \mathscr{L} \\ \mathscr{M} \\ \mathscr{N} \end{bmatrix} =
        \underbrace{
            \begin{bmatrix} 0 & 2 \mathscr{Q} & - 2 \mathscr{R} \\ - \mathscr{Q} & 1 - \mathscr{P} & 0 \\ \mathscr{R} & 0 & \mathscr{P} \end{bmatrix}
        }_{\eqqcolon \mathbf{L}}
        \begin{bmatrix} \mathscr{L} \\ \mathscr{M} \\ \mathscr{N} \end{bmatrix}
        +
        \begin{bmatrix} \mathscr{E}_4 \\ \mathscr{E}_5 \\ \mathscr{E}_6 \end{bmatrix}.
    \end{equation}

    The goal is to bound the operator norm of the matrix $\mathbf{L} = \mathbf{L}(r)$, as a function of $t \in [t_b, t_0]$, in such a way that allows one to integrate this equation. Note our operator norm will be with respect to the $\ell^{2}$-norm on $\R^3$, i.e.~for a $3 \times 3$ matrix $\mathbf{M}$ we write
    \[
        \| \mathbf{M} \|_{op} \coloneqq \sup_{ \mathrm{x} \in \R^3 \setminus \{0\}} \frac{ \| \mathbf{M} \mathrm{x} \|_{\ell^{2}}} { \| \mathrm{x} \|_{\ell^{2}}}.
    \]
    In fact, we shall actually estimate the operator norm of
    \[
        \tilde{\mathbf{L}} = 
        \begin{bmatrix} 0 & 2 \mathscr{Q} & - 2 \mathscr{R} \\ - \mathscr{Q} & 1 - \mathscr{P} & 0 \\ \mathscr{R} & 0 & 0 \end{bmatrix}
        = \mathbf{L} + \mathbf{P},
    \]
    where we note $\mathbf{P} = \mbox{diag}(0, 0, \mathscr{P})$ is a positive definite matrix with respect to the $\ell^2$ inner product.
    Our strategy will be to estimate $\| \tilde{\mathbf{L}} \|_{op}$ differently depending on the size of $\mathscr{Q}(t)$. The idea is that when $\mathscr{Q}(t)$ is small, the largest matrix element of $\tilde{\mathbf{L}}$ will be $1 - \mathscr{P}$, which is bounded by $1 - \upgamma$ by Lemma~\ref{lem:ode_bounce}. 

    On the other hand, when $\mathscr{Q}(t)$ is not small, we will only have a weaker quantitative bound $\| \tilde{\mathbf{L}} \|_{op} \leq 10$; this will be mitigated using an estimate on the size of the interval for which $\mathscr{Q}$ small is not small.

    \medbreak \noindent
    \underline{Case 1: $|\mathscr{Q}(t)| \leq \frac{1}{4} (\upgamma'' - \upgamma)$:} 
    \medbreak \noindent
    Recall from the proof of Lemma~\ref{lem:ode_bounce}, particularly \eqref{eq:k_int} and \eqref{eq:k_init}, that for $t \in [t_b, t_0]$ one has
    \[
        (\mathscr{P} - 1)^2 + \mathscr{Q}^2 \leq \left( 1 - \frac{\upgamma + \upgamma''}{2} \right)^2.
    \]
    In particular, the Hilbert--Schmidt\footnote{Recall the Hilbert--Schmidt norm of a $3 \times 3$  matrix $\mathbf{A}$ is given by the $\ell^2$-norm of all its matrix elements: 
    \[ \| \mathbf{A} \|_{HS} = \left( \sum_{i, j = 1}^3 \mathbf{A}_{ij}^2 \right)^{1/2}. \]}
    norm of $\tilde{\mathbf{L}}$ is given by:
    \[
        \| \tilde{\mathbf{L}} \|_{HS}^2 = (\mathscr{P}-1)^2 + \mathscr{Q}^2 + 4 \mathscr{Q}^2 + 5 \mathscr{R}^2 \leq \left( 1 - \frac{\upgamma + \upgamma''}{2} \right)^2 + \frac{1}{4} (\upgamma'' - \upgamma)^2 + 5 \mathscr{R}^2.
    \]
    
    Without the $5 \mathscr{R}^2$ term, the right hand side is strictly less than $(1 - \upgamma)^2$. From Lemma~\ref{lem:ode_bounce}, we have $\mathscr{R}^2 \leq t^{2 \upgamma'}$, and in particular for $t < t_*$ chosen small enough one has $\| \tilde{\mathbf{L}} \|_{HS} \leq 1 - \upgamma$. Since for any $3 \times 3$ matrix $\mathbf{M}$ one has $\| \mathbf{M} \|_{op} \leq \| \mathbf{N} \|_{HS}$ one concludes that:
    \begin{equation} \label{eq:L_op_small}
        \| \tilde{\mathbf{L}}(t) \|_{op} \leq 1 - \upgamma \qquad \text{ whenever } \, |\mathscr{Q}(t)| \leq \frac{1}{4} (\upgamma'' - \upgamma).
    \end{equation}

    \medbreak \noindent
    \underline{Case 2: $|\mathscr{Q}(t)| > \frac{1}{4}(\upgamma'' - \upgamma)$:} 
    \medbreak \noindent
    If one assumes no additional smallness for $|\mathscr{Q}(t)|$ then one may only use Lemma~\ref{lem:ode_bounce} to bound the individual matrix elements of  $\tilde{\mathbf{L}}(t)$. Using Lemma~\ref{lem:ode_bounce}, we simply crudely bound each nonzero matrix element of $\tilde{\mathbf{L}}(t)$ by $4$, then proceeding via the Hilbert--Schmidt norm as above one can show $\| \tilde{\mathbf{L}} \|_{op} \leq \sqrt{80} \leq 10$.  

    As mentioned previously, this will be mitigated using control on the size of the set $B = \{ t \in [t_b, t_0] : |\mathscr{Q}(t)| > \frac{1}{4} (\upgamma'' - \upgamma) \}$, at least upon assuming sufficient smallness on $\updelta$ and $t_*$. To justify this, suppose that $\updelta < \frac{1}{64} (\upgamma'' - \upgamma)^2$ and that $\mathscr{R}^2 \leq t_*^{2 \upgamma'} \leq \frac{1}{64}(\upgamma'' - \upgamma)^2$. Then using \eqref{eq:pq_bounds} and the equation \eqref{eq:p_eq} for $t \partial_t \mathscr{P}$,
    \[
        t \partial_t \mathscr{P} \geq \frac{1}{16} (\upgamma'' - \upgamma)^2 - \mathscr{R}^2 - \updelta \geq \frac{1}{32} (\upgamma'' - \upgamma)^2
    \]
    while $t \in B$.
    Furthermore, for $t \not\in B$, from Lemma~\ref{lem:ode_bounce} one still has $t \partial_t \mathscr{P} \geq - C t^{\upgamma'}$ for some $C > 0$. But from \eqref{eq:pq_bounds}, $\mathscr{P}(r)$ is certainly bounded between $0$ and $2$. So:
    \begin{align*}
        2 \geq \int_t^{t_0} t \partial_t P (\tilde{t}) \, \frac{d\tilde{t}}{\tilde{t}}
        &= \int_{\tilde{t} \in B} t \partial_t P(\tilde{t}) \, \frac{d \tilde{t}}{\tilde{t}} + \int_{t \not\in B} t \partial_t P (\tilde{t}) \, \frac{d \tilde{t}}{\tilde{t}} \geq \mu(B) \cdot \frac{1}{32}(\upgamma'' - \upgamma)^2 - C \upgamma'^{-1} t_*^{\upgamma'}.
    \end{align*}

    Here $\mu(B)$ is the measure of the set $B$ with respect to the measure $\frac{dt}{t}$. Let us choose $t_*$ small enough so that $C \upgamma'^{-1} t_*^{\upgamma'} \leq 2$. Then collecting all this information,
    \begin{gather} \label{eq:L_op_big}
        \| \tilde{\mathbf{L}}(t) \|_{op} \leq 10, \qquad \text{ whenever } \, |\mathscr{Q}| > \frac{1}{4}(\upgamma'' - \upgamma) \\[0.5em]
        \text{where } B = \left \{ t \in [t_b, t_0] : |\mathscr{Q}(t)| > \frac{1}{4} (\upgamma'' - \upgamma) \right \} \text{ has } \mu(B) \leq 128 (\upgamma'' - \upgamma)^{-2}. \nonumber
    \end{gather}

    \medbreak 
    We now use \eqref{eq:L_op_small} and \eqref{eq:L_op_big} to complete the proof of the lemma. We introduce the further notation:
    \[
        \mathbf{x} = \begin{bmatrix} \mathscr{L} \\ \mathscr{M} \\ \mathscr{N} \end{bmatrix}, \quad
        \mathbf{e} = \begin{bmatrix} \mathscr{E}_4 \\ \mathscr{E}_5 \\ \mathscr{E}_5 \end{bmatrix}.
    \]
    So that we further rewrite the ODE system \eqref{eq:mn_matrix} as $t \partial_t \mathbf{x} = ( \tilde{\mathbf{L}} + \mathbf{P} ) \mathbf{x} + \mathbf{e}$. Using the positivity of $\mathbf{P}$, one then shows that $\| \mathbf{x} \|_{\ell^2}^2 = \mathbf{x}^T \mathbf{x}$ satisfies
    \[
        t \partial_t \| \mathbf{x} \|_{\ell^2}^2 \geq 2 \mathbf{x}^T \tilde{\mathbf{L}} \mathbf{x} + 2 \mathbf{x}^T \mathbf{e} \geq - 2 \| \tilde{\mathbf{L}} \|_{op} \, \| \mathbf{x} \|_{\ell^2}^2 - 2 \sqrt{3} \updelta t^{-1 + 2 \upgamma'} \| \mathbf{x} \|_{\ell^2}.
    \]
    Note the final inequality follows from the definition of the operator norm and using Cauchy--Schwarz on $\mathbf{x}^T \mathbf{e}$, together with the assumed bounds on the error terms $\mathscr{E}_i$. Therefore
    \begin{equation} \label{eq:x_diff}
        t \partial_t \| \mathbf{x} \|_{\ell^2} \geq - \| \tilde{\mathbf{L}} \|_{op} \cdot \| \mathbf{x} \|_{\ell^2} - \sqrt{3} \updelta t^{-1 + 2 \upgamma'}.
    \end{equation}

    The idea is now simply to apply Gr\"onwall's inequality to this differential inequality. Recalling that we always integrate in the direction of decreasing $t$, the crucial bound will be the following, which follows from \eqref{eq:L_op_small} and \eqref{eq:L_op_big}:
    \begin{align}
        \int^{t_1}_{t_2} \| \tilde{\mathbf{L}}(\tilde{t}) \|_{op} \, \frac{d \tilde{t}}{\tilde{t}}  
        &\leq \int_{t \in [t_2, t_1] \setminus B} \| \tilde{\mathbf{L}} \|_{op} \, \frac{d \tilde{t}}{\tilde{t}} + \int_{t \in [t_2, t_1] \cap B} \| \tilde{\mathbf{L}} \|_{op} \, \frac{d \tilde{t}}{\tilde{t}} \nonumber \\[0.4em]
        &\leq \int_{t_2}^{t_1} (1 - \upgamma) \, \frac{d \tilde{t}}{\tilde{t}} + \int_{t \in B} 10 \, \frac{d\tilde{t}} {\tilde{t}} \nonumber \\[0.4em]
        &\leq (1 - \upgamma) \log( \frac{t_1}{t_2} ) + 10 \mu(B) = (1 - \upgamma) \log( \frac{t_1}{t_2} ) + 1280 (\upgamma'' - \upgamma)^{-2}.  \label{eq:L_op_int}
    \end{align}

    To conclude, integrating \eqref{eq:x_diff} and inserting our initial data bounds yields the following integral inequality, where $\beta(t) = \sqrt{3} ( \upzeta t_0^{-1 + \upgamma} + \updelta (1 - 2 \upgamma')^{-1} t^{-1 + 2 \upgamma'})$:
    \begin{equation*}
        \| \mathbf{x}(t) \|_{\ell^2} \leq \int^{t_0}_t \| \tilde{\mathbf{L}}(\tilde{t}) \|_{op} \cdot \| \mathbf{x} (\tilde{t}) \|_{\ell^2} \, \frac{d\tilde{t}}{\tilde{t}} + \beta(t).
    \end{equation*}
    Thus Gr\"onwall's inequality in integral form implies that
    \[
        \| \mathbf{x}(t) \|_{\ell^2} \leq \beta(t) + \int^{t_0}_t \beta(\tilde{t}) \cdot \| \tilde{\mathbf{L}}(\tilde{t}) \|_{op} \cdot \exp( \int^{\tilde{t}}_{t} \| \tilde{\mathbf{L}}(\tilde{\tilde{t}}) \| \, \frac{d \tilde{\tilde{t}}}{\tilde{\tilde{t}}} ) \, \frac{d \tilde{t}}{\tilde{t}}.
    \]

    By \eqref{eq:L_op_int} and the fact $\| \tilde{\mathbf{L}}(t) \|_{op} \leq 10$ holds everywhere, we can bound the integrand here by:
    \[
        \beta(\tilde{t}) \cdot \| \tilde{\mathbf{L}}(\tilde{t}) \|_{op} \cdot \exp( \int^{\tilde{t}}_{t} \| \tilde{\mathbf{L}}(\tilde{\tilde{t}}) \| \, \frac{d \tilde{\tilde{t}}}{\tilde{\tilde{t}}} ) \lesssim \beta(\tilde{t}) \cdot \left( \frac{\tilde{t}}{t} \right)^{1 - \upgamma} \lesssim \upzeta \left( \frac{\tilde{t}}{t_0 t} \right)^{1 - \upgamma} + \updelta \tilde{t}^{2 \upgamma' - \upgamma} t^{-1 + \upgamma}.
    \]
    Therefore, using that $2 \upgamma' > \upgamma$, inserting this into the above yields that $\| \mathbf{x}(t) \|_{\ell^2} \lesssim t^{-1 + \upgamma}$. By the definition of $\mathbf{x}$, and since the $\ell^2$ and $\ell^{\infty}$ norms on $\mathbb{R}^3$ are uniformly equivalent, the lemma follows.
\end{proof}

\section{The stability result}

\subsection{Proof of Theorem~\ref{thm:global}}

The proof of our stability result Theorem~\ref{thm:global} follows from a boostrap argument. By local existence for the Gowdy wave map system, for initial data as given there exists some $t_b \in (0, t_0)$ such that a solution to the Gowdy system \eqref{eq:P_evol}--\eqref{eq:Q_evol} exists in the interval $t \in [t_b, t_0]$, which moreover satisfies the bootstrap assumptions \eqref{eq:bootstrap_pq}--\eqref{eq:bootstrap_dth2}.

We show that assuming this, one may then improve upon the bootstrap assumptions, for instance showing the same \eqref{eq:bootstrap_pq}--\eqref{eq:bootstrap_dth2} hold with $C_*$ replaced by say $C_*/2$. By a standard continuity argument, $t_b$ may then be any number in the interval $(0, t_0)$, and the corresponding solution thus obeys (improved) bootstrap assumptions in the whole of $t \in (0, t_0)$. We then apply the results of Section~\ref{sec:l2} and Section~\ref{sec:ode} to conclude.

\subsubsection{Improving the bootstrap assumptions}

Let $(P, Q)$ be a solution of the Gowdy symmetric system in our bootstrap region $t \in [t_b, t_0]$ with assumptions on initial data as given. Then the results of Section~\ref{sec:l2} and Section~\ref{sec:interp} all apply, in particular Proposition~\ref{prop:energy_hierarchy}, Lemma~\ref{lem:low_order_linfty}, Corollary~\ref{cor:ode} and Corollary~\ref{cor:ode2}. We now proceed in the following steps:

\medbreak \noindent
\underline{Step 1: ODE analysis on timelike curves} 
\medbreak \noindent
The first step will be to use Corollary~\ref{cor:ode}, Corollary~\ref{cor:ode2} and the results of Section~\ref{sec:ode} to provide $L^{\infty}$ bounds for the following $6$ key quantities:
\[
    - t \partial_t P, \quad e^P t \partial_{\theta} Q, \quad e^P t \partial_t Q, \quad - t \partial_t \partial_{\theta} P, \quad e^P t \partial_{\theta}^2 Q, \quad e^P t \partial_t \partial_{\theta} Q.
\]

To do set, let $\gamma: [t_b, t_0] \to (0, +\infty) \times \mathbb{S}^1$ with $\gamma(t) = (t, \theta(t))$ be a $C^1$ \emph{past-directed timelike curve}. One example is a curve of constant $\theta$ i.e.~with $\theta'(t) = 0$. In view of the metric \eqref{eq:gowdy}, the timelike character is equivalent to $|\theta'(t)| < 1$. For such a curve, $\gamma$, define
\begin{gather}
    \mathscr{P}(t) \coloneqq - t \partial_t P(\gamma(t)), \quad \mathscr{Q}(t) \coloneqq e^P t \partial_{\theta} Q(\gamma(t)), \quad \mathscr{R}(t) \coloneqq e^P t \partial_t Q(\gamma(t)),
    \label{eq:PQR_def} \\[0.6em]
    \mathscr{L}(t) \coloneqq - t \partial_t \partial_{\theta} P(\gamma(t)), \quad \mathscr{M}(t) \coloneqq \partial_{\theta}(e^P \partial_{\theta} Q) (\gamma(t)), \quad \mathscr{N}(t) \coloneqq \partial_{\theta} ( e^P t \partial_t Q) (\gamma(t)).
    \label{eq:LMN_def}
\end{gather}

The initial data assumption \eqref{eq:data_linfty} means that $(\mathscr{P}(t_0) - 1)^2 + \mathscr{Q}^2(t_0) \leq ( 1 - \upgamma '')^2 $ and the second assumption \eqref{eq:data_linfty2} means that for $\updelta \leq \upzeta t_*^{\upgamma - \upgamma'}$, one also has $\mathscr{R}(t_0) \leq \updelta t_0^{\upgamma'}$. Combining these initial data bounds with Corollary~\ref{cor:ode}, it is clear that along the curve $\gamma$, $\mathscr{P}$, $\mathscr{Q}$ and $\mathscr{R}$ satisfy the assumptions of Lemma~\ref{lem:ode_bounce}, with $a = \frac{d\theta}{dt}$. Applying this lemma yields that:
\begin{equation} \label{eq:pqr_bounds}
    (1 - \mathscr{P}(t))^2 + \mathscr{Q}^2(t) \leq (1 - \upgamma)^2, \quad \mathscr{R}(t) \leq t^{\upgamma'}.
\end{equation}

We also provide bounds for $\mathscr{L}$, $\mathscr{M}$ and $\mathscr{N}$. To do this, we use the ODEs in Corollary~\ref{cor:ode2}. With error terms bounded as in this corollary, it is evident that the quantities $\mathscr{L}$, $\mathscr{M}$ and $\mathscr{N}$ satisfy the system \eqref{eq:l_eq}--\eqref{eq:m_eq} in Lemma~\ref{lem:ode_variation}. 

We also need to verify the conditions on initial data for the ODEs. To do so, we use the initial $L^2$ bounds \eqref{eq:data_energy_boundedness_P} and \eqref{eq:data_energy_boundedness_Q}. It follows from these, Sobolev embedding in $\mathbb{S}^1$, and also \eqref{eq:data_linfty}--\eqref{eq:data_linfty2}, that one has $|\partial_{\theta} \dot{P}_D(\theta)|, |\partial_{\theta} (e^{P_D} t_0 \partial_{\theta} Q_D )(\theta)|, |\partial_{\theta} (e^{P_D} \dot{Q}_D)(\theta)| \leq C_{\mathbb{S}^1} \upzeta$, where $C_{\mathbb{S}^1}$ is a Sobolev constant that is independent of $\upzeta$ and $\upgamma$. 
Thus one has $|\mathscr{L}(t_0)|, |\mathscr{M}(t_0)|, |\mathscr{N}(t_0)| \leq C_{\mathbb{S}^1} \upzeta \leq \upzeta t_0^{-1 + \upgamma}$, so long as $t_0 \leq t_*$ is chosen small enough to absorb the $C_{\mathbb{S}^1}$. Thus applying Lemma~\ref{lem:ode_variation}, we deduce that for some constant $D > 0$ depending on $\upgamma$, $\upzeta$ etc but not $C_*$, one bounds
\begin{equation} \label{eq:lmn_bounds}
    |\mathscr{L}(t)| + |\mathscr{M}(t)| + |\mathscr{N}(t)| \leq D t^{-1 + \upgamma}. 
\end{equation}

\medbreak \noindent
\underline{Step 2: Improving the bootstrap assumptions \eqref{eq:bootstrap_pq} and \eqref{eq:bootstrap_other}} 
\medbreak \noindent
Using \eqref{eq:pqr_bounds}, and allowing $\gamma$ to vary over all timelike curves, in particular all curves with constant $\theta$-coordinate, we find that $|t \partial_t P(t, x)| \leq 2 - \upgamma$ and $|e^P t \partial_{\theta} Q(t, x)| \leq 1 - \upgamma$ for all $(t, x) \in [t_b, t_0] \times \mathbb{S}^1$. Thus choosing $C_* \geq 2(2 - \upgamma)$, one easily improves upon \eqref{eq:bootstrap_pq}. Note that since $\mathscr{P}$ cannot change sign, \eqref{eq:pqr_bounds} actually yields $- t \partial_t P(t, x) \geq \upgamma$.

Similarly using \eqref{eq:pqr_bounds}, $|e^P t \partial_t Q(t, x)| \leq t^{\upgamma'}$, so the first part of \eqref{eq:bootstrap_other} is improved for $C_* \geq 2$.
For the second part of \eqref{eq:bootstrap_other}, we simply use:
\[
    e^{-P}(t, x) \leq e^{-P}(t_0, \theta) \cdot \exp ( - \int^{t_0}_t (- t \partial_t P)(\tilde{t}, \theta) \frac{d \tilde{t}}{\tilde{t}} ) \leq e^{-P_D}(\theta) \exp( \upgamma \log( \frac{t}{t_0} )),
\]
where in the second step we used that $- t \partial_t P(t, x) \geq \upgamma$. Using the initial data assumption \eqref{eq:data_linfty2}, we thus have $e^P (t, x) \leq \upzeta t^{\upgamma}$. Since $\upgamma' < \upgamma$ and $t \leq 1$, choosing $C_* \geq 2 \upzeta$ also improves the second part of \eqref{eq:bootstrap_other}.

\medbreak \noindent
\underline{Step 3: Improving the bootstrap assumptions \eqref{eq:bootstrap_dth} and \eqref{eq:bootstrap_dth2}}
\medbreak \noindent
We shall improve these using \eqref{eq:lmn_bounds}. Allowing $\gamma$ to vary over all timelike curves, for all $(t, \theta) \in [t_b, t_0] \times \mathbb{S}^1$:
\begin{equation} \label{eq:lmn_bounds_2}
    |t \partial_t \partial_{\theta} P(t, \theta)| \leq D t^{-(1 - \upgamma)}, \quad |\partial_{\theta}(e^P t \partial_{\theta} Q)(t, \theta)| \leq D t^{- (1 - \upgamma)}, \quad |\partial_{\theta}( e^P t \partial_t Q)(t, \theta)| \leq D t^{ - (1 - \upgamma)}.
\end{equation}
We now integrate the first inequality in \eqref{eq:lmn_bounds_2}. Using \eqref{eq:data_energy_boundedness_P} and Sobolev embedding to bound $|\partial_{\theta} P(t_0, \theta)|$, we deduce that $|\partial_{\theta} P(t, \theta)| \leq D t^{-(1 - \upgamma)}$ also, where $D$ is modified appropriately.

We now expand the second and third inequalities in \eqref{eq:lmn_bounds_2}, yielding that
\[
    |e^P t \partial_{\theta}^2 Q| \leq D t^{- (1 - \upgamma)} + |\partial_{\theta} P| \cdot |e^P t \partial_{\theta} Q|, \quad |e^P t \partial_t \partial_{\theta} Q| \leq D t^{- (1 - \upgamma)} + | \partial_{\theta} P | \cdot |e^P t \partial_t Q|.
\]
Now using the above bound for $|\partial_{\theta} P|$ and the bounds from Step 2, by modifying $D$ appropriately we also have $|e^P t \partial_{\theta}^2 Q|, |e^P t \partial_t \partial_{\theta} Q| \leq D t^{- (1 - \upgamma)}$. Choosing $C_* \geq 2D$, we thereby improve the remaining bootstrap assumptions.

\subsubsection{Completion of the proof}

To complete the proof, it remains to verify the bounds \eqref{eq:global_linfty} and \eqref{eq:global_energy}, and finally prove the statements contained in the final paragraph of Theorem~\ref{thm:global}. The former two bounds follow immediately from \eqref{eq:pqr_bounds} and Proposition~\ref{prop:energy_hierarchy} respectively. 

To show that $-t \partial_t P(t, \theta)$ converges pointwise as $t \to 0$, note that this is equivalent to showing that $\mathscr{P}(t)$, as defined in \eqref{eq:PQR_def}, has a limit as $t \to 0$, for curves $\gamma$ of constant $\theta$-coordinate. From Step 1, we know that $\mathscr{P}$ solves an ODE as in Lemma~\ref{lem:ode_bounce}, namely
\[
    t \partial_t \mathscr{P} = \mathscr{Q}^2 - \mathscr{R}^2 + \mathscr{E}_1,
\]
where $|\mathscr{E}_1| \lesssim t^{\upgamma'}$. By \eqref{eq:pqr_bounds}, one also has $\mathscr{R}^2 \lesssim t^{2 \upgamma'}$. We can integrate the above and write:
\[
    \mathscr{P}(t) - \mathscr{P}(t_0) + \int^{t_0}_{t} (\mathscr{E}_1 - \mathscr{R}^2) \frac{d\tilde{t}}{\tilde{t}} = - \int^{t_0}_{t} \mathscr{Q}^2 \frac{d\tilde{t}}{\tilde{t}}.
\]

By the lower bound for $\mathscr{P}(t)$ in \eqref{eq:pqr_bounds} and the aforementioned bounds for $\mathscr{E}_1$ and $\mathscr{R}$ implying a lower bound for the LHS, it is clear that the right hand side attains a limit as $t \to 0$. Thus the quantity $\mathscr{P}(t)$ also attains a limit as $t \to 0$.

Finally, we show that $Q(t, \theta)$ and $\partial_{\theta} Q (t, \theta)$ converge \emph{uniformly} as $t \to 0$, where for the latter statement we require $\upgamma > \frac{1}{3}$. Starting with $Q(t, \theta)$, we multiply \eqref{eq:Q_evol} by $t^{-\upsigma}$, for some $\upsigma > 0$ to be determined:
\begin{equation} \label{eq:tdtq}
    t \partial_t (t^{-\upsigma} t \partial_t Q) = (- 2 t \partial_t P - \upsigma) \cdot t^{- \upsigma} t \partial_t Q + t^{2 - \upsigma} \partial_{\theta}^2 Q + 2 t^{2 - \upsigma} \partial_{\theta} P \partial_{\theta} Q.
\end{equation}
We shall choose $\upsigma < \upgamma'$. Therefore, by \eqref{eq:global_linfty}, we have $- 2 \partial_t P - \upsigma > 0$, while by Lemma~\ref{lem:low_order_linfty} and Lemma~\ref{lem:low_order_linfty2}, we can bound the final two terms on the right hand side by
\[
    |t^{2 - \upsigma} \partial_{\theta}^2 Q| + |t^{2 - \upsigma} \partial_{\theta} P \partial_{\theta} Q| \lesssim t^{2 - \upsigma - 2(1 - \upgamma')} = t^{2 \upgamma' - \upsigma}.
\]

Since $2 \upgamma' - \upsigma \geq \upgamma' > 0$, this is therefore integrable with respect to $\frac{dt}{t}$ as $t \to 0$. Therefore integrating \eqref{eq:tdtq} yields that $t^{- \upsigma} t \partial_t Q$ is uniformly bounded in the region $0 < t \leq t_0$. This means that:
\[
    Q(t, \theta) = Q(t_0, \theta) - \int_t^{t_0} (t^{-\upsigma} t \partial_t Q(\tilde{t}, \theta)) \tilde{t}^{-1 + \upsigma} \, d \tilde{t}.
\]
indeed converges uniformly as $t \to 0$ to some $q(\theta)$, and $|Q(t, \theta) - q(\theta)| \lesssim t^{\upsigma}$.

To prove a similar statement for $\partial_{\theta} Q$, we need to differentiate \eqref{eq:tdtq} in $\theta$,
\begin{multline} \label{eq:tdtqt}
    t \partial_t (t^{-\upsigma} t \partial_t \partial_{\theta} Q) = 
    (- 2 t \partial_t P - \upsigma) \cdot t^{- \upsigma} t \partial_t \partial_{\theta} Q - 2 t \partial_t \partial_{\theta} P \cdot t^{-\upsigma} t \partial_t Q
    \\[0.2pt]
    + t^{2 - \upsigma} \partial_{\theta}^3 Q + 2 t^{2 - \upsigma} \partial_{\theta} P \partial_{\theta}^2 Q + 2 t^{2 - \upsigma} \partial_{\theta}^2 P \partial_{\theta} Q.
\end{multline}
To bound the last term on the first line, we use Step 2. In other words, we use the estimates \eqref{eq:bootstrap_other}, which together yield $t \partial_t Q \lesssim t^{2 \upgamma'}$. Therefore, using also Lemma~\ref{lem:low_order_linfty} we get that:
\[
    |t \partial_t \partial_{\theta} P \cdot t^{- \upsigma} t \partial_t Q| \lesssim t^{-(1 - \upgamma')} \cdot t^{2 \upgamma' - \upsigma} = t^{3 \upgamma' - 1 - \upsigma}.
\]
For the second line of \eqref{eq:tdtqt}, we again combine Lemma~\ref{lem:low_order_linfty} and Lemma~\ref{lem:low_order_linfty2}, to get
\[
    |t^{2 - \upsigma} \partial_{\theta}^3 Q| + |t^{2 - \upsigma} \partial_{\theta} P \partial_{\theta}^2 Q| + |t^{2 - \upsigma} \partial_{\theta}^2 P \partial_{\theta} Q| \lesssim t^{2 - \upsigma - 3(1- \upgamma')} = t^{3\upgamma' - 1 - \upsigma}.
\]

For $\upgamma > \frac{1}{3}$, we may choose $\upgamma' > \frac{1}{3}$ also and then choose $\upsigma < 3\upgamma' - 1$. For this choice, $- 2 t \partial_t P - \upsigma > 2 \upgamma' - (3 \upgamma' - 1) = 1 - \upgamma' \geq 0$, so the first term on the right hand side of \eqref{eq:tdtqt} can be ignored, and by integrability of $t^{3 \upgamma' - 1 - \upsigma}$ with respect to $\frac{dt}{t}$ one yields that $t^{-\upsigma} t \partial_t \partial_{\theta} Q$ is uniformly bounded in the region $0 < t \leq t_0$. One now concludes as in the previous case, and one yields:
\begin{equation} \label{eq:q_theta_limit}
    |\partial_{\theta} Q(t, \theta) - \partial_{\theta} q (\theta) | \lesssim_{\upsigma} t^{\upsigma}, \qquad \text{ for any } \upsigma < 3 \upgamma' - 1.
\end{equation}


\section{BKL bounces}

\subsection{Proof of Theorem~\ref{thm:bounce}}

Under the assumptions of Theorem~\ref{thm:bounce}, we may apply the stability result Theorem~\ref{thm:global}; in particular Corollary~\ref{cor:ode} applies, and the ODEs \eqref{eq:bounce_ode} follow exactly as in the proof of Theorem~\ref{thm:bounce} -- note that the term involving $\mathscr{R}$ can now be ignored since we have $\mathscr{R}(t) \lesssim t^{\upgamma'}$ from \eqref{eq:bootstrap_other}.

It remains to prove the various convergence results stated in Theorem~\ref{thm:bounce}. The convergence of $\mathscr{P}_{\gamma}(t)$ to some $\mathscr{P}_{\gamma, \infty} = V(\theta_0)$ follows in the same manner to the pointwise convergence of $-t \partial_t P(t, \theta)$ in the proof of Theorem~\ref{thm:global}. To show the convergence of $\mathscr{Q}_{\gamma}(t)$ to $0$, we first use that from the first ODE in \eqref{eq:bounce_ode} and the convergence of $\mathscr{P}_{\gamma}$, the integral
\begin{equation} \label{eq:Q_int}
    \int_0^{t_0} \mathscr{Q}_{\upgamma}^2(\tilde{t}) \, \frac{d \tilde{t}}{\tilde{t}}
\end{equation}
is finite, or in other words $\mathscr{Q}_{\upgamma}^2(t)$ converges to $0$ in an averaged sense. In particular there is a sequence $\{t_k\}$ with $t_k \to 0$ such that $\mathscr{Q}_{\gamma}^2(t_k) \to 0$ as $k \to \infty$. To upgrade this sequential convergence to convergence of $\mathscr{Q}_{\gamma}(t)$, we use the second equation in \eqref{eq:bounce_ode}, or more precisely the ODE $t \partial_t \mathscr{Q}_{\upgamma}^2 = 2 \mathscr{Q}_{\gamma}^2 (1 - \mathscr{P}_{\gamma}) + 2 \mathscr{E}_{\mathscr{Q}} \mathscr{Q}_{\upgamma}$.

Integrating this gives that for any $0 < t \leq t_k \leq t_0$, we have:
\[
    |\mathscr{Q}_{\gamma}^2(t) - \mathscr{Q}_{\gamma}^2(t_k)| \leq C \int^{t_k}_0 \mathscr{Q}_{\gamma}^2(\tilde{t}) \frac{d \tilde{t}}{\tilde{t}} + C \int^{t_k}_0 \tilde{t}^{\upgamma'} \frac{d\tilde{t}}{\tilde{t}}.
\]
By the finiteness of the integral \eqref{eq:Q_int}, the right hand side is finite, and in fact converges to $0$ as $k \to \infty$. Since we already know $\mathscr{Q}_{\gamma}^2(t_k) \to 0$, this implies that $\mathscr{Q}_{\gamma}^2(t)$, and therefore also $\mathscr{Q}_{\gamma}(t)$, converges to $0$ as $t \to 0$.

Moving onto (i), it will be necessary to use the conserved quantity $\mathscr{K}$ encountered in the proof of Lemma~\ref{lem:ode_bounce}. Recall from there that defining $\mathscr{K}_{\gamma} = (\mathscr{P}_{\gamma} - 1)^2 + \mathscr{Q}_{\gamma}^2$, one can show
\[
    \left| t \frac{d}{dt} \mathscr{K}_{\gamma} \right| \lesssim t^{\upgamma'}.
\]
Using the convergence of $\mathscr{P}_{\gamma}$ and $\mathscr{Q}_{\gamma}$, it holds that $\mathscr{K}_{\gamma}(t) \to \mathscr{K}_{\gamma, \infty} \coloneqq (\mathscr{P}_{\gamma, \infty} - 1)^2$ as $t \to 0$, and moreover from the above that $|\mathscr{K}_{\gamma}(t) - \mathscr{K}_{\gamma, \infty}| \lesssim t^{\upgamma'}$. Inserting $t = t_0$, one has:
\begin{equation} \label{eq:P_conv_aux}
    | ( \mathscr{P}_{\gamma}(t_0) - 1)^2 - (\mathscr{P}_{\gamma, \infty} - 1)^2 | \lesssim t_0^{\upgamma'} + \mathscr{Q}^2_{\gamma}(t_0).
\end{equation}

In (i), we assumed that $\partial_{\theta} Q(\gamma(t))$ did not converge to $0$ as $t \to 0$. However, since $\mathscr{Q}_{\gamma}(t) = e^P t \partial_{\theta} Q(\gamma(t))$ does converge to $0$, this implies that $e^{P(\gamma(t))} t \to 0$ as $t \to 0$. Now simply note that
\[
    t \frac{d}{dt} \log( e^{P(\gamma(t))} t) = 1 - \mathscr{P}_{\gamma}(t) \to 1 - \mathscr{P}_{\gamma,\infty} \quad \text{ as } t \to 0,
\]
and thus in order for $e^{P(\gamma(t))} t \to 0$ we must have $\mathscr{P}_{\gamma, \infty} \leq 1$. Combining this with \eqref{eq:P_conv_aux} yields the estimate \eqref{eq:P_limit}. (For instance, one uses that if $|x^2 - y^2| \leq Z$ with $x, y \geq 0$ then $|x - y| \leq Z^{1/2}$.)

For (ii), if $\upgamma > \frac{1}{2}$ and $\partial_{\theta} Q(\gamma(t))$ converges to $0$, it will be helpful to use \eqref{eq:q_theta_limit} from the proof of Theorem~\ref{thm:global}, or rather its generalization to
\begin{equation*} 
    |\partial_{\theta} Q(\gamma(t)) - \partial_{\theta} q (\theta_0) | \lesssim_{\upsigma} t^{\upsigma}, \qquad \text{ for any } \upsigma < 3 \upgamma' - 1.
\end{equation*}
Note that this generalization, where $\gamma$ is allowed to be any timelike curve rather than only a constant $\theta$-curve, is proved easily using previous methods. In (ii), we may insert $\partial_{\theta} q(\theta_0) = 0$, and therefore $|\partial_{\theta} Q(\gamma(t))| \lesssim t^{\upsigma}$. 

Furthermore, from $-t \partial_t P \leq 2 - \upgamma$ it holds that $e^P \leq t^{-2 + \upgamma}$, therefore one has $|e^P t \partial_{\theta} Q(\gamma(t))| \lesssim t^{- 1 + \upgamma + \upsigma}$. But by the definition of $\mathscr{Q}_{\gamma}(t)$, this means that $|\mathscr{Q}_{\gamma}(t)| \lesssim t^{- 1 + \upgamma + \upsigma}$. Note that so long as $\upgamma > \frac{1}{2}$, $\upgamma' < \upgamma$ and $\upsigma < 3 \upgamma' - 1$ may be chosen such that $- 1 + \upgamma + \upsigma > 0$. In the case of (ii), we therefore have from \eqref{eq:bounce_ode} that
\[
    |t \partial_t \mathscr{P}_{\gamma}(t) | \lesssim t^{2(- 1 + \upgamma + \upsigma)} + t^{\upgamma'}.
\]
For convenience, let use choose $\sigma = \upgamma' - \upgamma + \frac{1}{2}$, where $\upgamma > \upgamma' > \frac{3}{2} - \upgamma$. Then $t^{2(-1 + \upgamma + \upsigma)} = t^{2 \upgamma' - 1}$, and integrating the above immediately yields \eqref{eq:P_limit2}.
\qed

\subsection{Proof of Corollary~\ref{cor:bounce}}

Lastly, we shall apply our Theorems~\ref{thm:global} and \ref{thm:bounce} to prove the stability / instability corollary. Let $P(t, \theta)$ be as stated. By Theorem~\ref{thm:asymp_smooth}, for all $k \in \mathbb{N}$ the convergence $- t \partial_t P(t, \theta) \to V(\theta)$ holds in the $C^k$ norm.

Since we are assuming $0 < V(\theta) < 2$, since $\theta \in \mathbb{S}^1$ it holds that there exists $\upgamma \in (0, 1)$ such that $\upgamma < V(\theta) < 2 - \upgamma$. In fact, by the above convergence we can find $\upgamma, \upgamma', \upgamma''$ and $t_0 > 0$ such that for $N$ chosen (depending on $\upgamma$) as in Theorem~\ref{thm:global}, we have that for $0 < t \leq t_0$ and all $\theta \in \mathbb{S}^1$:
\[
    0 < \upgamma' < \upgamma < \upgamma'' < - t \partial_t P (t, \theta) < 2 - \upgamma'' < 2 - \upgamma < 2 - \upgamma' < 2,
\]
and there exists some $\upzeta > 0$ so that for all $0 < t \leq t_0$:
\[
    \mathcal{E}(t) = \frac{1}{2} \sum_{K = 0}^N \int_{\mathbb{S}^1} \left( (t \partial_t \partial_{\theta}^K P)^2 + t^2 (\partial_{\theta}^{K+1} P)^2 + t^{2 \upgamma} (\partial_{\theta}^K P)^2 \right) \, d\theta \leq \frac{\upzeta}{2}.
\]
(This follows because in fact $\mathcal{E}(t) \to \frac{1}{2} \sum_{K=0}^N \int (\partial_{\theta}^K V)^2 \, d\theta$ as $t \to 0$.)

We note that $t_0$ above may differ from the initial data time $t_1 > 0$ in the statement of Corollary~\ref{cor:bounce}. This is mitigated a standard \emph{Cauchy stability} argument, for any $\varepsilon_0 > 0$ there exists $\varepsilon > 0$ such that a perturbation of size $\varepsilon$ at $t = t_1$  implies a perturbation of size $\varepsilon_0$ at $t = t_0$ i.e.~$\| (\tilde{P}_D, \tilde{Q}_D, \tilde{\dot{P}}_D, \tilde{\dot{Q}}_D) - (P_D, 0, \dot{P}_D, 0) \|_{(H^{N+1})^2 \times (H^N)^2} \leq \varepsilon$ implies $\| (\tilde{P}(t_0, \theta), \tilde{Q}(t_0, \theta), t \partial_t \tilde{P}(t_0, \theta), t \partial_t \tilde{Q}(t_0, \theta) ) - (P(t_0, \theta), 0, t \partial_t P(t_0, \theta), 0)\|_{(H^{N+1})^2 \times (H^N)^2 } \leq \varepsilon_0$. Due to this we may instead consider perturbations of initial data as perturbations at time $t = t_0$.

Thereby by choosing $\varepsilon$ (and thus $\varepsilon_0$) small enough one can guarantee that
\[
    (1 + t \partial_t \tilde{P})^2 (t_0, \theta) + (e^{\tilde{P}} t_0 \partial_{\theta} \tilde{P})^2 (t_0, \theta) \leq (1 - \upgamma'')^2, \quad |e^{\tilde{P}} t \partial_t \tilde{Q}(t_0, \theta)| \leq \upzeta t_0^{\upgamma}, \quad e^{\tilde{P}}(t_0, \theta) \leq \upzeta t_0^{\upgamma}
\]
as well as the energy bound
\[
    \frac{1}{2} \sum_{K=0}^N\int_{\mathbb{S}^1} \left( (t \partial_t \partial_{\theta}^K \tilde{P})^2 + t^2 (\partial_{\theta}^{K+1} \tilde{P})^2 + t^{2 \upgamma} (\partial_{\theta}^K \tilde{P})^2 + e^{2\tilde{P}} (t \partial_t \partial_{\theta}^K \tilde{Q})^2 + e^{2 \tilde{P}} (t \partial_{\theta}^{K+1} \tilde{Q})^2 \right) d \theta \leq \upzeta.
\]
That is, the perturbed data at $t = t_0$ satisfies the assumptions \eqref{eq:data_linfty}--\eqref{eq:data_energy_boundedness_Q}. Moreover, $t_0$ can be chosen small enough so that $0 < t_0 \leq t_* = t_*(\upgamma, \upzeta)$, so that one may apply Theorems~\ref{thm:global} and \ref{thm:bounce}.

The remainder of the argument is then a direct application of these theorems. Applying Theorem~\ref{thm:global} to the perturbed data, it is clear that $-t \partial_t \tilde{P}(t, \theta)$ converges to some $\tilde{V}(\theta)$ pointwise, with $\upgamma \leq \tilde{V}(\theta) \leq 2 - \upgamma$. 

In the setting where $\frac{1}{2} < V(\theta) < \frac{3}{2}$, we are allowed to in fact choose $\upgamma > \upgamma' > \frac{1}{2}$. By the final statement of Theorem~\ref{thm:global}, $\partial_{\theta} \tilde{Q} \to \partial_{\theta} \tilde{q}(\theta)$ uniformly as claimed. We next apply Theorem~\ref{thm:bounce}. In the case that $\partial_{\theta} \tilde{q}(\theta_0) \neq 0$, Theorem~\ref{thm:bounce}(i) implies that for $\gamma$ the timelike curve with constant $\theta = \theta_0$,
\begin{align*}
    \tilde{V}(\theta_0) 
    &= \tilde{\mathscr{P}}_{\gamma, \infty} = \min \{ - t \partial_t \tilde{P}(t_0, \theta_0), 2 + t \partial_t \tilde{P}(t_0, \theta_0) \} + O(t_0^{\upgamma'/2}) + O( e^{\tilde{P}} t_0 \partial_{\theta} \tilde{Q}(t_0, \theta_0)) \\[0.3em]
    &= \min \{ - t \partial_t P(t_0, \theta_0), 2 + t \partial_t P(t_0, \theta_0)\} + O(t_0^{\upgamma'/2}) + O(\varepsilon_0) = \min \{ V(\theta_0), 2 - V(\theta_0) \} + O(t_0^{\upgamma'/2}) + O(\varepsilon_0).
\end{align*}
Since we have license to choose $t_0$ and $\varepsilon_0$ small depending on $\tilde{\varepsilon}$, Corollary~\ref{cor:bounce}(i) follows.

Similarly, in the case that $\partial_{\theta} \tilde{q}(\theta_0) = 0$, Theorem~\ref{thm:bounce}(ii)implies that for the same timelike curve $\gamma$,
\begin{align*}
    \tilde{V}(\theta_0) 
    &= \tilde{\mathscr{P}}_{\gamma, \infty} = - t \partial_t \tilde{P}(t_0, \theta_0) + O(t_0^{2 \upgamma' - 1}) \\[0.3em]
    &= - t \partial_t P(t_0, \theta_0) + O(t_0^{2 \upgamma' - 1}) + O(\varepsilon_0) = V(\theta_0) + O(t_0^{2 \upgamma' - 1}) + O(\varepsilon_0).
\end{align*}
Again choosing $t_0$ and $\varepsilon_0$ appropriately small yields Corollary~\ref{cor:bounce}(ii). \qed

\bibliography{bibliography_master.bib}

\begin{thebibliography}{33}
\providecommand{\natexlab}[1]{#1}
\providecommand{\url}[1]{\texttt{#1}}
\expandafter\ifx\csname urlstyle\endcsname\relax
  \providecommand{\doi}[1]{doi: #1}\else
  \providecommand{\doi}{doi: \begingroup \urlstyle{rm}\Url}\fi

\bibitem[B{\'e}guin and Dutilleul(2023)]{BeguinDutilleul}
F.~B{\'e}guin and T.~Dutilleul.
\newblock Chaotic dynamics of spatially homogeneous spacetimes.
\newblock \emph{Comm. Math. Phys.}, 399\penalty0 (2):\penalty0 737--927, 2023.

\bibitem[Belinski and Khalatnikov(1976)]{bk77}
V.~Belinski and I.~Khalatnikov.
\newblock On the influence of the spinor and electromagnetic fields on the
  cosmological singularity.
\newblock \emph{Rend. Sem. Mat. Univ. e Politec. Torino}, 35:\penalty0
  159--180, 1976.

\bibitem[Belinski et~al.(1971)Belinski, Khalatnikov, and Lifshitz]{bkl71}
V.~Belinski, I.~Khalatnikov, and E.~Lifshitz.
\newblock Oscillatory approach to a singular point in relativistic cosmology.
\newblock \emph{Sov. Phys. Usp.}, 13\penalty0 (6):\penalty0 745--765, 1971.

\bibitem[Belinski et~al.(1982)Belinski, Khalatnikov, and Lifshitz]{bkl82}
V.~Belinski, I.~Khalatnikov, and E.~Lifshitz.
\newblock A general solution of the {E}instein equations with a time
  singularity.
\newblock \emph{Adv. Phys.}, 31\penalty0 (6):\penalty0 639--667, 1982.

\bibitem[Berger et~al.(1997)Berger, Chru{\'s}ciel, Isenberg, and
  Moncrief]{T2Symmetry}
B.~Berger, P.~Chru{\'s}ciel, J.~Isenberg, and V.~Moncrief.
\newblock Global foliations of vacuum spacetimes with {$T^2$} isometry.
\newblock \emph{Ann. Phys.}, 260\penalty0 (1):\penalty0 117--148, 1997.

\bibitem[Beyer and Oliynyk(2024)]{BeyerOliynyk}
F.~Beyer and T.~Oliynyk.
\newblock Localized {B}ig {B}ang stability for the {E}instein-scalar field
  equations.
\newblock \emph{Arch Ration Mech Anal}, 248\penalty0 (1):\penalty0 3, 2024.

\bibitem[Chru{\'s}ciel(1990)]{ChruscielU1U1}
P.~Chru{\'s}ciel.
\newblock On space-times with {$U(1) \times U(1)$} symmetric compact {C}auchy
  surfaces.
\newblock \emph{Ann. Phys.}, 202\penalty0 (1):\penalty0 100--150, 1990.

\bibitem[Chrusciel et~al.(1990)Chrusciel, Isenberg, and
  Moncrief]{SCC_PolarizedGowdy}
P.~Chrusciel, J.~Isenberg, and V.~Moncrief.
\newblock Strong cosmic censorship in polarised {G}owdy spacetimes.
\newblock \emph{Class. Quantum Grav.}, 7\penalty0 (10):\penalty0 1671--1680,
  1990.

\bibitem[Dafermos and Rendall(2005)]{DafRendallLetter}
M.~Dafermos and A.~Rendall.
\newblock Inextendibility of expanding cosmological models with symmetry.
\newblock \emph{Class. Quantum Grav.}, 22\penalty0 (23):\penalty0 L143--L147,
  2005.

\bibitem[Fajman and Urban(2022)]{FajmanUrban}
D.~Fajman and L.~Urban.
\newblock Cosmic censorship near {FLRW} spacetimes with negative spatial
  curvature, 2022.
\newblock Preprint available at \url{https://arxiv.org/abs/2211.08052}.

\bibitem[Fournodavlos and Luk(2023)]{FournodavlosLuk}
G.~Fournodavlos and J.~Luk.
\newblock Asymptotically {K}asner-like singularities.
\newblock \emph{Amer. J. Math.}, 145\penalty0 (4):\penalty0 1182--1272, 2023.

\bibitem[Fournodavlos et~al.(2023)Fournodavlos, Rodnianski, and
  Speck]{FournodavlosRodnianskiSpeck}
G.~Fournodavlos, I.~Rodnianski, and J.~Speck.
\newblock Stable {B}ig {B}ang formation for {E}instein's equations: The
  complete sub-critical regime.
\newblock \emph{J. Amer. Math. Soc.}, 36\penalty0 (3):\penalty0 827--916, 2023.

\bibitem[Gowdy(1974)]{Gowdy}
R.~Gowdy.
\newblock Vacuum spacetimes with two-parameter spacelike isometry groups and
  compact invariant hypersurfaces: {T}opologies and boundary conditions.
\newblock \emph{Ann. Phys.}, 83\penalty0 (1):\penalty0 203--241, 1974.

\bibitem[Groeniger et~al.(2023)Groeniger, Petersen, and
  Ringstr{\"o}m]{groeniger2023formation}
H.~Groeniger, O.~Petersen, and H.~Ringstr{\"o}m.
\newblock Formation of quiescent big bang singularities, 2023.
\newblock Preprint available at \url{https://arxiv.org/abs/2309.11370}.

\bibitem[Hawking(1967)]{hawking67}
S.~Hawking.
\newblock The occurrence of singularities in cosmology. {III}. {C}ausality and
  singularities.
\newblock \emph{Proc. R. Soc. A: Math. Phys. Eng. Sci.}, 300\penalty0
  (1461):\penalty0 187--201, 1967.

\bibitem[Isenberg and Moncrief(1990)]{IsenbergMoncriefGowdy}
J.~Isenberg and V.~Moncrief.
\newblock Asymptotic behavior of the gravitational field and the nature of
  singularities in gowdy spacetimes.
\newblock \emph{Ann. Phys.}, 199\penalty0 (1):\penalty0 84--122, 1990.

\bibitem[Kasner(2008)]{kasner}
E.~Kasner.
\newblock Geometrical theorems on {E}instein's cosmological equations.
\newblock \emph{Gen. Relativ. Gravit.}, 40\penalty0 (4):\penalty0 865--876,
  2008.

\bibitem[Khalatnikov and Lifshitz(1963)]{kl63}
I.~Khalatnikov and E.~Lifshitz.
\newblock Investigations in relativistic cosmology.
\newblock \emph{Adv. Phys.}, 12\penalty0 (46):\penalty0 185--249, 1963.

\bibitem[Li(2024)]{MeSurfaceSymPaper}
W.~Li.
\newblock {BKL} bounces outside homogeneity {I}: {E}instein--{M}axwell--scalar
  field in surface symmetry.
\newblock In preparation, 2024.

\bibitem[Li and Van~de Moortel(2023)]{MeVdM}
W.~Li and M.~Van~de Moortel.
\newblock Kasner inversions and fluctuating collapse inside hairy black holes
  with charged matter, 2023.
\newblock Preprint available at \url{https://arxiv.org/abs/2302.00046}.

\bibitem[Liebscher et~al.(2013)Liebscher, Rendall, and
  Tchapnda]{LiebscherRendallTchapnda}
S.~Liebscher, A.~Rendall, and S.~B. Tchapnda.
\newblock Oscillatory singularities in {B}ianchi models with magnetic fields.
\newblock \emph{Ann. Henri Poincar\'{e}}, 14\penalty0 (5):\penalty0 1043--1075,
  2013.

\bibitem[Nirenberg(1959)]{NirenbergElliptic}
L.~Nirenberg.
\newblock On elliptic partial differential equations.
\newblock \emph{Ann. Scuola. Norm.-Sci.}, 13\penalty0 (2):\penalty0 115--162,
  1959.

\bibitem[Penrose(1965)]{penrose}
R.~Penrose.
\newblock Gravitational collapse and space-time singularities.
\newblock \emph{Phys. Rev. Lett.}, 14:\penalty0 57--59, 1965.

\bibitem[Rendall and Weaver(2001)]{RendallWeaverSpikes}
A.~Rendall and M.~Weaver.
\newblock Manufacture of {G}owdy spacetimes with spikes.
\newblock \emph{Class. Quantum Grav.}, 18\penalty0 (15):\penalty0 2959--2975,
  2001.

\bibitem[Ringstr{\"o}m(2001)]{RingstromBianchi}
H.~Ringstr{\"o}m.
\newblock The {B}ianchi {IX} attractor.
\newblock \emph{Ann. Henri Poincar\'{e}}, 2\penalty0 (3):\penalty0 405--500,
  2001.

\bibitem[Ringstr{\"o}m(2006)]{RingstromGowdySCC1}
H.~Ringstr{\"o}m.
\newblock Existence of an asymptotic velocity and implications for the
  asymptotic behavior in the direction of the singularity in {$T^3$}-{G}owdy.
\newblock \emph{Comm. Pure Appl. Math.}, 59\penalty0 (7):\penalty0 977--1041,
  2006.

\bibitem[Ringstr{\"o}m(2009)]{RingstromGowdySCC2}
H.~Ringstr{\"o}m.
\newblock Strong cosmic censorship in {$\mathbb{T}^3$}-{G}owdy spacetimes.
\newblock \emph{Ann. Math.}, 170\penalty0 (3):\penalty0 1181--1240, 2009.

\bibitem[Ringstr{\"o}m(2010)]{RingstromGowdyReview}
H.~Ringstr{\"o}m.
\newblock Cosmic {C}ensorship for {G}owdy {S}pacetimes.
\newblock \emph{Living Rev. Relativity}, 13\penalty0 (1):\penalty0 2, 2010.
\newblock Cited 05/20/24.

\bibitem[Rodnianski and Speck(2018)]{RodnianskiSpeck1}
I.~Rodnianski and J.~Speck.
\newblock {S}table {B}ig {B}ang formation in near-{FLRW} solutions to the
  {E}instein-scalar field and {E}instein-stiff fluid systems.
\newblock \emph{Sel. Math.}, 24\penalty0 (5):\penalty0 4293--4459, 2018.

\bibitem[Rodnianski and Speck(2022)]{RodnianskiSpeck2}
I.~Rodnianski and J.~Speck.
\newblock On the nature of {H}awking's incompleteness for the {E}instein-vacuum
  equations: {T}he regime of moderately spatially anisotropic initial data.
\newblock \emph{J. Eur. Math. Soc.}, 24\penalty0 (1):\penalty0 167--263, 2022.

\bibitem[Speck(2018)]{SpeckS3}
J.~Speck.
\newblock The maximal development of near-{FLRW} data for the {E}instein-scalar
  field system with spatial topology {$\mathbb{S}^3$}.
\newblock \emph{Comm. Math. Phys.}, 364\penalty0 (3):\penalty0 879--979, 2018.

\bibitem[Tao(2006)]{TaoNonlinearDispersive}
T.~Tao.
\newblock \emph{Nonlinear Dispersive Equations. Local and Global Analysis}.
\newblock Number 106 in Regional Conference Series in Mathematics. American
  Mathematical Society, 2006.

\bibitem[Weaver(2000)]{Weaver_bianchi}
M.~Weaver.
\newblock Dynamics of magnetic {B}ianchi {VI${}_0$} cosmologies.
\newblock \emph{Class. Quantum Grav.}, 17\penalty0 (2):\penalty0 421--434,
  2000.

\end{thebibliography}
\bibliographystyle{abbrvnat_mod}

\end{document}